\documentclass{article}

\usepackage{geometry}
\usepackage{amsmath}
\usepackage{amssymb}
\usepackage{amsthm}
\usepackage{graphicx}
\usepackage{subcaption}
\usepackage[round]{natbib}
\usepackage{hyperref}
\hypersetup{
	colorlinks=true,
	linkcolor=blue,
	citecolor=blue,
}
\usepackage{xcolor}
\usepackage{enumerate}
\usepackage{accents}
\usepackage{booktabs}

\theoremstyle{plain}
\newtheorem{definition}{Definition}
\newtheorem{example}{Example}

\newtheorem{proposition}{Proposition}
\newtheorem{corollary}{Corollary}
\newtheorem{lemma}{Lemma}
\newtheorem{lemmaappendix}{Lemma}[section]
\newtheorem{remark}{Remark}

\title{
	Causal Non-causal State Space Models and the Modelling of Financial Bubbles
}
\author{
	\normalsize{Frederik Bjerg Krabbe}\footnote{I am grateful to Jean-Michel Zakoïan, Leopoldo Catania, Mathias Barding, Tom Engsted, Martin Møller Andreasen, and Christian Gourieroux, seminar participants at Aarhus University, and conference participants at the 2025 Virtual Workshop for Junior Researchers in Time Series and the 2025 Workshop on Non-causal Econometrics for comments and discussions. This research is supported by the Danish National Research Foundation (DNRF Chair grant number DNRF154).} \\
	\small{Aarhus University} \\
}
\date{}

\allowdisplaybreaks

\begin{document}
	
\maketitle

\begin{abstract}
	
	In this paper, we study causal non-causal state space models to model time series characterised by a local explosive increase followed by a sharp decrease such as stock prices. To motivate the use of causal non-causal state space models, we show that the causal non-causal convolution autoregressive model introduced by \cite{GourierouxZakoian2017} can be consistent with the rational expectations stock price model. As in a causal state space model, a central question is how to perform state and parameter inference in the causal non-causal state space model, which we discuss in the paper. We also study the causal non-causal convolution autoregressive model in more detail, providing some new results for the model. To illustrate the usefulness of causal non-causal state space models, we use the causal non-causal convolution autoregressive model to estimate the size of the dot-com bubble in both real time and a posteriori with the stable non-causal autoregressive model considered also by \cite{GourierouxZakoian2017} as a benchmark.
	
\end{abstract}

\section{Introduction} \label{sec:introduction}

Many economic and financial time series are characterised by a local explosive increase followed by a sharp decrease. One example is stock prices where the aforementioned phenomenon can be the result of a bubble that bursts.

In recent years, there has been an increasing interest in mixed causal non-causal (vector) autoregressive models to model such time series. Indeed, mixed causal non-causal autoregressive models have by now successfully been applied to model stock prices, commodity prices, inflation, and cryptocurrencies, see, for instance, \cite{LanneSaikkonen2011}, \cite{LanneLuotoSaikkonen2012}, \cite{LanneLuoto2013}, \cite{GourierouxJasiak2016}, \cite{LanneLuoto2017}, \cite{GourierouxZakoian2017}, \cite{FriesZakoian2019}, \cite{HecqIsslerTelg2020}, \cite{CavaliereNielsenRahbek2020}, and \cite{Fries2022}. In the multivariate case, mixed causal non-causal vector autoregressive models have been applied to model (i) gross domestic product (GDP), total government expenditure, and total government revenue, (ii) commodity prices, and (iii) interest rates, see, for instance, \cite{LanneSaikkonen2013}, \cite{GourierouxJasiak2016}, \cite{GourierouxJasiak2017}, and \cite{DavisSong2020}. 

Causal non-causal state space models, however, have attracted much less attention. To the best of our knowledge, the only example of a causal non-causal state space model is the causal non-causal convolution autoregressive model introduced by \cite{GourierouxZakoian2017} and further studied by \cite{GourierouxJasiakTong2021} to model stock prices. In the causal non-causal convolution autoregressive model, the observed component is equal to the sum of two unobserved components, a causal Gaussian autoregressive model and a non-causal stable autoregressive model, the idea being that the stock price is equal to the fundamental value of the stock plus a potential bubble. \cite{GourierouxJasiakTong2021} argued that "so far, the causal non-causal convolution (autoregressive) model has not been widely used in empirical research, likely due to the lack of a filtering and forecasting algorithm" \cite[p.~1230]{GourierouxJasiakTong2021}, so they proposed a partial filter. The partial filter they proposed is, however, not optimal as they also acknowledge. Indeed, the partial filter at time $t$ uses information only at time $t-1$, $t$, and $t+1$, and not from the beginning of the sample to the end of it. And although mathematically, it is possible to extend the partial filter at time $t$ such that it uses information from time $t-h$ to $t+h$ and even from the beginning of the sample to the end of it, it is not possible to compute it numerically, leaving a gap in the literature.

In this paper, we thus study causal non-causal state space models. To motivate the use of causal non-causal state space models, we show that the causal non-causal convolution autoregressive model can be consistent with the arguably most fundamental model of stock prices, namely, the rational expectations stock price model, thereby bridging the literature on mixed causal non-causal (vector) autoregressive models with the one on rational expectations stock price models. This result generalises the one by \cite{GourierouxJasiakMonfort2020} by showing that the non-causal stable autoregressive model considered by \cite{GourierouxZakoian2017} can also be consistent with the rational expectations stock price model and not only the rational expectations model. As in a causal state space model, a central question is how to perform state and parameter inference in the causal non-causal state space model; here, state inference refers to filtering, prediction, and smoothing. If the transition kernels of the non-causal Markov process have densities, we show that the model can be transformed into a causal state space model, so one can perform inference by applying standard techniques. Although the model can still be transformed into a causal state space model in theory if the transition kernels of the non-causal Markov process do not have densities, it is not useful to do so in practice. However, if the transition kernels of the causal Markov process have densities, we show that one can still perform inference by transforming the model into a non-causal state space model. We also study the causal non-causal convolution autoregressive model in more detail, providing some new results for the model.

Many authors have found evidence of a bubble in the NASDAQ index in the 1990s, which is commonly referred to as the dot-com bubble, see, for instance, \cite{PhillipsWuYu2011} and the references therein. To illustrate the usefulness of causal non-causal state space models, we use the causal non-causal convolution autoregressive model to estimate the size of the dot-com bubble in both real time and a posteriori with the non-causal autoregressive model as a benchmark. We find that the causal non-causal convolution autoregressive model is able to detect the dot-com bubble in both real time and a posteriori, but the size of the dot-com bubble estimated in real time is larger than the one estimated a posteriori. The non-causal autoregressive model is also able to detect the dot-com bubble. However, it also detects a bubble before the dot-com bubble, in contrast to the causal non-causal convolution autoregressive model, which contradicts, for instance, the evidence in \cite{PhillipsWuYu2011} and thus shows that the causal non-causal convolution autoregressive model is the more realistic of the two.

The rest of the paper is organised as follows. Section \ref{sec:motivation} shows that the causal non-causal convolution autoregressive model can be consistent with the rational expectations stock price model. Section \ref{sec:cncssm} introduces the causal non-causal state space model and discusses how to perform inference in it. Section \ref{sec:cnccar} discusses the causal non-causal convolution autoregressive model in more detail, and Section \ref{sec:motivation_revisited} demonstrates how the general theory can be applied to the causal non-causal convolution autoregressive model. Section \ref{sec:application} presents the application. Finally, Section \ref{sec:conclusion} concludes. All proofs are collected in the appendix.

\section{A Motivating Example} \label{sec:motivation}

In the following, $X \sim \mathcal{S} (\alpha,\mu,\sigma,\beta)$ denotes a stable distributed random variable where $\alpha \in (0,2]$ is the index of stability, which determines the tails of the distribution in the sense that if $\alpha < 2$, then
\begin{equation*}
	\mathbb{E} \left[ |X|^p \right] < \infty \quad \textup{if} \quad 0 < p < \alpha \quad \textup{and} \quad \mathbb{E} \left[ |X|^p \right] = \infty \quad \textup{if} \quad p \geq \alpha,
\end{equation*}
$\mu \in \mathbb{R}$ is a location parameter, $\sigma \in (0,\infty)$ is a scale parameter, and $\beta \in [-1,1]$ is an asymmetry parameter. In general, the probability density function of $X$ is not available in closed form, but the characteristic function of $X$, $\varphi_X (t) = \mathbb{E} [ \exp(\textup{i}tX) ], t \in \mathbb{R}$, is given by
\begin{equation*}
	\varphi_X (t) = \exp \left( \textup{i} \mu t - \sigma^{\alpha} \left| t \right|^{\alpha} \left( 1 - \textup{i} \beta \textup{sign} (t) \Phi \right) \right),
\end{equation*}
with $\Phi = \tan \left( \frac{\pi \alpha}{2} \right)$ if $\alpha \neq 1$ and $\Phi = - \frac{2}{\pi} \log \left( \left| t \right| \right)$ if $\alpha = 1$. Exemptions are when $\alpha = 2$ and $\beta = 0$, $\alpha = 1$ and $\beta = 0$, and $\alpha = \frac{1}{2}$ and $\beta = 1$ where the stable distribution reduces to the normal, Cauchy, and Lévy distribution, respectively. See, for instance, \cite{SamorodnitskyTaqqu1994} for more details on stable distributed random variables.

To model bubbles from a probabilistic point of view, \cite{GourierouxZakoian2017} considered the non-causal autoregressive model
\begin{equation}
	Y_t = \rho Y_{t+1} + \varepsilon_t, \quad \varepsilon_t \overset{i.i.d.}{\sim} \mathcal{S} (\alpha,0,\sigma,\beta), \label{eq:ncar} 
\end{equation}
where $| \rho | < 1$. The non-causal autoregressive model is indeed able to generate bubbles: Assume that a large value of $\varepsilon_t$ is realised at time $t = \tau$. Then, in reverse time, $Y_t$ will jump at time $t = \tau$ and decrease exponentially such that, in direct time, $Y_t$ will increase exponentially and crash at time $t = \tau$; see Section 2 in \cite{GourierouxZakoian2017} for more details. 

Figure \ref{fig:ncar} shows a simulation from the non-causal autoregressive model in Equation \eqref{eq:ncar} with $\rho = 0.9$, $\alpha = 1$, $\sigma = 0.5$, and $\beta = 0$.

\begin{figure}[htbp]
	
	\centering
	
	\includegraphics[width=0.75\textwidth,height=0.25\textheight]{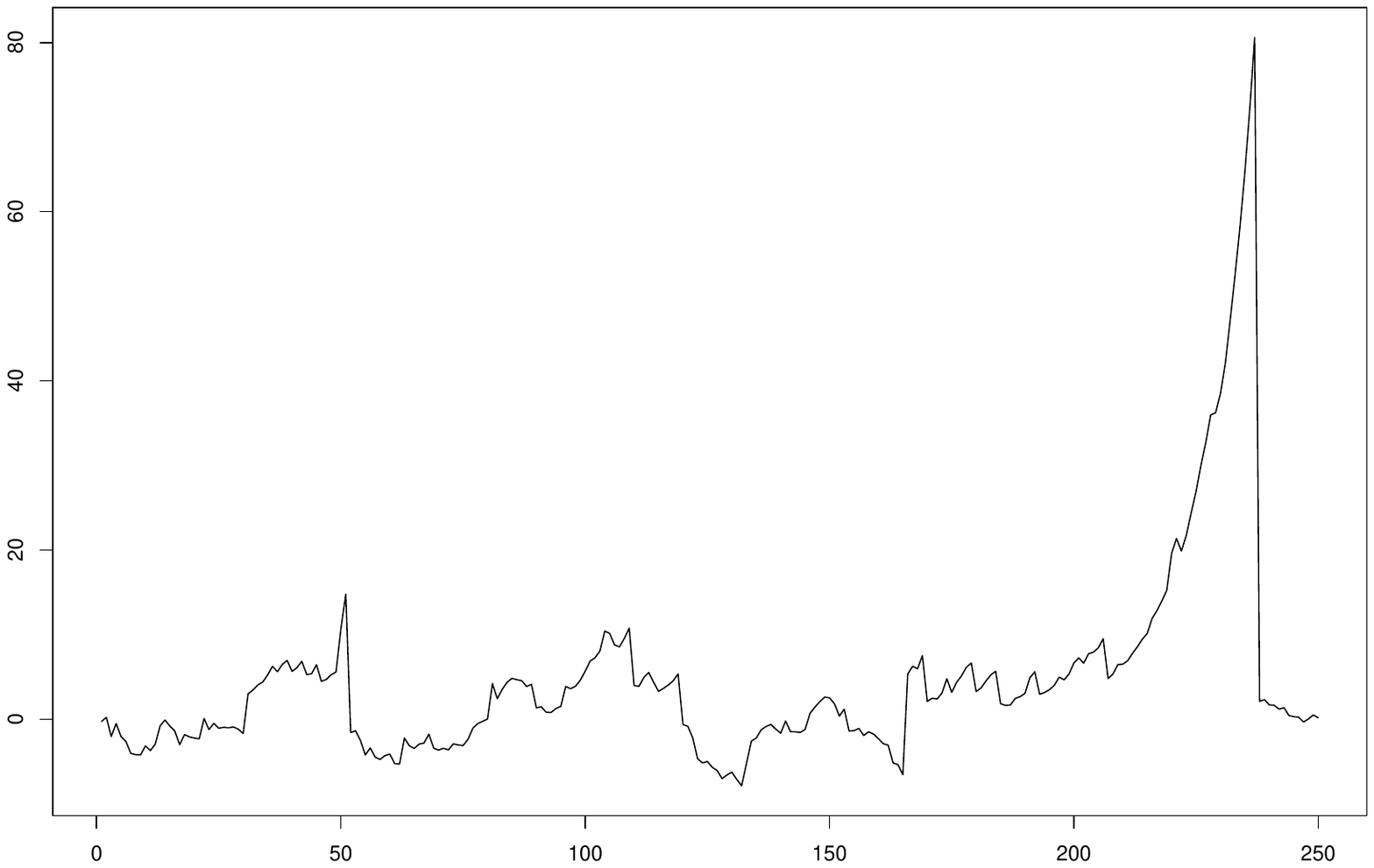}
	
	\caption{A simulation from the non-causal autoregressive model in Equation \eqref{eq:ncar} with $\rho = 0.9$, $\alpha = 1$, $\sigma = 0.5$, and $\beta = 0$.}
	
	\label{fig:ncar}
	
\end{figure}

The arguably most fundamental model of stock prices is the rational expectations stock price model; a textbook treatment of it can be found in Chapter 7 in \cite{CampbellLoMacKinlay1997}. In the rational expectations stock price model, the price $P_t$ of a stock is given by
\begin{equation}
	P_t = \frac{\mathbb{E}_t [P_{t+1}+D_{t+1}]}{1+R}, \label{eq:price_equation}
\end{equation}
where $R > 0$ is the expected return on the stock, which we assume is constant, $\mathbb{E}_t [\cdot]$ is the expectation conditional on the information at time $t$, and $D_t$ is the dividend from the stock.

For simplicity, we assume that the expected return is constant since, in this case, the rational expectations stock price model is linear (Equation \eqref{eq:price_equation}) and has therefore a closed-form solution (Equations \eqref{eq:price_solution1}-\eqref{eq:price_solution3}). Many authors have, however, argued that the expected return is both time-varying and persistent, see, for instance, \cite{FamaFrench1988}, \cite{CampbellCochrane1999}, \cite{FersonSarkissianSimin2003}, and \cite{PastorStambaugh2009}, in which case the rational expectations stock price model is non-linear and does therefore not have a closed-form solution. A solution to this problem is to log-linearise the model as in \cite{CampbellShiller1988}, see also Chapter 7 in \cite{CampbellLoMacKinlay1997}, such that it can be 'solved'. We, however, continue to assume that the expected return is constant in line with the majority of the rational expectations stock price models with bubbles in the literature.

The solution to Equation \eqref{eq:price_equation} is
\begin{equation}
	P_t = F_t + B_t, \label{eq:price_solution1}
\end{equation}
where $F_t$ is the fundamental value of the stock given by
\begin{equation}
	F_t = \sum_{i=1}^{\infty} \left( \frac{1}{1+R} \right)^i \mathbb{E}_t [D_{t+i}] \label{eq:price_solution2}
\end{equation}
and $B_t$ is a rational bubble given by
\begin{equation}
	B_t = \frac{\mathbb{E}_t [B_{t+1}]}{1+R} \label{eq:price_solution3}
\end{equation}
which cannot be negative, cannot burst completely and start again so if it exists in the stock today, then it must have existed since trading in the stock began, and cannot be positive if the price is bounded from above, see Chapter 7 in \cite{CampbellLoMacKinlay1997} and the references therein for more details.

In the case with no bubble, it is common to assume that the price
\begin{equation*}
	P_t = \sum_{i=1}^{\infty} \left( \frac{1}{1+R} \right)^i \mathbb{E}_t [D_{t+i}]
\end{equation*}
is non-stationary but that the spread
\begin{equation*}
	P_t - \frac{D_t}{R} = \frac{1}{R} \sum_{i=0}^{\infty} \left( \frac{1}{1+R} \right)^i \mathbb{E}_t [\Delta D_{t+1+i}]
\end{equation*}
is stationary making the spread more convenient to model than the price. Like the expected return, many authors have also argued that the expected dividend growth is both time-varying and persistent, see, for instance, \cite{BansalYaron2004}, \cite{MenzlySantosVeronesi2004}, and \cite{LettauLudvigson2005}. Assume therefore, in the same vein as \cite{VanBinsbergenKoijen2010} and \cite{AndreasenBro2024}, that the expected dividend growth $G_t := \mathbb{E}_t [\Delta D_{t+1}]$ follows a first-order Gaussian autoregressive model
\begin{equation*}
	G_t - \phi_0^{G} = \phi_1^{G} (G_{t-1} - \phi_0^{G}) + \varepsilon_t^{G}, \quad \varepsilon_t^{G} \overset{i.i.d.}{\sim} \mathcal{N} (0,\sigma_{G}^2),
\end{equation*}
with $|\phi_1^{G}| < 1$, see also \cite{Cochrane2008}. Returning to the case with a bubble, the spread $P_t^S := P_t - \frac{D_t}{R}$ is then given by
\begin{equation}
	P_t^S = F_t^S + B_t, \label{eq:rem1}
\end{equation}
where $F_t^S := F_t - \frac{D_t}{R}$ is given by
\begin{equation}
	F_t^S = \phi_0^{FS} + \phi_1^{FS} F_{t-1}^S + \varepsilon_t^{FS}, \quad \varepsilon_t^{FS} \overset{i.i.d.}{\sim} \mathcal{N} (0,\sigma_{FS}^2), \label{eq:rem2}
\end{equation}
with 
\begin{equation*}
	\phi_0^{FS} := \frac{1}{R} \frac{1+R}{R} \phi_0^{G} (1-\phi_1^{G}), \quad \quad \phi_1^{FS} := \phi_1^{G}, \quad \quad \sigma_{FS}^2 := \left( \frac{1}{R} \frac{1+R}{1+R-\phi_1^{G}} \right)^2 \sigma_{G}^2
\end{equation*}
and $B_t$ is given by
\begin{equation}
	\mathbb{E}_t [B_{t+1}] = (1+R) B_t \label{eq:rem3}
\end{equation}
as above, see Appendix \ref{appendix:motivatingexample} for details.

\cite{GourierouxZakoian2017} also briefly considered a causal non-causal state space model namely the causal non-causal convolution autoregressive model
\begin{equation}
	Y_t = X_t^{F} + X_{t}^{B}, \label{eq:ncssm1}
\end{equation}
where
\begin{equation}
	X_t^{F} = \mu_F + \rho_F X_{t-1}^{F} + \varepsilon_t^{F}, \quad \varepsilon_t^F \overset{i.i.d.}{\sim} \mathcal{N} (0,\sigma_F^2), \label{eq:ncssm2}
\end{equation}
with $| \rho_F | < 1$ and 
\begin{equation}
	X_t^{B} = \rho_B X_{t+1}^{B} + \varepsilon_t^{B}, \quad \varepsilon_t^B \overset{i.i.d.}{\sim} \mathcal{S} (\alpha,0,\sigma_B,\beta), \label{eq:ncssm3}
\end{equation}
with $| \rho_B | < 1$ where $\varepsilon_s^F$ and $\varepsilon_t^B$ are independent for all $s$ and $t$, see also \cite{GourierouxJasiakTong2021}. Although \cite{GourierouxZakoian2017} modelled bubbles from a probabilistic point of view as mentioned above, the causal non-causal convolution autoregressive model in Equations \eqref{eq:ncssm1}-\eqref{eq:ncssm3} is in fact consistent with the rational expectations stock price model in Equations \eqref{eq:rem1}-\eqref{eq:rem3}. Indeed, if $\alpha < 1$ and $\beta = 1$, then $X_t^{B} \geq 0$ and, by Proposition \ref{prop:Proposition3}, 
\begin{equation*}
	\mathbb{E} \left[ X_t^B \mid X_{t-1}^B \right] = (1+R) X_{t-1}^B
\end{equation*}
if
\begin{equation}
	\rho_B^{\alpha-1} = 1+R. \label{eq:eqeq}
\end{equation}
Note that the non-causal autoregressive model in Equation \eqref{eq:ncar} with a constant is strictly speaking also consistent with the rational expectations stock price model in Equations \eqref{eq:rem1}-\eqref{eq:rem3} if $\phi_1^{FS} = 0$ and $\sigma_{FS}^2 = 0$ which, however, is unrealistic.

Figure \ref{fig:cncssm} shows a simulation from the causal non-causal convolution autoregressive model in Equations \eqref{eq:ncssm1}-\eqref{eq:ncssm3} with $\mu_F = 5$, $\rho_F = \rho_B = 0.9$, $\sigma_F^2 = 10$, $\alpha = 1$, $\sigma_B = 0.5$, and $\beta = 0$. The top figure shows $Y_t$ (black) and the bottom one shows $Y_t$ together with $X_t^{F}$ (green) and $X_t^{B}$ (red).

\begin{figure}[htbp]
	
	\centering
	
	\begin{subfigure}{1\textwidth}
		\centering
		\includegraphics[width=0.75\textwidth,height=0.25\textheight]{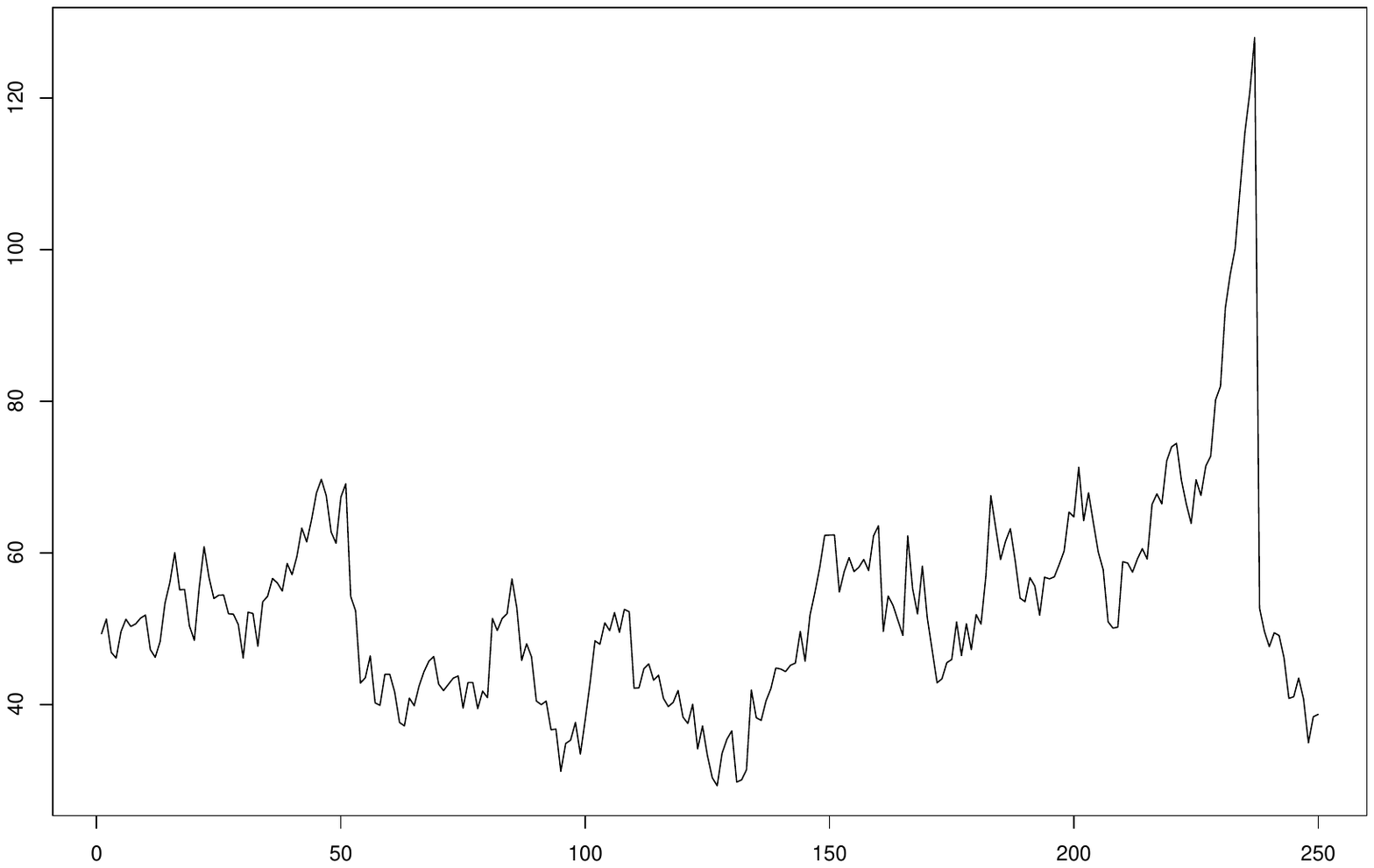}
	\end{subfigure}
	
	\begin{subfigure}{1\textwidth}
		\centering
		\includegraphics[width=0.75\textwidth,height=0.25\textheight]{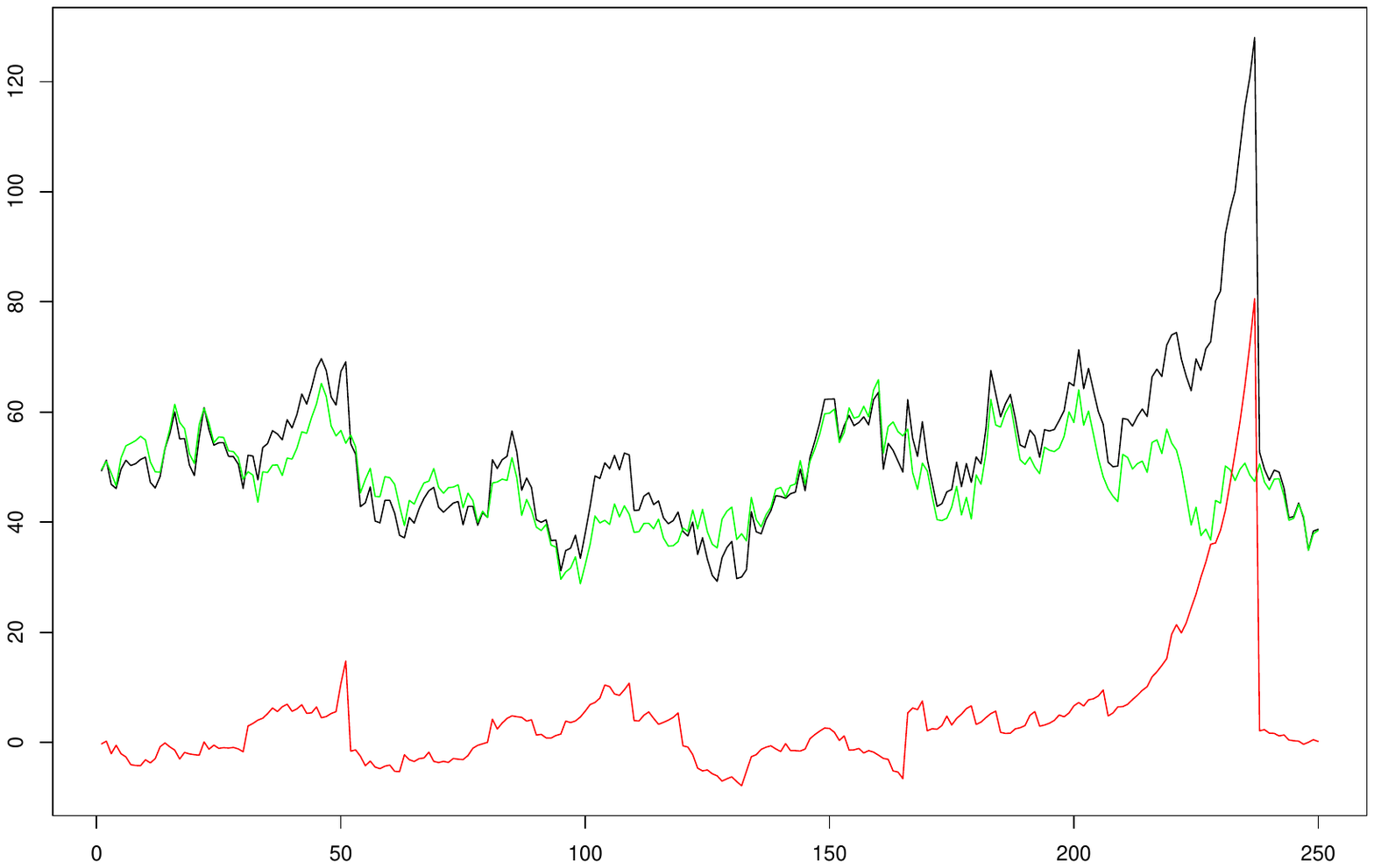}
	\end{subfigure}
	
	\caption{A simulation from the causal non-causal convolution autoregressive model in Equations \eqref{eq:ncssm1}-\eqref{eq:ncssm3} with $\mu_F = 5$, $\rho_F = \rho_B = 0.9$, $\sigma_F^2 = 10$, $\alpha = 1$, $\sigma_B = 0.5$, and $\beta = 0$ where $Y_t$ is black, $X_t^{F}$ is green, and $X_t^{B}$ is red.}
	
	\label{fig:cncssm}
	
\end{figure}

We consider the causal non-causal convolution autoregressive model in more detail in Section \ref{sec:cnccar}.

\section{The Causal Non-causal State Space Model} \label{sec:cncssm}

We now introduce the causal non-causal state space model (Section \ref{sec:model}) and discuss how to perform state and parameter inference in it; first, if the transition kernels of the non-causal Markov process have densities, by transforming it into a causal state space model (Section \ref{sec:causalmodel}), then, if the transition kernels of the non-causal Markov process do not have densities but the ones of the causal Markov process do, by transforming it into a non-causal state space model (Section \ref{sec:noncausalmodel}).

\subsection{The Model} \label{sec:model}

In the following, all random variables are defined on a probability space $(\Omega,\mathcal{F},\mathbb{P})$.

To define the causal non-causal state space model, we need the notion of a transition kernel. A kernel from a measurable space $(\textup{X},\mathcal{X})$ to a measurable space $(\textup{Y},\mathcal{Y})$ is a function $P : \textup{X} \times \mathcal{Y} \rightarrow [0,\infty]$ such that (i) for all $x \in \textup{X}$, $A \mapsto P(x,A)$ is a measure and (ii) for all $A \in \mathcal{Y}$, $x \mapsto P(x,A)$ is measurable. If, in addition, $P(x,\textup{Y}) = 1$ for all $x \in \textup{X}$, then the kernel is called a transition kernel.

We define a causal non-causal state space model as follows.\footnote{We call the model a causal non-causal state space model and \textit{not} a mixed causal non-causal state space model to distinguish it from the mixed causal non-causal autoregressive model mentioned in the introduction.}
\begin{definition} \label{def:cncssm}
	A stochastic process $((\accentset{\rightarrow}{X}_t,\accentset{\leftarrow}{X}_t,Y_t))_{t=1}^{T}$ on $(\accentset{\rightarrow}{\textup{X}} \times \accentset{\leftarrow}{\textup{X}} \times \textup{Y},\accentset{\rightarrow}{\mathcal{X}} \otimes \accentset{\leftarrow}{\mathcal{X}} \otimes \mathcal{Y})$ is called a causal non-causal state space model if it satisfies the following conditions. (i) $(\accentset{\rightarrow}{X}_t)_{t=1}^{T}$ is a causal Markov process on $(\accentset{\rightarrow}{\textup{X}},\accentset{\rightarrow}{\mathcal{X}})$ with transition kernels
	\begin{equation*}
		\mathbb{P} ( \accentset{\rightarrow}{X}_t \in \cdot \mid \accentset{\rightarrow}{X}_{t-1} ) = \accentset{\rightarrow}{P}_t (\accentset{\rightarrow}{X}_{t-1},\cdot)
	\end{equation*}
	and initial distribution
	\begin{equation*}
		\mathbb{P} ( \accentset{\rightarrow}{X}_1 \in \cdot) = \accentset{\rightarrow}{\Pi}_1 (\cdot).
	\end{equation*}
	(ii) $(\accentset{\leftarrow}{X}_t)_{t=1}^{T}$ is a non-causal Markov process on $(\accentset{\leftarrow}{\textup{X}},\accentset{\leftarrow}{\mathcal{X}})$ with transition kernels
	\begin{equation*}
		\mathbb{P} ( \accentset{\leftarrow}{X}_t \in \cdot \mid \accentset{\leftarrow}{X}_{t+1} ) = \accentset{\leftarrow}{P}_t (\accentset{\leftarrow}{X}_{t+1},\cdot)
	\end{equation*}
	and initial distribution
	\begin{equation*}
		\mathbb{P} ( \accentset{\leftarrow}{X}_T \in \cdot) = \accentset{\leftarrow}{\Pi}_T (\cdot).
	\end{equation*}
	(iii) $(\accentset{\rightarrow}{X}_t)_{t=1}^{T}$ and $(\accentset{\leftarrow}{X}_t)_{t=1}^{T}$ are independent. (iv) Conditional on $((\accentset{\rightarrow}{X}_t,\accentset{\leftarrow}{X}_t))_{t=1}^{T}$, $(Y_t)_{t=1}^{T}$ is a sequence of independent random variables on $(\textup{Y},\mathcal{Y})$ with transition kernels
	\begin{equation*}
		\mathbb{P} ( Y_t \in \cdot \mid \accentset{\rightarrow}{X}_1,\accentset{\leftarrow}{X}_1,...,\accentset{\rightarrow}{X}_T,\accentset{\leftarrow}{X}_T) = \Phi_t ((\accentset{\rightarrow}{X}_t,\accentset{\leftarrow}{X}_t),\cdot).
	\end{equation*}
	The causal non-causal state space model is called a causal state space model if $((\accentset{\rightarrow}{X}_t,\accentset{\leftarrow}{X}_t,Y_t))_{t=1}^{T} = ((\accentset{\rightarrow}{X}_t,Y_t))_{t=1}^{T}$ and a non-causal state space model if $((\accentset{\rightarrow}{X}_t,\accentset{\leftarrow}{X}_t,Y_t))_{t=1}^{T} = ((\accentset{\leftarrow}{X}_t,Y_t))_{t=1}^{T}$.
\end{definition}

The causal non-causal convolution autoregressive model in Equations \eqref{eq:ncssm1}-\eqref{eq:ncssm3} is indeed a causal non-causal state space model as the following example shows where, for simplicity, $\alpha = 1$ and $\beta = 0$.
\begin{example} \label{ex:un}
	Let $((X_t^F,X_t^B,Y_t))_{t=1}^{T}$ be a stochastic process on $(\mathbb{R}^3,\mathcal{B}(\mathbb{R}^3))$ given by
	\begin{equation*}
		Y_t = X_t^F + X_t^B,
	\end{equation*}
	where
	\begin{equation*}
		X_t^F = \mu_F + \rho_F X_{t-1}^F + \varepsilon_t^{F}, \quad \varepsilon_t^F \overset{i.i.d.}{\sim} \mathcal{N} (0,\sigma_F^2),
	\end{equation*}
	with $| \rho_F | < 1$ and $X_1^F \sim \mathcal{N} \left( \frac{\mu_F}{1-\rho_F},\frac{\sigma_F^2}{1-\rho_F^2} \right)$ and 
	\begin{equation*}
		X_t^B = \rho_B X_{t+1}^B + \varepsilon_t^{B}, \quad \varepsilon_t^B \overset{i.i.d.}{\sim} \mathcal{C} (0,\sigma_B),
	\end{equation*}
	with $| \rho_B | < 1$ and $X_T^B \sim \mathcal{C} \left( 0,\frac{\sigma_B}{1-|\rho_B|} \right)$ where $\varepsilon_s^F$, $X_1^F$, $\varepsilon_t^B$, and $X_T^B$ are independent for all $s \in \{2,...,T\}$ and $t \in \{1,...,T-1\}$.
	
	This is a causal non-causal state space model. Indeed, (i) $(X_t^F)_{t=1}^{T}$ is a causal Markov process on $(\mathbb{R},\mathcal{B}(\mathbb{R}))$ with transition kernel
	\begin{equation*}
		P^F (x^F,A^F) = \int_{A^F} \frac{1}{\sigma_F} f_{\mathcal{N}} \left( \frac{z - \mu_F - \rho_F x^F}{\sigma_F} \right) \lambda(\textup{d}z), \quad x^F \in \mathbb{R}, A^F \in \mathcal{B}(\mathbb{R}),
	\end{equation*}
	and initial distribution
	\begin{equation*}
		\Pi^F (A^F) = \int_{A^F} \frac{1}{\frac{\sigma_F}{\sqrt{1-\rho_F^2}}} f_{\mathcal{N}} \left( \frac{z-\frac{\mu_F}{1-\rho_F}}{\frac{\sigma_F}{\sqrt{1-\rho_F^2}}} \right) \lambda(\textup{d}z), \quad A^F \in \mathcal{B}(\mathbb{R}),
	\end{equation*}
	where $f_{\mathcal{N}} (z) = \frac{1}{\sqrt{2 \pi}} \exp \left( - \frac{z^2}{2} \right), z \in \mathbb{R}$. (ii) $(X_t^B)_{t=1}^{T}$ is a non-causal Markov process on $(\mathbb{R},\mathcal{B}(\mathbb{R}))$ with transition kernel
	\begin{equation*}
		P^B (x^B,A^B) = \int_{A^B} \frac{1}{\sigma_B} f_{\mathcal{C}} \left( \frac{z - \rho_{B} x^B}{\sigma_B} \right) \lambda(\textup{d}z), \quad x^B \in \mathbb{R}, A^B \in \mathcal{B}(\mathbb{R}),
	\end{equation*}
	and initial distribution
	\begin{equation*}
		\Pi^B (A^B) = \int_{A^B} \frac{1}{\frac{\sigma_B}{1-|\rho_B|}} f_{\mathcal{C}} \left( \frac{z}{\frac{\sigma_B}{1-|\rho_B|}} \right) \lambda(\textup{d}z), \quad A^B \in \mathcal{B}(\mathbb{R}),
	\end{equation*}
	where $f_{\mathcal{C}} (z) = \frac{1}{\pi} \frac{1}{1 + z^2}, z \in \mathbb{R}$. (iii) $(X_t^F)_{t=1}^{T}$ and $(X_t^B)_{t=1}^{T}$ are independent. (iv) Conditional on $((X_t^F,X_t^B))_{t=1}^{T}$, $(Y_t)_{t=1}^{T}$ is a sequence of independent random variables on $(\mathbb{R},\mathcal{B}(\mathbb{R}))$ with transition kernel
	\begin{equation*}
		\Phi ((x^F,x^B),B) = \delta_{x^F+x^B}(B), \quad (x^F,x^B) \in \mathbb{R}^2, B \in \mathcal{B}(\mathbb{R}).  
	\end{equation*}
	Here, $\lambda$ is the Lebesgue measure on $(\mathbb{R},\mathcal{B}(\mathbb{R}))$ and $\delta_{x}$ is the Dirac measure on $(\mathbb{R},\mathcal{B}(\mathbb{R}))$. See Appendix \ref{appendix:otherexamples} for details.
\end{example}

The next lemma, from which the distribution of the causal non-causal state space model follows, will be useful in the following. Here, $Z_{1:T} := (Z_T,Z_{T-1},...,Z_1)$ and $z_{1:T} := (z_T,z_{T-1},...,z_1)$ for $Z = \accentset{\rightarrow}{X},\accentset{\leftarrow}{X},Y$ and $z = \accentset{\rightarrow}{x},\accentset{\leftarrow}{x},y$.
\begin{lemma} \label{lem:cncssm}
	Let $((\accentset{\rightarrow}{X}_t,\accentset{\leftarrow}{X}_t,Y_t))_{t=1}^{T}$ be a causal non-causal state space model. Then,
	\begin{align*}
		\mathbb{E} [ f(\accentset{\rightarrow}{X}_{1:T},\accentset{\leftarrow}{X}_{1:T},Y_{1:T}) ] &= \int_{\accentset{\rightarrow}{\textup{X}}^T} \int_{\accentset{\leftarrow}{\textup{X}}^T} \int_{\textup{Y}^T} f(\accentset{\rightarrow}{x}_{1:T},\accentset{\leftarrow}{x}_{1:T},y_{1:T}) \prod_{t=1}^{T} \Phi_t ((\accentset{\rightarrow}{x}_t,\accentset{\leftarrow}{x}_t),\textup{d}y_t) \\
		& \quad \cdot \prod_{t=T}^{2} \accentset{\rightarrow}{P}_t (\accentset{\rightarrow}{x}_{t-1},\textup{d}\accentset{\rightarrow}{x}_{t}) \accentset{\rightarrow}{\Pi}_1 (\textup{d}\accentset{\rightarrow}{x}_{1}) \prod_{t=1}^{T-1} \accentset{\leftarrow}{P}_t (\accentset{\leftarrow}{x}_{t+1},\textup{d}\accentset{\leftarrow}{x}_{t}) \accentset{\leftarrow}{\Pi}_T (\textup{d}\accentset{\leftarrow}{x}_{T})
	\end{align*}
	for all $f \in \mathcal{M}_{b}((\accentset{\rightarrow}{\textup{X}} \times \accentset{\leftarrow}{\textup{X}} \times \textup{Y})^T)$ where $\mathcal{M}_{b}((\accentset{\rightarrow}{\textup{X}} \times \accentset{\leftarrow}{\textup{X}} \times \textup{Y})^T)$ is the set of bounded, measurable functions from $(\accentset{\rightarrow}{\textup{X}} \times \accentset{\leftarrow}{\textup{X}} \times \textup{Y})^T$ to $\mathbb{R}$.
\end{lemma}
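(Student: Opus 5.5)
The plan is to build the joint law in three layers, using the structure of Definition \ref{def:cncssm}: first the law of the causal chain, then the law of the non-causal chain, then the conditional law of the observations. I would use the tower property once and then a product-measure argument for the independent state processes.

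\textbf{Step 1: conditioning on the states.} Condition on the state sequence $(\accentset{\rightarrow}{X}_{1:T},\accentset{\leftarrow}{X}_{1:T})$. By condition (iv), given the states, the $Y_t$ are independent with laws $\Phi_t((\accentset{\rightarrow}{X}_t,\accentset{\leftarrow}{X}_t),\cdot)$. Hence the regular conditional distribution of $Y_{1:T}$ given the states is the product kernel $\prod_{t=1}^T \Phi_t((\accentset{\rightarrow}{x}_t,\accentset{\leftarrow}{x}_t),\mathrm{d}y_t)$. This gives
\begin{equation*}
	\mathbb{E}[f \mid \accentset{\rightarrow}{X}_{1:T},\accentset{\leftarrow}{X}_{1:T}] = g(\accentset{\rightarrow}{X}_{1:T},\accentset{\leftarrow}{X}_{1:T}),
\end{equation*}
where $g$ denotes the integral of $f$ against that product kernel. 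The function $g$ is bounded, and it is measurable by the standard kernel results: for $x \mapsto \int h(x,y)K(x,\mathrm{d}y)$ with bounded measurable $h$, one proves measurability by a monotone class argument starting from indicators of rectangles. Taking expectations, $\mathbb{E}[f]=\mathbb{E}[g(\accentset{\rightarrow}{X}_{1:T},\accentset{\leftarrow}{X}_{1:T})]$.

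\textbf{Step 2: independence of the state processes.} By (iii), the joint law of the state sequence is the product of the laws of $\accentset{\rightarrow}{X}_{1:T}$ and $\accentset{\leftarrow}{X}_{1:T}$. Fubini's theorem then lets me integrate $g$ first against one law and then against the other.

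\textbf{Step 3: the Markov factorisations.} It remains to show that the law of $\accentset{\rightarrow}{X}_{1:T}$ is $\prod_{t=T}^{2}\accentset{\rightarrow}{P}_t(\accentset{\rightarrow}{x}_{t-1},\mathrm{d}\accentset{\rightarrow}{x}_t)\accentset{\rightarrow}{\Pi}_1(\mathrm{d}\accentset{\rightarrow}{x}_1)$, and symmetrically for $\accentset{\leftarrow}{X}_{1:T}$ with time running from $T$ down to $1$. For the causal chain this is an induction on $T$. By the Markov property in (i), interpreting $\mathbb{P}(\accentset{\rightarrow}{X}_t\in\cdot\mid\accentset{\rightarrow}{X}_{t-1})$ as also equalling the conditional law given the whole past $\accentset{\rightarrow}{X}_{1:t-1}$, we have
\begin{equation*}
	\mathbb{E}[h(\accentset{\rightarrow}{X}_{1:t})]=\mathbb{E}\Big[\int h(\accentset{\rightarrow}{X}_{1:t-1},x)\,\accentset{\rightarrow}{P}_t(\accentset{\rightarrow}{X}_{t-1},\mathrm{d}x)\Big].
\end{equation*}
Iterating this down to $\accentset{\rightarrow}{\Pi}_1$ yields the factorisation. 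The non-causal chain is handled identically after reversing time, starting from $\accentset{\leftarrow}{\Pi}_T$ and applying $\accentset{\leftarrow}{P}_{T-1},\dots,\accentset{\leftarrow}{P}_1$.

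Combining the three steps gives the stated formula. The main obstacle is not computational but definitional. One must read the Markov property in (i)--(ii) as conditioning on the entire past (respectively, the entire future), and read (iv) as giving the full conditional law of $Y_{1:T}$ as a product. With those readings fixed, the remaining work is the routine measurability of the kernel integrals, handled via the monotone class theorem.
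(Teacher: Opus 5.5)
Your proposal is correct and follows essentially the same route as the paper's proof. Both condition on the states and use (iv) to integrate out $Y_{1:T}$ against the product of the $\Phi_t$, factor the state law by the independence in (iii), and obtain each chain's law by iterated conditioning on the past (respectively, the future) through the Markov kernels. Your version is slightly more careful: the paper computes only probabilities of rectangles and then concludes, whereas you work with general bounded test functions and Fubini, so the extension beyond rectangles and the required measurability are explicit.
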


\subsection{The Causal State Space Model} \label{sec:causalmodel}

In theory, any causal non-causal state space model can be transformed into a causal state space model, which follows from the well-known fact that a Markov process in direct time is also a Markov process in reverse time and vice versa. The next proposition formalises this.
\begin{proposition} \label{prop:cncssm}
	Let $((\accentset{\rightarrow}{X}_t,\accentset{\leftarrow}{X}_t,Y_t))_{t=1}^{T}$ be a causal non-causal state space model on $(\accentset{\rightarrow}{\textup{X}} \times \accentset{\leftarrow}{\textup{X}} \times \textup{Y},\accentset{\rightarrow}{\mathcal{X}} \otimes \accentset{\leftarrow}{\mathcal{X}} \otimes \mathcal{Y})$ with transition kernels $\accentset{\rightarrow}{P}_t$, $\accentset{\leftarrow}{P}_t$, and $\Phi_t$ and initial distributions $\accentset{\rightarrow}{\Pi}_1$ and $\accentset{\leftarrow}{\Pi}_T$. Let $(\tilde{P}_t)_{t=2}^{T}$ be a sequence of transition kernels from $(\accentset{\leftarrow}{\textup{X}},\accentset{\leftarrow}{\mathcal{X}})$ to $(\accentset{\leftarrow}{\textup{X}},\accentset{\leftarrow}{\mathcal{X}})$ given by
	\begin{equation}
		\int_{\accentset{\leftarrow}{\textup{X}}} \int_{\accentset{\leftarrow}{\textup{X}}} 1_{C_{t-1}} (z_{t-1}) 1_{C_{t}} (z_{t}) \tilde{P}_{t} (z_{t-1},\textup{d}z_{t}) \accentset{\leftarrow}{\Pi}_{t-1} (\textup{d}z_{t-1}) = \int_{\accentset{\leftarrow}{\textup{X}}} \int_{\accentset{\leftarrow}{\textup{X}}} 1_{C_{t-1}} (z_{t-1}) 1_{C_{t}} (z_{t}) \accentset{\leftarrow}{P}_{t-1} (z_{t},\textup{d}z_{t-1}) \accentset{\leftarrow}{\Pi}_{t} (\textup{d}z_{t}) \label{eq:RK1}
	\end{equation}
	for all $C_{t-1} \in \accentset{\leftarrow}{\mathcal{X}}$ and $C_{t} \in \accentset{\leftarrow}{\mathcal{X}}$ where
	\begin{equation*}
		\accentset{\leftarrow}{\Pi}_{t} (C_{t}) = \int_{\accentset{\leftarrow}{\textup{X}}} \cdots \int_{\accentset{\leftarrow}{\textup{X}}} 1_{C_{t}} (z_{t}) \accentset{\leftarrow}{P}_{t} (z_{t+1},\textup{d}z_{t}) \cdots \accentset{\leftarrow}{P}_{T-1} (z_{T},\textup{d}z_{T-1}) \accentset{\leftarrow}{\Pi}_{T} (\textup{d}z_{T})
	\end{equation*}
	for all $C_{t} \in \accentset{\leftarrow}{\mathcal{X}}$. Let now $((\bar{X}_t,\bar{Y}_t))_{t=1}^{T}$ be a causal state space model on $((\accentset{\rightarrow}{\textup{X}} \times \accentset{\leftarrow}{\textup{X}}) \times \textup{Y},(\accentset{\rightarrow}{\mathcal{X}} \otimes \accentset{\leftarrow}{\mathcal{X}}) \otimes \mathcal{Y})$ with transition kernels $\accentset{\rightarrow}{P}_t \tilde{P}_t$ and $\Phi_t$ and initial distribution $\accentset{\rightarrow}{\Pi}_1 \accentset{\leftarrow}{\Pi}_1$. Then, $((\accentset{\rightarrow}{X}_t,\accentset{\leftarrow}{X}_t,Y_t))_{t=1}^{T}$ and $((\bar{X}_t,\bar{Y}_t))_{t=1}^{T}$ are equal in distribution.
\end{proposition}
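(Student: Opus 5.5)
The plan is to show that both processes have the same joint law. The law of $((\accentset{\rightarrow}{X}_t,\accentset{\leftarrow}{X}_t,Y_t))_{t=1}^{T}$ is given by Lemma \ref{lem:cncssm}. For the causal state space model $((\bar{X}_t,\bar{Y}_t))_{t=1}^{T}$, the analogous formula (a special case of the same lemma, with no non-causal component) gives
\begin{equation*}
	\mathbb{E}[f(\bar{X}_{1:T},\bar{Y}_{1:T})] = \int f \prod_{t=1}^{T} \Phi_t(\bar{x}_t,\textup{d}y_t) \prod_{t=T}^{2} \accentset{\rightarrow}{P}_t \tilde{P}_t (\bar{x}_{t-1},\textup{d}\bar{x}_t) \, \accentset{\rightarrow}{\Pi}_1 \accentset{\leftarrow}{\Pi}_1(\textup{d}\bar{x}_1).
\end{equation*}
Here the product kernel acts as $\accentset{\rightarrow}{P}_t \tilde{P}_t((\accentset{\rightarrow}{x},\accentset{\leftarrow}{x}),\cdot) = \accentset{\rightarrow}{P}_t(\accentset{\rightarrow}{x},\cdot)\otimes \tilde{P}_t(\accentset{\leftarrow}{x},\cdot)$. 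By Fubini–Tonelli, the integrals in $\accentset{\rightarrow}{x}$ and $\accentset{\leftarrow}{x}$ separate. It therefore suffices to prove the identity of measures on $\accentset{\leftarrow}{\textup{X}}^T$
\begin{equation*}
	\prod_{t=T}^{2} \tilde{P}_t(z_{t-1},\textup{d}z_t)\,\accentset{\leftarrow}{\Pi}_1(\textup{d}z_1) = \prod_{t=1}^{T-1} \accentset{\leftarrow}{P}_t(z_{t+1},\textup{d}z_t)\,\accentset{\leftarrow}{\Pi}_T(\textup{d}z_T).
\end{equation*}
Once this holds, the full integrands agree for bounded measurable $f$, so the two processes are equal in distribution.

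I prove the displayed identity by induction on the length of the path. Write $\mu_k$ for the law of $(z_1,\dots,z_k)$ under the forward construction $\tilde{P}_k\cdots\tilde{P}_2\accentset{\leftarrow}{\Pi}_1$. Write $\nu_k$ for the corresponding marginal of the backward construction, namely the law of $(\accentset{\leftarrow}{X}_1,\dots,\accentset{\leftarrow}{X}_k)$. That law equals $\prod_{t=1}^{k-1}\accentset{\leftarrow}{P}_t(z_{t+1},\textup{d}z_t)\accentset{\leftarrow}{\Pi}_k(\textup{d}z_k)$, by integrating out $z_{k+1},\dots,z_T$ and using the definition of $\accentset{\leftarrow}{\Pi}_k$.

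For $k=1$ both sides are $\accentset{\leftarrow}{\Pi}_1$. For the inductive step, Equation \eqref{eq:RK1} says that the two-dimensional laws of $(z_{k},z_{k+1})$ coincide, and in particular $\accentset{\leftarrow}{\Pi}_k\tilde{P}_{k+1}$ has second marginal $\accentset{\leftarrow}{\Pi}_{k+1}$. I then check the identity on rectangles $C_1\times\cdots\times C_{k+1}$, which determine the measure by the $\pi$–$\lambda$ theorem. Write
\begin{equation*}
	\nu_{k+1}(C_1\times\cdots\times C_{k+1}) = \int 1_{C_{k+1}}(z_{k+1})\, g_k(z_k)\,\accentset{\leftarrow}{P}_k(z_{k+1},\textup{d}z_k)\,\accentset{\leftarrow}{\Pi}_{k+1}(\textup{d}z_{k+1}),
\end{equation*}
where $g_k(z_k)=1_{C_k}(z_k)\,\mathbb{P}(\accentset{\leftarrow}{X}_{1:k-1}\in C_{1:k-1}\mid \accentset{\leftarrow}{X}_k=z_k)$, computed from the kernels $\accentset{\leftarrow}{P}_{k-1},\dots,\accentset{\leftarrow}{P}_1$.

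The key step is to apply \eqref{eq:RK1}, extended from indicators to bounded measurable functions of the form $g(z_k)h(z_{k+1})$ by linearity and monotone class arguments. This swaps the pair $\accentset{\leftarrow}{P}_k\accentset{\leftarrow}{\Pi}_{k+1}$ for $\tilde{P}_{k+1}\accentset{\leftarrow}{\Pi}_k$. It yields $\int g_k(z_k)\tilde{P}_{k+1}(z_k,C_{k+1})\accentset{\leftarrow}{\Pi}_k(\textup{d}z_k)$. By the induction hypothesis, this equals the $\mu_{k+1}$-measure of the rectangle, since $g_k(z_k)\accentset{\leftarrow}{\Pi}_k(\textup{d}z_k)$ is exactly what $\nu_k=\mu_k$ produces when integrating out $z_1,\dots,z_{k-1}$.

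I expect the main obstacle to be the inductive step just described. The difficulty is that \eqref{eq:RK1} only concerns pairs, whereas the path measure involves a chain. The proof must use the Markov structure: in the backward construction, conditional on $z_k$, the earlier coordinates $z_1,\dots,z_{k-1}$ are independent of $z_{k+1}$. This is what lets the function $g_k$ depend on $z_k$ alone, so that the pairwise reversal applies.

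A secondary point is that the measure-theoretic extension of \eqref{eq:RK1} to products of bounded functions is needed. The existence of $\tilde{P}_t$ is assumed in the statement, so no disintegration argument is needed here.
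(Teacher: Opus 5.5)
Your proposal is correct and follows essentially the same route as the paper. The paper also isolates the non-causal path, by integrating out the causal states and observations into a function $f(z_{1:T})$ and invoking Lemma~\ref{lem:cncssm} twice; it then converts $\prod_{t=1}^{T-1}\accentset{\leftarrow}{P}_t(z_{t+1},\textup{d}z_t)\,\accentset{\leftarrow}{\Pi}_T(\textup{d}z_T)$ into $\prod_{t=T}^{2}\tilde{P}_t(z_{t-1},\textup{d}z_t)\,\accentset{\leftarrow}{\Pi}_1(\textup{d}z_1)$ by applying Equation~\eqref{eq:RK1} pair by pair, from $(z_{T-1},z_T)$ down to $(z_1,z_2)$, which is exactly what your induction does once unrolled. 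Your version is somewhat more careful than the paper's, since it makes explicit two points the paper uses only implicitly: the extension of \eqref{eq:RK1} beyond indicators, and the Markov structure that makes the lower part of the path depend on $z_k$ alone.
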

\begin{remark}
	If
	\begin{equation}
		\accentset{\leftarrow}{P}_{t-1} (z_{t},C_{t-1}) = \int_{C_{t-1}} \accentset{\leftarrow}{p}_{t-1} (z_{t},z_{t-1}) \accentset{\leftarrow}{\mu}_X (\textup{d}z_{t-1}), \quad z_{t} \in \accentset{\leftarrow}{\textup{X}}, C_{t-1} \in \accentset{\leftarrow}{\mathcal{X}}, \label{eq:RK2}
	\end{equation}
	where $\accentset{\leftarrow}{p}_{t-1} : \accentset{\leftarrow}{\textup{X}} \times \accentset{\leftarrow}{\textup{X}} \rightarrow (0,\infty)$ are measurable functions and $\accentset{\leftarrow}{\mu}_X : \accentset{\leftarrow}{\mathcal{X}} \rightarrow [0,\infty]$ is a $\sigma$-finite measure, then, by Lemma \ref{lem:app1},
	\begin{equation}
		\tilde{P}_{t} (z_{t-1},C_{t}) = \int_{C_{t}} \frac{\accentset{\leftarrow}{p}_{t-1} (z_{t},z_{t-1})}{\int_{\accentset{\leftarrow}{\textup{X}}} \accentset{\leftarrow}{p}_{t-1} (z_{t},z_{t-1}) \accentset{\leftarrow}{\Pi}_{t} (\textup{d}z_{t})} \accentset{\leftarrow}{\Pi}_{t} (\textup{d}z_{t}), \quad z_{t-1} \in \accentset{\leftarrow}{\textup{X}}, C_{t} \in \accentset{\leftarrow}{\mathcal{X}}. \label{eq:RK3}
	\end{equation}
\end{remark}

Thus, to perform inference in a causal non-causal state space model, one can transform it into a causal state space model and apply standard results from, for instance, \cite{CappeMoulinesRyden2005}. In practice, however, this trick is only useful if for all $\accentset{\leftarrow}{x} \in \accentset{\leftarrow}{\textup{X}}$, the transition kernel $\accentset{\leftarrow}{P}_t(\accentset{\leftarrow}{x},\cdot)$ has a density $\accentset{\leftarrow}{p}_t(\accentset{\leftarrow}{x},\cdot)$ (Equation \eqref{eq:RK2}) such that the transition kernel $\tilde{P}_{t}$ can be obtained explicitly (Equation \eqref{eq:RK3}) and not only implicitly (Equation \eqref{eq:RK1}) and, as the following example shows, there are non-contrived situations where this is not the case.
\begin{example} \label{ex:deux}
	Let $((X_t^F,X_t^B,Y_t))_{t=1}^{T}$ be a stochastic process on $(\mathbb{R}^3,\mathcal{B}(\mathbb{R}^3))$ given by
	\begin{equation*}
		Y_t = X_t^F + X_t^B + \varepsilon_t^Y, \quad \varepsilon_t^Y \overset{i.i.d.}{\sim} \mathcal{N} (0,\sigma_Y^2).
	\end{equation*}
	Here,
	\begin{equation*}
		X_t^F = \mu_F + \rho_{1F} X_{t-1}^F + \rho_{2F} X_{t-2}^F + \varepsilon_t^{F}, \quad \varepsilon_t^F \overset{i.i.d.}{\sim} \mathcal{N} (0,\sigma_F^2),
	\end{equation*}
	with the spectral radius of $\bar{\rho}_F := (\rho_{1F},\rho_{2F};1,0)$ less than one, $(X_1^F,X_0^F)^{\prime} \sim \mathcal{N} \left( M , V \right)$, $M := \bar{\mu}_F (I_2 - \bar{\rho}_F)^{-1}$, $\bar{\mu}_F := (\mu_F,0)^{\prime}$, $V := \sum_{i=0}^{\infty} \bar{\rho}_{F}^{i} \bar{\sigma}_F^2 (\bar{\rho}_{F}^{\prime})^{i}$, and $\bar{\sigma}_F^2 := (\sigma_F^2,0;0,0)$. Moreover, 
	\begin{equation*}
		X_t^B = \rho_{1B} X_{t+1}^B + \rho_{2B} X_{t+2}^B + \varepsilon_t^{B}, \quad \varepsilon_t^B \overset{i.i.d.}{\sim} \mathcal{C} (0,\sigma_B),
	\end{equation*}
	with $(X_T^B,X_{T+1}^B)^{\prime} = (x_T^B,x_{T+1}^B)^{\prime}$ where $\varepsilon_s^Y$, $\varepsilon_t^F$, $(X_1^F,X_0^F)^{\prime}$, and $\varepsilon_u^B$ are independent for all $s \in \{1,...,T\}$, $t \in \{2,...,T\}$, and $u \in \{1,...,T-1\}$.
	
	This is not a causal non-causal state space model, but it can be transformed into one as follows. Set $\bar{X}_t^F := (X_t^F,X_{t-1}^F)^{\prime}$, $\bar{\varepsilon}_t^{F} := (\varepsilon_t^{F},0)^{\prime}$, $\bar{X}_t^B := (X_t^B,X_{t+1}^B)^{\prime}$, and $\bar{\varepsilon}_t^{B} := (\varepsilon_t^{B},0)^{\prime}$. Then, the stochastic process $((\bar{X}_t^F,\bar{X}_t^B,Y_t))_{t=1}^{T}$ on $(\mathbb{R}^5,\mathcal{B}(\mathbb{R}^5))$ given by
	\begin{equation*}
		Y_t = \bar{X}_{1,t}^F + \bar{X}_{1,t}^B + \varepsilon_t^Y,
	\end{equation*}
	where
	\begin{equation*}
		\bar{X}_t^F = \bar{\mu}_F + \bar{\rho}_{F} \bar{X}_{t-1}^F + \bar{\varepsilon}_t^{F}
	\end{equation*}
	and
	\begin{equation*}
		\bar{X}_t^B = \bar{\rho}_B \bar{X}_{t+1}^B + \bar{\varepsilon}_t^{B},
	\end{equation*}
	with $\bar{\rho}_B := (\rho_{1B},\rho_{2B};1,0)$ is a causal non-causal state space model. Indeed, (i) $(\bar{X}_t^F)_{t=1}^{T}$ is a causal Markov process on $(\mathbb{R}^2,\mathcal{B}(\mathbb{R}^2))$ with transition kernel
	\begin{equation*}
		P^F (\bar{x}^F,A^F) = \int_{A_1^F} \frac{1}{\sigma_F} f_{\mathcal{N}} \left( \frac{z-\mu_F-\rho_{1F}\bar{x}_1^F-\rho_{2F}\bar{x}_2^F}{\sigma_F} \right) \lambda(\textup{d}z) \delta_{\bar{x}_1^F} (A_2^F), \quad \bar{x}^F \in \mathbb{R}^2, A^F \in \mathcal{B}(\mathbb{R}^2),
	\end{equation*}
	and initial distribution
	\begin{equation*}
		\Pi^F (A^F) = \int_{A^F} \det(V)^{-1/2} f_{\mathcal{N}_2} \left( V^{-1/2} (z-M) \right) \lambda_2(\textup{d}z), \quad A^F \in \mathcal{B}(\mathbb{R}^2),
	\end{equation*}
	where $f_{\mathcal{N}_k} (z) = (2 \pi)^{-k/2} \exp \left( - \frac{z^{\prime} z}{2} \right), z \in \mathbb{R}^k$ and $\lambda_k$ is the Lebesgue measure on $(\mathbb{R}^k,\mathcal{B}(\mathbb{R}^k))$. (ii) $(\bar{X}_t^B)_{t=1}^{T}$ is a non-causal Markov process on $(\mathbb{R}^2,\mathcal{B}(\mathbb{R}^2))$ with transition kernel
	\begin{equation*}
		P^B (\bar{x}^B,A^B) = \int_{A_1^B} \frac{1}{\sigma_B} f_{\mathcal{C}} \left( \frac{z-\rho_{1B}\bar{x}_1^B-\rho_{2B}\bar{x}_2^B}{\sigma_B} \right) \lambda(\textup{d}z) \delta_{\bar{x}_1^B} (A_2^B), \quad \bar{x}^B \in \mathbb{R}^2, A^B \in \mathcal{B}(\mathbb{R}^2),
	\end{equation*}
	and initial distribution
	\begin{equation*}
		\Pi^B (A^B) = \delta_{\bar{x}_{1,T}^B} (A_1^B) \delta_{\bar{x}_{2,T}^B} (A_2^B), \quad A^B \in \mathcal{B}(\mathbb{R}^2).
	\end{equation*}
	(iii) $(\bar{X}_t^F)_{t=1}^{T}$ and $(\bar{X}_t^B)_{t=1}^{T}$ are independent. (iv) Conditional on $((\bar{X}_t^F,\bar{X}_t^B))_{t=1}^{T}$, $(Y_t)_{t=1}^{T}$ is a sequence of independent random variables on $(\mathbb{R},\mathcal{B}(\mathbb{R}))$ with transition kernel
	\begin{equation*}
		\Phi ((\bar{x}^F,\bar{x}^B),B) = \int_{B} \frac{1}{\sigma_Y} f_{\mathcal{N}} \left( \frac{z-\bar{x}_1^F-\bar{x}_{1}^B}{\sigma_Y} \right) \lambda(\textup{d}z), \quad (\bar{x}^F,\bar{x}^B) \in \mathbb{R}^4, B \in \mathcal{B}(\mathbb{R}).
	\end{equation*}
	See Appendix \ref{appendix:otherexamples} for details once again.
	
	It is, however, not useful to transform it into a causal state space model since the transition kernel of the non-causal Markov process does not have a density. On the other hand, it is straightforward to transform it into a non-causal state space model, although the transition kernel of the causal Markov process does not have a density either, since, by Proposition 4.4.2 in \cite{BrockwellDavis1991}, a causal Gaussian autoregressive model and a non-causal one are distributionally equivalent.
\end{example}

\subsection{The Non-causal State Space Model} \label{sec:noncausalmodel}

We now discuss how to perform inference in a non-causal state space model $((X_t,Y_t))_{t=1}^{T}$ on $(\textup{X} \times \textup{Y},\mathcal{X} \otimes \mathcal{Y})$ where $(X_t)_{t=1}^{T}$ is a non-causal Markov process on $(\textup{X},\mathcal{X})$ with transition kernels
\begin{equation}
	\mathbb{P} ( X_t \in \cdot \mid X_{t+1} ) = P_t (X_{t+1},\cdot), \label{eq:M1}
\end{equation}
which we simply call the transition kernels, and initial distribution
\begin{equation}
	\mathbb{P} ( X_T \in \cdot) = \Pi_T (\cdot). \label{eq:M2}
\end{equation}
Moreover, conditional on $(X_t)_{t=1}^{T}$, $(Y_t)_{t=1}^{T}$ is a sequence of independent random variables on $(\textup{Y},\mathcal{Y})$ with transition kernels
\begin{equation}
	\mathbb{P} ( Y_t \in \cdot \mid X_1,...,X_T) = \int_{\cdot} \phi_t (X_t,y) \mu_Y (\textup{d}y), \label{eq:M3}
\end{equation}
which we call the observation kernels, where $\phi_t : \textup{X} \times \textup{Y} \rightarrow (0,\infty)$ are measurable functions, which we call the observation densities, and $\mu_Y : \mathcal{Y} \rightarrow [0,\infty]$ is a $\sigma$-finite measure; we have simplified the notation for convenience.
\begin{remark} \label{remark:transformation}
	Note that the observation kernel in the causal non-causal convolution autoregressive model in Equations \eqref{eq:ncssm1}-\eqref{eq:ncssm3} does not have a density. To the best of our knowledge, it is, however, not possible to perform inference in neither a causal state space model nor a non-causal state space model where the observation kernels do not have densities unless it is linear and Gaussian, see, for instance, the discussion in Section 2.4.6 in \cite{ChopinPapaspiliopoulos2020}. 
	
	One 'solution', which we will use later, is to use the theory for a causal (non-causal) state space model where the observation kernels have densities to perform inference, at least approximately, in a causal (non-causal) state space model where the observation kernels do not have densities. 
	
	In some cases, it is possible to transform a causal (non-causal) state space model where the observation kernels do not have densities into a causal (non-causal) autoregressive state space model where the observation kernels do have densities. If this is the case, then another solution is to transform it and use theory for a causal (non-causal) autoregressive state space model where the observation kernels have densities to perform inference in it. Although this is the case for the causal non-causal convolution autoregressive model in Equations \eqref{eq:ncssm1}-\eqref{eq:ncssm3} as shown in Appendix \ref{appendix:motivatingexample}, we do not use this solution at the moment.
\end{remark}

\subsubsection*{State Inference}

First, we discuss the estimation of the unobserved component $(X_t)_{t=1}^{T}$ given the observed component $(Y_t)_{t=1}^{T}$ in the non-causal state space model in Equations \eqref{eq:M1}-\eqref{eq:M3} via the conditional expectation
\begin{equation*}
	\mathbb{E} [f(X_t) \mid Y_1,...,Y_s]
\end{equation*}
for all $f : \textup{X} \rightarrow \mathbb{R}$ such that it exists; first, in theory, then, in practice.

To compute the conditional expectation, we need the conditional distribution
\begin{equation*}
	\pi_{t \mid s} (A) := \mathbb{P} (X_t \in A \mid Y_1,...,Y_s), \quad A \in \mathcal{X},
\end{equation*}
which is called the filtering distribution if $t = s$, the prediction distribution if $t > s$, and the smoothing distribution if $t < s$. We, however, restrict our attention to the filtering and smoothing distributions since, to compute the prediction distribution, the transition kernels must have densities in which case the non-causal state space model can be transformed into a causal state space model as discussed above.

To compute the filtering and smoothing distributions, we need the backward kernel and the forward function. The backward kernel is defined as follows.
\begin{definition} \label{def:bk}
	The backward kernel is the kernel $\alpha_k : \textup{Y}^{s-k+1} \times \mathcal{X} \rightarrow [0,\infty)$ given by
	\begin{equation*}
		\alpha_k (y_{k:s},A) := \int_{\textup{X}} \cdots \int_{\textup{X}} 1_{A} (x_k) \phi_k(x_k,y_k) \cdots \phi_s(x_s,y_s) P_k(x_{k+1},\textup{d}x_k) \cdots P_{s-1}(x_s,\textup{d}x_{s-1}) \Pi_{s} (\textup{d}x_s), \quad y_{k:s} \in \textup{Y}^{s-k+1}, A \in \mathcal{X},
	\end{equation*}
	where
	\begin{equation*}
		\Pi_{s} (A) := \int_{\textup{X}} \cdots \int_{\textup{X}} 1_{A}(x_s) P_s(x_{s+1},\textup{d}x_s) \cdots P_{T-1}(x_T,\textup{d}x_{T-1}) \Pi_T(\textup{d}x_T), \quad A \in \mathcal{X},
	\end{equation*}
	which can be computed recursively as
	\begin{equation*}
		\alpha_k (y_{k:s},A) = \int_{\textup{X}} \int_{\textup{X}} 1_A (x_k) \phi_k (x_k,y_k) P_k(x_{k+1},\textup{d}x_k) \alpha_{k+1} (y_{k+1:s},\textup{d}x_{k+1})
	\end{equation*}
	with terminal condition $\alpha_s (y_s,A) = \int_{\textup{X}} 1_A (x_s) \phi_s (x_s,y_s) \Pi_{s} (\textup{d}x_s)$, see Lemma \ref{lem:app2} for details.
\end{definition}
\noindent
Moreover, the forward function is defined as follows.
\begin{definition} \label{def:ff}
	The forward function is the function $\beta_k : \textup{Y}^{k-1} \times \textup{X} \rightarrow (0,\infty)$ given by
	\begin{equation*}
		\beta_k (y_{1:k-1},x_k) := \int_{\textup{X}} \cdots \int_{\textup{X}} \phi_1(x_1,y_1) \cdots \phi_{k-1}(x_{k-1},y_{k-1}) P_1(x_2,\textup{d}x_1) \cdots P_{k-1}(x_k,\textup{d}x_{k-1}), \quad y_{1:k-1} \in \textup{Y}^{k-1}, x_k \in \textup{X},
	\end{equation*}
	which can be computed recursively as
	\begin{equation*}
		\beta_k (y_{1:k-1},x_k) = \int_{\textup{X}} \beta_{k-1} (y_{1:k-2},x_{k-1}) \phi_{k-1} (x_{k-1},y_{k-1}) P_{k-1} (x_{k},\textup{d}x_{k-1})
	\end{equation*}
	with initial condition $\beta_1 (y_{1:0},x_1) = 1$.
\end{definition}
\noindent
It will soon become clear that the recursion for the backward kernel and the one for the forward function is the key to compute the filtering and smoothing distributions in practice.

The filtering distribution is given in the following proposition.
\begin{proposition} \label{prop:filtering}
	The filtering distribution $\pi_{t \mid t} := \pi_{t}$ can be computed as
	\begin{equation*}
		\pi_t (A) = \frac{\int_{\textup{X}} 1_{A} (x_t) \beta_t (Y_{1:t-1},x_t) \alpha_t (Y_t,\textup{d}x_t) }{\int_{\textup{X}} \beta_t (Y_{1:t-1},x_t) \alpha_t (Y_t,\textup{d}x_t)}, \quad A \in \mathcal{X}.
	\end{equation*}
\end{proposition}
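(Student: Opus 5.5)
The plan is to show that the right-hand side is a version of $\mathbb{P}(X_t \in A \mid Y_1,...,Y_t)$. Concretely, for every bounded measurable $g : \textup{Y}^t \rightarrow \mathbb{R}$ I will verify
\begin{equation*}
	\mathbb{E}[1_A(X_t) g(Y_{1:t})] = \mathbb{E}\left[ \pi_t(A) g(Y_{1:t}) \right].
\end{equation*}
The starting point is the joint law of $(X_{1:T},Y_{1:T})$. This is Lemma \ref{lem:cncssm} specialised to a non-causal state space model, so there is no causal component and the observation kernels have densities $\phi_t$ with respect to $\mu_Y$. Let $f(x_{1:T},y_{1:T}) = 1_A(x_t) g(y_{1:t})$. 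The first step is to integrate out the observations $y_{t+1},...,y_T$; each contributes a factor $\int \phi_u(x_u,y)\mu_Y(\textup{d}y)=1$. The second step is to integrate out $x_{t+1},...,x_T$. Because the non-causal Markov chain runs backwards from $T$, the iterated integral $P_t(x_{t+1},\textup{d}x_t)\cdots P_{T-1}(x_T,\textup{d}x_{T-1})\Pi_T(\textup{d}x_T)$ collapses in $x_{t+1:T}$ to $\Pi_t(\textup{d}x_t)$, where $\Pi_t$ is the marginal from Definition \ref{def:bk}. Tonelli applies here because everything is nonnegative or bounded.

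After these two steps I expect to reach
\begin{equation*}
	\mathbb{E}[1_A(X_t) g(Y_{1:t})] = \int_{\textup{Y}^t} g(y_{1:t}) \int_{\textup{X}} 1_A(x_t)\, \beta_t(y_{1:t-1},x_t)\, \phi_t(x_t,y_t)\, \Pi_t(\textup{d}x_t)\, \mu_Y^{\otimes t}(\textup{d}y_{1:t}).
\end{equation*}
To get this, group the integrals over $x_1,...,x_{t-1}$ with the factors $\phi_1,...,\phi_{t-1}$ and the kernels $P_1,...,P_{t-1}$. That grouping is exactly $\beta_t(y_{1:t-1},x_t)$ from Definition \ref{def:ff}. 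By the terminal condition in Definition \ref{def:bk}, $\phi_t(x_t,y_t)\Pi_t(\textup{d}x_t)=\alpha_t(y_t,\textup{d}x_t)$. So the inner integral equals $N_A(y_{1:t}) := \int 1_A(x_t)\beta_t(y_{1:t-1},x_t)\alpha_t(y_t,\textup{d}x_t)$, and $(Y_1,...,Y_t)$ has density $N_{\textup{X}}(y_{1:t})$ with respect to $\mu_Y^{\otimes t}$. Taking $A=\textup{X}$ gives this density. Its positivity and finiteness follow because $\phi_u>0$ and $\beta_t$ is finite, since $\int N_{\textup{X}}\,\textup{d}\mu_Y^{\otimes t} = 1$ makes $N_{\textup{X}}<\infty$ almost everywhere.

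The conclusion is then standard. Write $N_A = \frac{N_A}{N_{\textup{X}}} N_{\textup{X}}$ on the set where $N_{\textup{X}}\in(0,\infty)$; this set has full $\mu_Y^{\otimes t}$-measure for the law of $Y_{1:t}$. Then
\begin{equation*}
	\int g\, N_A \,\textup{d}\mu_Y^{\otimes t} = \mathbb{E}\left[g(Y_{1:t}) \frac{N_A(Y_{1:t})}{N_{\textup{X}}(Y_{1:t})}\right],
\end{equation*}
which is the defining property of conditional probability. Measurability of $y_{1:t}\mapsto N_A(y_{1:t})$ follows from the measurability of kernels composed with measurable functions.

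The main obstacle is bookkeeping rather than ideas. One must check carefully that the backward-ordered product of kernels can be split at time $t$ into the part over $x_{t+1:T}$, which yields $\Pi_t$, and the part over $x_{1:t-1}$, which yields $\beta_t$. One must also justify exchanging the order of integration. I would handle both with Fubini–Tonelli on nonnegative integrands, taking $g \geq 0$ first and extending to general $g$ by linearity.
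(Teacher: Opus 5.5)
Your proof is correct and follows essentially the paper's route. Both arguments obtain the joint law of $(X_{1:t},Y_{1:t})$ from Lemma \ref{lem:cncssm} by integrating out $y_{t+1:T}$ and $x_{t+1:T}$, which produces $\Pi_t$, and then apply a Bayes-formula argument. The difference is cosmetic: the paper applies Lemma \ref{lem:app1} to the whole block $X_{1:t}$ and then sets $A_1=\cdots=A_{t-1}=\textup{X}$, whereas you first collapse $x_{1:t-1}$ into $\beta_t$ and check the defining property of conditional probability directly. Your version also handles explicitly the $\mu_Y^{\otimes t}$-null set where the normaliser could be infinite, which the paper passes over.
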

\noindent
The filtering distribution for a non-causal state space model is thus computed via the forward function in the same vein as the filtering distribution for a causal state space model which is computed via the so-called forward kernel. Note, however, that the former is computed via the forward \textit{function}, whereas the latter is computed via the forward \textit{kernel}.

The smoothing distribution is given in the following proposition.
\begin{proposition} \label{prop:smoothing}
	The smoothing distribution $\pi_{t \mid s}, t < s$ can be computed as
	\begin{equation*}
		\pi_{t \mid s} (A) = \frac{\int_{\textup{X}} 1_{A} (x_t) \beta_t (Y_{1:t-1},x_t) \alpha_t (Y_{t:s},\textup{d}x_t) }{\int_{\textup{X}} \beta_t (Y_{1:t-1},x_t) \alpha_t (Y_{t:s},\textup{d}x_t)}, \quad A \in \mathcal{X}.
	\end{equation*}
\end{proposition}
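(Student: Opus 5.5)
The plan is to identify the joint law of $(X_t, Y_1,\dots,Y_s)$ and read off the conditional law of $X_t$ given $Y_{1:s}$ through Bayes' formula. We use the non-causal analogue of Lemma \ref{lem:cncssm}, obtained by dropping the causal component. For bounded measurable $g$ on $\textup{X}\times\textup{Y}^s$ it gives
\begin{equation*}
	\mathbb{E}[g(X_t,Y_{1:s})] = \int g(x_t,y_{1:s}) \prod_{u=1}^{s} \phi_u(x_u,y_u) \prod_{u=1}^{s-1} P_u(x_{u+1},\textup{d}x_u)\, \Pi_s(\textup{d}x_s)\, \mu_Y^{\otimes s}(\textup{d}y_{1:s}).
\end{equation*}
To get this form, we first integrate out $Y_{s+1:T}$. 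Each observation density integrates to one against $\mu_Y$, so these factors disappear. Next we integrate out $x_{s+1},\dots,x_T$. These variables appear only in the chain $P_s(x_{s+1},\textup{d}x_s)\cdots P_{T-1}(x_T,\textup{d}x_{T-1})\Pi_T(\textup{d}x_T)$, and this chain collapses to $\Pi_s(\textup{d}x_s)$ by the definition of $\Pi_s$ in Definition \ref{def:bk}. The exchange of integrals is justified by Fubini/Tonelli, since all integrands are nonnegative or bounded.

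Next we split the remaining integral at index $t$, with $t<s$. The factors with indices $u<t$, namely $\phi_1\cdots\phi_{t-1}$ and $P_1(x_2,\textup{d}x_1)\cdots P_{t-1}(x_t,\textup{d}x_{t-1})$, depend on $x_t$ only through the innermost kernel. Integrating them out yields exactly $\beta_t(y_{1:t-1},x_t)$ from Definition \ref{def:ff}. The remaining factors are $\phi_t\cdots\phi_s$, $P_t\cdots P_{s-1}$ and $\Pi_s$. Integrated against $1_A(x_t)$, they form the backward kernel $\alpha_t(y_{t:s},\textup{d}x_t)$ of Definition \ref{def:bk}. This step is where care is needed: $\beta_t$ must be a function of $x_t$ that is integrated against the measure $\alpha_t(y_{t:s},\cdot)$ in $x_t$, and the order of integration must be $x_1,\dots,x_{t-1}$ first, then $x_{t+1},\dots,x_s$, then $x_t$. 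I expect this bookkeeping of the nested kernels to be the main obstacle. It is handled by a monotone class argument: verify the identity for product indicators and then extend.

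We therefore obtain, for $h$ bounded measurable on $\textup{Y}^s$,
\begin{equation*}
	\mathbb{E}[1_A(X_t)h(Y_{1:s})] = \int h(y_{1:s}) \left( \int_{\textup{X}} 1_A(x_t)\beta_t(y_{1:t-1},x_t)\alpha_t(y_{t:s},\textup{d}x_t) \right) \mu_Y^{\otimes s}(\textup{d}y_{1:s}).
\end{equation*}
Taking $A=\textup{X}$ shows that the marginal density of $Y_{1:s}$ with respect to $\mu_Y^{\otimes s}$ is the denominator in the statement. This denominator is strictly positive because $\phi_u>0$ and $\beta_t>0$.

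Finally, define $\pi_{t\mid s}(A)$ by the stated ratio, evaluated at $Y_{1:s}$. Then $\mathbb{E}[\pi_{t\mid s}(A)h(Y_{1:s})]$ equals the right-hand side of the last display, because the denominator cancels against the marginal density. Hence $\pi_{t\mid s}(A)$ is a version of $\mathbb{P}(X_t\in A\mid Y_{1:s})$. It is a probability measure in $A$ for each $Y_{1:s}$ and measurable in $Y_{1:s}$, so it is a regular conditional distribution, which proves the proposition.
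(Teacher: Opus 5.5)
Your proposal is correct and follows essentially the paper's route. The paper likewise uses Lemma \ref{lem:cncssm} with Fubini to get the joint law of $(X_{1:s},Y_{1:s})$ with $\Pi_s$, applies the Bayes-type Lemma \ref{lem:app1} to $X_{1:s}$ given $Y_{1:s}$, then sets $A_u=\textup{X}$ for $u\neq t$ and uses Lemma \ref{lem:app2} to regroup into $\beta_t$ and $\alpha_t$; you instead marginalise to $(X_t,Y_{1:s})$ first and verify the conditional-probability identity by hand rather than citing Lemma \ref{lem:app1}, which is an immaterial reordering.
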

\noindent
The smoothing distribution for a non-causal state space model is thus computed via both the backward kernel and the forward function analogous to the smoothing distribution for a causal state space model which is computed via the forward kernel and the so-called backward function.

In practice, the filtering and smoothing distributions must be approximated by, for instance, sequential Monte Carlo methods; see, for instance, \cite{ChopinPapaspiliopoulos2020} for an introduction to sequential Monte Carlo methods. Indeed, the recursion for the (normalised) backward kernel is similar to the forward recursion in Section 5.2.1 in \cite{ChopinPapaspiliopoulos2020}, so the (normalised) backward kernel can be approximated by the methods in Chapter 10 in \cite{ChopinPapaspiliopoulos2020}. Moreover, the one for the forward function is similar to the backward recursion in Section 5.3.1 in \cite{ChopinPapaspiliopoulos2020}, so the forward function can be approximated by the methods in Chapter 12 in \cite{ChopinPapaspiliopoulos2020}. In the end, the filtering and smoothing distributions can then be approximated by combining them.


\subsubsection*{Parameter Inference}

We now briefly discuss the estimation of the non-causal state space model in Equations \eqref{eq:M1}-\eqref{eq:M3} via maximum likelihood estimation.

To this end, let $\{ ((X_t^{\theta},Y_t^{\theta}))_{t=1}^{T} : \theta \in \Theta \}$ be a family of non-causal state space models with transition kernels $P_t^{\theta}$, initial distribution $\Pi_T^{\theta}$, and observation densities $\phi_t^{\theta}$ where $(\Theta,\mathcal{H})$ is a measurable space. Assume that a sample $(y_t)_{t=1}^{T}$ from the non-causal state space model $((X_t^{\theta_0},Y_t^{\theta_0}))_{t=1}^{T}$ is observed. Then, the maximum likelihood estimator $\hat{\theta}_T (y_{1:T})$ of $\theta_0$ is given by
\begin{equation*}
	\hat{\theta}_T (y_{1:T}) = \underset{\theta \in \Theta}{\text{argmax}} \, L_T^{\theta} (y_{1:T}),
\end{equation*}
where $L_T^{\theta} (y_{1:T})$ is the likelihood function given by
\begin{equation*}
	L_T^{\theta} (y_{1:T}) = \alpha_1^{\theta} (y_{1:T},\textup{X})
\end{equation*}
since, by Lemma \ref{lem:cncssm} together with Fubini's theorem,
\begin{equation*}
	\mathbb{E}^{\theta} [ f(Y_1^{\theta},...,Y_T^{\theta}) ] = \int_{\textup{Y}} \cdots \int_{\textup{Y}} \int_{\textup{X}} \cdots \int_{\textup{X}} f(y_1,...,y_T) \prod_{t=1}^{T} \phi_t^{\theta}(x_t,y_t) \prod_{t=1}^{T-1} P_t^{\theta}(x_{t+1},\textup{d}x_t) \Pi_{T}^{\theta} (\textup{d}x_T) \prod_{t=1}^{T} \mu_Y (\textup{d}y_{t})
\end{equation*}
for all $f \in \mathcal{M}_{b}(\textup{Y}^T)$.


\section{The Causal Non-causal Convolution Autoregressive Model} \label{sec:cnccar}

In this section, we consider the causal non-causal convolution autoregressive model by \cite{GourierouxZakoian2017}, see also \cite{GourierouxJasiakTong2021}, in more detail.

\subsection{Structure}

Recall that the causal non-causal convolution autoregressive model $(Y_t)_{t \in \mathbb{Z}}$ is given by
\begin{equation}
	Y_t = X_t^F + X_t^B, \label{eq:cnccar1}
\end{equation}
where $(X_t^F)_{t \in \mathbb{Z}}$ is given by
\begin{equation}
	X_t^F = \mu_F + \rho_F X_{t-1}^F + \varepsilon_t^F, \quad \varepsilon_t^F \overset{i.i.d.}{\sim} \mathcal{N} (0,\sigma_F^2), \label{eq:cnccar2}
\end{equation}
with $| \rho_F | < 1$ and $(X_t^B)_{t \in \mathbb{Z}}$ is given by
\begin{equation}
	X_t^B = \rho_B X_{t+1}^B + \varepsilon_t^B, \quad \varepsilon_t^B \overset{i.i.d.}{\sim} \mathcal{S} (\alpha,0,\sigma_B,\beta), \label{eq:cnccar3}
\end{equation}
with $| \rho_B | < 1$ where $\varepsilon_s^F$ and $\varepsilon_t^B$ are independent for all $s \in \mathbb{Z}$ and $t \in \mathbb{Z}$.

First, the stationary and ergodic solution to Equation \eqref{eq:cnccar2} is given by
\begin{equation*}
	X_t^F = \frac{\mu_F}{1-\rho_F} + \sum_{i=0}^{\infty} \rho_F^i \varepsilon_{t-i}^F, 
\end{equation*}
and the distribution of $X_t^F$ is given by $X_t^F \sim \mathcal{N} \left( \mu_{X^F},\sigma_{X^F}^{2} \right)$ where $\mu_{X^F} = \frac{\mu_F}{1-\rho_F}$ and $\sigma_{X^F}^{2} = \frac{\sigma_F^2}{1-\rho_F^2}$.

Moreover, by Proposition 13.3.1 in \cite{BrockwellDavis1991}, the stationary and ergodic solution to Equation \eqref{eq:cnccar3} is given by
\begin{equation*}
	X_t^B = \sum_{i=0}^{\infty} \rho_B^i \varepsilon_{t+i}^B,
\end{equation*}
and, by Proposition 1 in \cite{GourierouxZakoian2017}, the distribution of $X_t^B$ is given by $X_t^B \sim \mathcal{S} \left( \alpha,\mu_{X^B},\sigma_{X^B},\beta_{X^B} \right)$ where $\mu_{X^B} = 0$ if $\alpha \neq 1$ and $\mu_{X^B} = - \beta \sigma_B \frac{2}{\pi} \frac{\rho_B \log |\rho_B|}{(1-\rho_{B})^2}$ if $\alpha = 1$, $\sigma_{X^B} = \frac{\sigma_B}{(1-|\rho_{B}|^{\alpha})^{1/\alpha}}$, and $\beta_{X^B}  = \frac{1-|\rho_{B}|^{\alpha}}{1-\textup{sign}(\rho_{B}) |\rho_{B}|^{\alpha}} \beta$.

The conditional expectation $\mathbb{E} [X_t^B \mid X_{t+1}^B]$ does not exist if $\alpha \leq 1$ and neither does the unconditional expectation $\mathbb{E} [X_t^B]$. However, by Proposition 2 in \cite{GourierouxZakoian2017}, the conditional expectation $\mathbb{E} [X_t^B \mid X_{t-1}^B]$ exists and is given in the following proposition, which is a generalisation of Proposition 3 in \cite{GourierouxZakoian2017}.\footnote{An analogous result for the causal autoregressive model $$Y_t = \phi Y_{t-1} + \varepsilon_t, \quad \varepsilon_t \overset{i.i.d.}{\sim} \mathcal{S} (\alpha,0,\sigma,\beta),$$ where $| \phi | < 1$ has been proved by \cite{CambanisFakhre-Zakeri1995}.}
\begin{proposition} \label{prop:Proposition3}
	If $\rho_B \neq 0$ and $\beta = 0$ or if $\rho_{B} \in (0,1)$ and $\alpha \neq 1$, then
	\begin{equation*}
		\mathbb{E} \left[ X_t^B \mid X_{t-1}^B \right] = \textup{sign}(\rho_B) |\rho_B|^{\alpha-1} X_{t-1}^B.
	\end{equation*}
\end{proposition}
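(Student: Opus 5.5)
Since $X_{t-1}^B = \rho_B X_t^B + \varepsilon_{t-1}^B$ with $\varepsilon_{t-1}^B$ independent of $X_t^B$, the pair $(X_t^B, X_{t-1}^B)$ is a bivariate stable vector. The plan is therefore the standard argument of \cite{CambanisFakhre-Zakeri1995} and Proposition 3 in \cite{GourierouxZakoian2017}: identify the conditional expectation via characteristic functions. Write $X := X_t^B$ and $\varepsilon := \varepsilon_{t-1}^B$, so that $W := X_{t-1}^B = \rho_B X + \varepsilon$.

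Existence is guaranteed by Proposition 2 in \cite{GourierouxZakoian2017}, which we take as given. To identify $g(W) := \mathbb{E}[X \mid W]$ it then suffices to show that
\begin{equation*}
	\mathbb{E}[X e^{\textup{i}uW}] = c\, \mathbb{E}[W e^{\textup{i}uW}] \quad \text{for all } u \neq 0, \qquad c := \textup{sign}(\rho_B)|\rho_B|^{\alpha-1}.
\end{equation*}
This is the statement $\mathbb{E}[(X - cW)e^{\textup{i}uW}] = 0$, and once it holds, uniqueness of Fourier transforms of the finite signed measure $\mathbb{E}[(X-cW)1_{\{W\in\cdot\}}]$ gives $g(W) = cW$ almost surely. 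The difficulty is that $X$ and $W$ need not be integrable when $\alpha \le 1$. I would therefore phrase everything through the joint characteristic function $\psi(s,u) := \mathbb{E}[e^{\textup{i}(sX + uW)}]$ and the formal identity
\begin{equation*}
	\mathbb{E}[X e^{\textup{i}uW}] = -\textup{i}\,\partial_s \psi(s,u)\big|_{s=0}.
\end{equation*}
This identity is justified only when $X e^{\textup{i}uW}$ is integrable, which is precisely what Proposition 2 of \cite{GourierouxZakoian2017} provides (it is the source of existence). Failing that, I would fall back on a truncation argument, replacing $X$ by $X 1_{\{|X| \le n\}}$.

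The computation itself is explicit. By independence, $\psi(s,u) = \varphi_X(s + u\rho_B)\,\varphi_\varepsilon(u)$. The marginal $\varphi_X$ is given by the stable parameters $(\alpha, \mu_{X^B}, \sigma_{X^B}, \beta_{X^B})$ stated above, and $\varphi_\varepsilon$ by $(\alpha, 0, \sigma_B, \beta)$. Differentiating in $s$ at $s=0$ gives $\varphi_X'(u\rho_B)\,\varphi_\varepsilon(u)$. Likewise, $\mathbb{E}[W e^{\textup{i}uW}] = -\textup{i}\,\varphi_W'(u)$ with $\varphi_W(u) = \varphi_X(\rho_B u)\varphi_\varepsilon(u)$, which is stable with scale $\sigma_{X^B}$ by stationarity. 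For $u\neq 0$, write $\log\varphi_X(v) = -\sigma_{X^B}^\alpha |v|^\alpha(1 - \textup{i}\beta_{X^B}\textup{sign}(v)\Phi)$ in the case $\alpha\ne1$, and use that $\sigma_{X^B}^\alpha(1-|\rho_B|^\alpha) = \sigma_B^\alpha$. The two sides then reduce to comparing
\begin{equation*}
	\partial_v \log\varphi_X(v)\big|_{v=\rho_B u} \quad \text{with} \quad c \cdot \partial_u \log\varphi_W(u).
\end{equation*}
Since $|v|^\alpha$ is homogeneous, $\partial_v|v|^\alpha$ evaluated at $\rho_B u$ equals $\textup{sign}(\rho_B)|\rho_B|^{\alpha-1}\,\partial_u |u|^\alpha$. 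This produces exactly the factor $c$, provided the skewness terms also match.

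Matching the skewness terms is where the hypotheses enter. If $\beta=0$, every term is symmetric and the identity holds for any $\rho_B\ne0$. For $\alpha=1$ one must also carry the $\log|t|$ term and the location $\mu_{X^B}$, but these vanish when $\beta=0$. If instead $\rho_B\in(0,1)$ and $\alpha\ne1$, then $\textup{sign}(\rho_B u)=\textup{sign}(u)$ and $\beta_{X^B}=\beta$. Hence $\varphi_W$ has the same $\beta$ as $\varphi_X$, and the skewness factor $(1-\textup{i}\beta\,\textup{sign}\,\Phi)$ factors out identically on both sides. The step I expect to be the main obstacle is not this algebra but the analytic justification when $\alpha\le1$: differentiating under the expectation and applying Fourier uniqueness require integrability of $X e^{\textup{i}uW}$ and of $W e^{\textup{i}uW}$, and the latter can fail. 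I would try to avoid this by working directly with the conditional density of $X$ given $W$, or by verifying $\mathbb{E}[(X-cW)h(W)]=0$ for a class of test functions $h$ that vanish fast enough at infinity, reusing the integrability of the conditional moment from Proposition 2 in \cite{GourierouxZakoian2017}.
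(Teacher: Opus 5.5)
\textbf{Gap: the case $\alpha\le1$.} Your algebraic core matches the paper's. Stationarity gives $\varphi_{X^B}(\rho_B u)\varphi_{\varepsilon^B}(u)=\varphi_{X^B}(u)$, and homogeneity of $|v|^{\alpha}$ then gives $\varphi_{X^B}'(\rho_B u)\varphi_{\varepsilon^B}(u)=\textup{sign}(\rho_B)|\rho_B|^{\alpha-1}\varphi_{X^B}'(u)$ for $u\neq0$; this is exactly where the hypotheses enter. For $1<\alpha\le2$ your global Fourier-uniqueness argument is then complete, and it needs neither densities nor Proposition 2 of \cite{GourierouxZakoian2017}.

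The statement also covers $\alpha\le1$, and that is the case the paper actually needs: the motivating example has $\alpha<1$, $\beta=1$, $\rho_B\in(0,1)$. There $\mathbb{E}|X_t^B|=\infty$. Since $|Xe^{\textup{i}uW}|=|X|$, the quantities $\mathbb{E}[Xe^{\textup{i}uW}]$ and $\mathbb{E}[We^{\textup{i}uW}]$, on which your identification rests, are undefined. Your claim that Proposition 2 of \cite{GourierouxZakoian2017} supplies this integrability is incorrect. That result only gives $\mathbb{E}[|X_t^B|\mid X_{t-1}^B]<\infty$ almost surely, and this integrates to $\mathbb{E}|X_t^B|=\infty$. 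Consequently $B\mapsto\mathbb{E}[(X-cW)1_{\{W\in B\}}]$ is not a finite signed measure, and Fourier uniqueness cannot be invoked. The proposition has to be read as a statement about the mean of the regular conditional distribution.

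Truncating to $X1_{\{|X|\le n\}}$ does not rescue this. The truncated transforms are no longer given in closed form by $\varphi_{X^B}$. Passing $n\to\infty$ inside a uniqueness argument would also need uniform control, which fails because $\mathbb{E}|X1_{\{|X|\le n\}}|\to\infty$.

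\textbf{The fix: localize at a fixed conditioning value, as the paper does.} Fix $c$ and set $\mu_c(\textup{d}x)=f_{X^B}(x)f_{\varepsilon^B}(c-\rho_B x)\lambda(\textup{d}x)$. This measure has total mass $f_{X^B}(c)$ and a finite first moment, and $\mathbb{E}[X_t^B\mid X_{t-1}^B=c]=\int x\,\mu_c(\textup{d}x)/f_{X^B}(c)$. Write $f_{\varepsilon^B}$ by Fourier inversion, which is legitimate since $\varphi_{\varepsilon^B}\in L^1$, and apply Fubini to get $\int e^{\textup{i}tx}\mu_c(\textup{d}x)=\frac{1}{2\pi}\int e^{-\textup{i}uc}\varphi_{X^B}(t+\rho_B u)\varphi_{\varepsilon^B}(u)\lambda(\textup{d}u)$.

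Now differentiate at $t=0$. On the left, use the finite conditional first moment; on the right, use $\varphi_{X^B}'\in L^1$. Then apply your homogeneity identity and integrate by parts: $\int e^{-\textup{i}uc}\varphi_{X^B}'(u)\lambda(\textup{d}u)=\textup{i}c\int e^{-\textup{i}uc}\varphi_{X^B}(u)\lambda(\textup{d}u)=2\pi\textup{i}c\,f_{X^B}(c)$. This yields $\int x\,\mu_c(\textup{d}x)=\textup{sign}(\rho_B)|\rho_B|^{\alpha-1}c\,f_{X^B}(c)$.

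This route only requires the conditional first moment to be finite, never $\mathbb{E}|X_t^B|<\infty$. Your ``conditional density'' fallback points in this direction. Without this inversion of the innovation density at fixed $c$, however, the case $\alpha\le1$ remains unproved.
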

\noindent
It follows from Proposition \ref{prop:Proposition3} and the law of iterated expectations that
\begin{equation*}
	\mathbb{E} \left[ X_{t+h}^B \mid X_{t-1}^B \right] = \textup{sign}(\rho_B)^{(h+1)} |\rho_B|^{(\alpha-1)(h+1)} X_{t-1}^B
\end{equation*}
for all $h \in \mathbb{N}$.\footnote{There is a typo in Proposition 3 in \cite{GourierouxZakoian2017}.}

It is well-known that if $\alpha = 2$, then the conditional expectation
\begin{equation*}
	\mathbb{E} \left[ X_t^B \mid X_{t-1}^B \right] = \rho_B X_{t-1}^B
\end{equation*}
is the best predictor in mean square and, since it is linear, it is also the best linear predictor in mean square. In the case where $\alpha < 2$, however, the mean square error criterion is no longer applicable since the unconditional expectation $\mathbb{E} [(X_t^B)^2]$ does not exist if $\alpha < 2$. In this case, a natural criterion to use is the scale since if $X \sim \mathcal{S} (\alpha,0,\sigma(X),\beta)$ with $\alpha < 2$ and $\beta = 0$ if $\alpha = 1$, then for all $0 < p < \alpha$, there exists a $c(\alpha,\beta,p) > 0$ such that
\begin{equation*}
	\mathbb{E} \left[ |X|^p \right] = c(\alpha,\beta,p) \sigma^p(X),
\end{equation*}
see, for instance, \cite{SamorodnitskyTaqqu1994}. In the following, a linear predictor
\begin{equation*}
	L(X_t^B \mid X_{t-1}^B) = a X_{t-1}^B
\end{equation*}
is thus called a best linear predictor if it minimises
\begin{equation*}
	\sigma(X_t^B-L(X_t^B \mid X_{t-1}^B)).
\end{equation*}
The best linear predictor(s) $L^{*}(X_t^B \mid X_{t-1}^B)$ is (are) given in the following proposition.\footnote{An analogous result for the causal autoregressive model $$Y_t = \phi Y_{t-1} + \varepsilon_t, \quad \varepsilon_t \overset{i.i.d.}{\sim} \mathcal{S} (\alpha,0,\sigma,0),$$ where $| \phi | < 1$ has been proved by \cite{CambanisFakhre-Zakeri1995}.}
\begin{proposition} \label{prop:BLP}
	If $\rho_B \neq 0$ and $\beta = 0$ or if $\rho_{B} \in (0,1)$ and $\alpha \neq 1$, then the best linear predictor(s) $L^{*}(X_t^B \mid X_{t-1}^B)$ is (are) given by
	\begin{equation*}
		L^{*}(X_t^B \mid X_{t-1}^B) = a^{*} X_{t-1}^B,
	\end{equation*}
	where
	\begin{equation*}
		a^{*} = \frac{\textup{sign}(\rho_B)|\rho_B|^{1/(\alpha-1)}}{|\rho_B|^{\alpha/(\alpha-1)}+(1-|\rho_B|^{\alpha})^{1/(\alpha-1)}}, \quad \textup{if} \quad 1 < \alpha \leq 2,
	\end{equation*}
	and
	\begin{equation*}
		a^{*}
		\begin{cases}
			= 0, & \textup{if} \quad 0 < |\rho_B|^{\alpha} < 1/2 \ \textup{and} \ 0 < \alpha \leq 1, \\
			= 1/\rho_B, & \textup{if} \quad 1/2 < |\rho_B|^{\alpha} < 1 \ \textup{and} \ 0 < \alpha \leq 1, \\
			\in \{0,1/\rho_B\}, & \textup{if} \quad |\rho_B|^{\alpha} = 1/2 \ \textup{and} \ 0 < \alpha < 1, \\
			\in [0,2], & \textup{if} \quad \rho_B = 1/2 \ \textup{and} \ \alpha = 1,  \\
			\in [-2,0], & \textup{if} \quad \rho_B = -1/2 \ \textup{and} \ \alpha = 1.
		\end{cases}
	\end{equation*}
\end{proposition}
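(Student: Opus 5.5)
We write $X_t^B - aX_{t-1}^B$ as a linear combination of the i.i.d. stable innovations and read off its scale from the characteristic function. Using the stationary solution $X_t^B = \sum_{i\ge 0}\rho_B^i\varepsilon_{t+i}^B$, we have $X_{t-1}^B = \varepsilon_{t-1}^B + \rho_B X_t^B$. Hence
\begin{equation*}
	X_t^B - aX_{t-1}^B = (1-a\rho_B)X_t^B - a\,\varepsilon_{t-1}^B ,
\end{equation*}
where $X_t^B$ and $\varepsilon_{t-1}^B$ are independent.

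A linear combination $c_1 Z_1 + c_2 Z_2$ of independent stable variables with common index $\alpha$ is stable with scale given by $\sigma^\alpha = |c_1|^\alpha\sigma_1^\alpha + |c_2|^\alpha\sigma_2^\alpha$. This holds for any skewness, since the modulus of the characteristic function only involves $|t|^\alpha$ times the scales; see \cite{SamorodnitskyTaqqu1994}. We have $\sigma_{X^B}^\alpha = \sigma_B^\alpha/(1-|\rho_B|^\alpha)$, so minimising the scale is equivalent to minimising
\begin{equation*}
	g(a) = \frac{|1-a\rho_B|^\alpha}{1-|\rho_B|^\alpha} + |a|^\alpha
\end{equation*}
over $a\in\mathbb{R}$, after dropping the positive factor $\sigma_B^\alpha$. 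The hypotheses mirror those of Proposition \ref{prop:Proposition3}. They are only needed so that the predictor is compared in a setting where the scale criterion is meaningful; the minimisation itself is purely deterministic.

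Next we substitute $b=a\rho_B$ and write $r=|\rho_B|^\alpha\in(0,1)$. Up to the factor $1/(1-r)$, the problem becomes minimising
\begin{equation*}
	h(b) = |1-b|^\alpha + \frac{1-r}{r}\,|b|^\alpha .
\end{equation*}
Any minimiser lies in $[0,1]$, because moving $b$ into $[0,1]$ decreases both terms.

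For $1<\alpha\le 2$, $h$ is strictly convex and differentiable. The first-order condition on $(0,1)$ is $(1-b)^{\alpha-1} = \frac{1-r}{r}b^{\alpha-1}$, which gives
\begin{equation*}
	b^* = \frac{r^{1/(\alpha-1)}}{r^{1/(\alpha-1)}+(1-r)^{1/(\alpha-1)}} .
\end{equation*}
Then $a^* = b^*/\rho_B$. Multiplying numerator and denominator by $|\rho_B|^{-1}$ and using $r^{1/(\alpha-1)} = |\rho_B|^{\alpha/(\alpha-1)}$ yields the stated formula.

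For $0<\alpha\le 1$, the function $b\mapsto|1-b|^\alpha + c|b|^\alpha$ is concave on $[0,1]$ (for $\alpha=1$ it is linear). The minimum is therefore attained at an endpoint: $h(0)=1$ and $h(1)=(1-r)/r$. Comparing these gives $b^*=0$ if $r<1/2$ and $b^*=1$, i.e. $a^*=1/\rho_B$, if $r>1/2$. When $r=1/2$ with $\alpha<1$, strict concavity makes both endpoints minimisers and nothing in between. When $r=1/2$ with $\alpha=1$, $h$ is constant on $[0,1]$, so every $b\in[0,1]$ is a minimiser. This gives $a\in[0,2]$ for $\rho_B=1/2$ and $a\in[-2,0]$ for $\rho_B=-1/2$.

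The main obstacle is justifying the scale formula in the skewed case $\alpha\neq1$, $\beta\neq0$. The skewness only affects the imaginary part of the log characteristic function, so the real part still gives $\sigma^\alpha$ additively. Ensuring that for $\alpha=1$ the logarithmic term does not spoil this is exactly why $\beta=0$ is imposed there.
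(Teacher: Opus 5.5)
Your proof is correct and follows the paper's argument: the same decomposition $X_t^B - aX_{t-1}^B = (1-a\rho_B)X_t^B - a\varepsilon_{t-1}^B$, additivity of $\sigma^{\alpha}$ over independent stable summands, and the convex/linear/concave split in $\alpha$, with your substitution $b = a\rho_B$ merging the paper's separate treatments of $\rho_B \in (0,1)$ and $\rho_B \in (-1,0)$. One small correction to your closing remark: for $\alpha = 1$ the logarithmic term also enters only the imaginary part of the log characteristic function, so it shifts the location but not the scale, and the scale formula holds for every $\beta$; the restriction $\beta = 0$ is needed only to keep the location at zero, so that the scale criterion is tied to $\mathbb{E}[|X|^p]$ as you noted earlier.
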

\noindent
Note that if $\alpha < 2$, then the conditional expectation $\mathbb{E} \left[ X_t^B \mid X_{t-1}^B \right]$ in Proposition \ref{prop:Proposition3} is different from the best linear predictor(s) 
$L^{*}(X_t^B \mid X_{t-1}^B)$ in Proposition \ref{prop:BLP}.

The causal non-causal convolution autoregressive model $(Y_t)_{t \in \mathbb{Z}}$ is thus stationary and ergodic since $(X_t^F)_{t \in \mathbb{Z}}$ and $(X_t^B)_{t \in \mathbb{Z}}$ are, but, in general, the distribution of $Y_t$ is not known although those of $X_t^F$ and $X_t^B$ are. The characteristic function of $Y_{t-p:t} := (Y_t,Y_{t-1},...,Y_{t-p})^{\prime}, p \in \mathbb{N}_0$ is, however, given in the following proposition.
\begin{proposition} \label{prop:cf}
	The characteristic function of $Y_{t-p:t}$ is given by
	\begin{equation*}
		\varphi_Y^p (u) = \varphi_{X^F}^p (u) \varphi_{X^B}^p (u), \quad u \in \mathbb{R}^{p+1},
	\end{equation*}
	where
	\begin{equation*}
		\varphi_{X^F}^p (u) = \exp \left( \textup{i} \mu_{X^F} \sum_{i=0}^{p} u_i - \frac{\sigma_F^2 \sum_{j=0}^{p-1} \left( \sum_{i=0}^{j} u_i \rho_F^{j-i} \right)^2}{2} - \frac{\sigma_{X^F}^{2} \left( \sum_{i=0}^{p} u_i \rho_F^{p-i} \right)^2}{2} \right)
	\end{equation*}
	and
	\begin{align*}
		\varphi_{X^B}^p (u) &= \exp \left\{ \textup{i} \mu_{X^B} \sum_{i=0}^{p} u_{p-i} \rho_B^{p-i} - \sigma_B^{\alpha} \sum_{j=0}^{p-1} \left| \sum_{i=0}^{j} u_{p-i} \rho_B^{j-i} \right|^{\alpha} \left( 1 - \textup{i} \beta \textup{sign} \left( \sum_{i=0}^{j} u_{p-i} \rho_B^{j-i} \right) \Phi^j \right) \right. \\
		& \quad \left. - \sigma_{X^B}^{\alpha} \left| \sum_{i=0}^{p} u_{p-i} \rho_B^{p-i} \right|^{\alpha} \left( 1 - \textup{i} \beta_{X^B} \textup{sign} \left( \sum_{i=0}^{p} u_{p-i} \rho_B^{p-i} \right) \Phi^p \right) \right\}
	\end{align*}
	with $\Phi^j = \tan \left( \frac{\pi \alpha}{2} \right)$ if $\alpha \neq 1$ and $\Phi^j = - \frac{2}{\pi} \log \left( \left| \sum_{i=0}^{j} u_{p-i} \rho_B^{j-i} \right| \right)$ if $\alpha = 1$ where, by convention, $\sum_{j=0}^{-1} \cdot = 0$.
\end{proposition}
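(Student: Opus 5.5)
The plan is to use the independence of $(X_t^F)_{t \in \mathbb{Z}}$ and $(X_t^B)_{t \in \mathbb{Z}}$ to factor the characteristic function. Since $Y_{t-p:t} = X^F_{t-p:t} + X^B_{t-p:t}$ with independent summands, $\varphi_Y^p(u) = \mathbb{E}[\exp(\textup{i} u' X^F_{t-p:t})]\,\mathbb{E}[\exp(\textup{i} u' X^B_{t-p:t})]$, where $u_i$ multiplies the component with time index $t-i$. Each factor is then computed separately. In both cases I will write the linear form $\sum_{i=0}^p u_i X_{t-i}$ as a linear combination of mutually independent random variables with known distributions, and then evaluate each marginal characteristic function at its coefficient.

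For the Gaussian component, I anchor at the earliest date $t-p$. Iterating Equation \eqref{eq:cnccar2} gives
\begin{equation*}
X_{t-i}^F = \mu_{X^F} + \rho_F^{p-i}(X_{t-p}^F - \mu_{X^F}) + \sum_{k=0}^{p-i-1} \rho_F^k \varepsilon^F_{t-i-k}.
\end{equation*}
Here $X_{t-p}^F$ depends only on innovations up to time $t-p$, so it is independent of $\varepsilon^F_{t-p+1},\dots,\varepsilon^F_t$. Collecting terms, the constant contributes $\textup{i}\mu_{X^F}\sum_i u_i$, and $X_{t-p}^F - \mu_{X^F} \sim \mathcal{N}(0,\sigma^2_{X^F})$ has coefficient $\sum_{i=0}^p u_i \rho_F^{p-i}$. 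Each innovation $\varepsilon^F_{t-m}$ with $0 \le m \le p-1$ receives the coefficient $\sum_{i=0}^{m} u_i \rho_F^{m-i}$. Taking the product of the Gaussian characteristic functions yields $\varphi^p_{X^F}$, with $j=m$.

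For the stable component, I anchor instead at the latest date $t$, since the process is non-causal. Iterating Equation \eqref{eq:cnccar3} forward gives
\begin{equation*}
X^B_{t-i} = \rho_B^{i} X_t^B + \sum_{k=0}^{i-1} \rho_B^{k} \varepsilon^B_{t-i+k}.
\end{equation*}
Here $X_t^B = \sum_{l\ge0}\rho_B^l \varepsilon^B_{t+l}$ is independent of $\varepsilon^B_{t-p},\dots,\varepsilon^B_{t-1}$. The coefficient of $X_t^B$ is $\sum_i u_i\rho_B^i = \sum_{i=0}^p u_{p-i}\rho_B^{p-i}$. After reindexing, the innovation $\varepsilon^B_{t-p+j}$ with $0\le j\le p-1$ receives the coefficient $\sum_{i=0}^{j} u_{p-i}\rho_B^{j-i}$. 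I then apply the stable characteristic function $\exp(\textup{i}\mu c - \sigma^\alpha|c|^\alpha(1-\textup{i}\beta\,\textup{sign}(c)\Phi(c)))$ at each coefficient $c$. For the innovations I use the parameters $(\alpha,0,\sigma_B,\beta)$. For $X_t^B$ I use $(\alpha,\mu_{X^B},\sigma_{X^B},\beta_{X^B})$, taken from Proposition 1 in \cite{GourierouxZakoian2017} as recalled above. Evaluating the characteristic function directly at $c$ means the case $\alpha=1$, where $\Phi$ depends on $|c|$, needs no separate rescaling argument.

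The main obstacle is the index bookkeeping. I need to match the innovation coefficients to the stated double sums, with $u_{p-i}$ and the exponent $j-i$ in the stable part versus $u_i$ and $j-i$ in the Gaussian part, and to check the convention $\sum_{j=0}^{-1}=0$ when $p=0$. I will verify the reindexing explicitly for $p=1$ and $p=2$ before writing the general induction-free rearrangement of the finite double sum.
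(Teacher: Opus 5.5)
Your proposal is correct and follows essentially the paper's proof: factor by independence, regroup $\sum_{i=0}^{p} u_i X_{t-i}$ into finitely many independent innovations with exactly the coefficients $\sum_{i=0}^{j} u_i \rho_F^{j-i}$ and $\sum_{i=0}^{j} u_{p-i}\rho_B^{j-i}$, and multiply the characteristic functions. The only difference is the tail. You collapse it into $X_{t-p}^F$ and $X_t^B$ and quote their marginal laws. The paper instead keeps it as $\sum_{j\ge 0}\rho_F^j\varepsilon^F_{t-p-j}$ and $\sum_{j\ge 0}\rho_B^j\varepsilon^B_{t+j}$ and evaluates the infinite product of innovation characteristic functions, which re-derives $\sigma_{X^F}^2$, $\sigma_{X^B}$, $\beta_{X^B}$ and, for $\alpha=1$, $\mu_{X^B}$.
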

\noindent
Proposition \ref{prop:cf} shows that the characteristic function of $Y_{t-p:t}$ is equal to the product of the characteristic functions of $X_{t-p:t}^F$ and $X_{t-p:t}^B$. The former is equal to the characteristic function of $X_{t-p}^F$ evaluated at $\sum_{i=0}^{p} u_i \rho_F^{p-i}$ multiplied by a term that describes the dependence between $X_{t}^F,...,X_{t-p}^F$.\footnote{To see this, note that $$\varphi_{X^F}^p (u) = \exp \left( \textup{i} \mu_{X^F} \sum_{i=0}^{p} u_i \rho_F^{p-i} + \textup{i} \mu_{X^F} \sum_{i=0}^{p} u_i (1-\rho_F^{p-i}) - \frac{\sigma_F^2 \sum_{j=0}^{p-1} \left( \sum_{i=0}^{j} u_i \rho_F^{j-i} \right)^2}{2} - \frac{\sigma_{X^F}^{2} \left( \sum_{i=0}^{p} u_i \rho_F^{p-i} \right)^2}{2} \right).$$} Similarly, the latter is equal to the characteristic function of $X_{t}^B$ evaluated at $\sum_{i=0}^{p} u_{p-i} \rho_B^{p-i}$ multiplied by a term that describes the dependence between $X_{t}^B,...,X_{t-p}^B$. 

The characteristic function of $Y_t$ follows directly from Proposition \ref{prop:cf}.
\begin{corollary} \label{coro:cf}
	The characteristic function of $Y_t$ is given by
	\begin{equation*}
		\varphi_Y (u) = \varphi_{X^F} (u) \varphi_{X^B} (u), \quad u \in \mathbb{R},
	\end{equation*}
	where
	\begin{equation*}
		\varphi_{X^F} (u) = \exp \left( \textup{i} \mu_{X^F} u - \frac{\sigma_{X^F}^{2} u^2}{2} \right)
	\end{equation*}
	and
	\begin{equation*}
		\varphi_{X^B} (u) = \exp \left( \textup{i} \mu_{X^B} u - \sigma_{X^B}^{\alpha} \left| u \right|^{\alpha} \left( 1 - \textup{i} \beta_{X^B} \textup{sign} (u) \Phi \right) \right).
	\end{equation*}
\end{corollary}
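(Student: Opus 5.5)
The corollary follows by specialising Proposition \ref{prop:cf} to $p = 0$, so that $Y_{t-p:t} = Y_t$ and $u = u_0 \in \mathbb{R}$. I will substitute $p=0$ into the two factors $\varphi_{X^F}^0$ and $\varphi_{X^B}^0$ separately, using the convention $\sum_{j=0}^{-1} \cdot = 0$ to remove the dependence terms.

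\emph{Causal factor.} With $p=0$, the term $\sigma_F^2 \sum_{j=0}^{-1}(\cdot)^2/2$ vanishes. The mean term becomes $\textup{i}\mu_{X^F} u_0$ and the last term becomes $\sigma_{X^F}^2 (u_0 \rho_F^0)^2/2 = \sigma_{X^F}^2 u_0^2/2$. This gives exactly $\varphi_{X^F}(u)$.

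\emph{Non-causal factor.} Again the middle sum over $j$ is empty and drops out. The mean term is $\textup{i}\mu_{X^B} u_0 \rho_B^0 = \textup{i}\mu_{X^B} u$. The remaining term is $\sigma_{X^B}^\alpha |u|^\alpha (1 - \textup{i}\beta_{X^B}\textup{sign}(u)\Phi^0)$. Here $\Phi^0 = \tan(\pi\alpha/2)$ if $\alpha \neq 1$, and $\Phi^0 = -\frac{2}{\pi}\log|u|$ if $\alpha = 1$, which coincides with $\Phi$ as defined in Section \ref{sec:motivation}. This gives $\varphi_{X^B}(u)$.

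As a cross-check, the result can also be obtained directly. Independence of $X_t^F$ and $X_t^B$, which follows from the independence of the two noise sequences, gives $\varphi_Y = \varphi_{X^F}\varphi_{X^B}$. The stated marginal laws $X_t^F \sim \mathcal{N}(\mu_{X^F},\sigma_{X^F}^2)$ and $X_t^B \sim \mathcal{S}(\alpha,\mu_{X^B},\sigma_{X^B},\beta_{X^B})$ then yield the two factors immediately.

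There is no real obstacle. The only point needing care is to read the convention for the empty sum and the definition of $\Phi^0$ correctly in the case $\alpha = 1$.
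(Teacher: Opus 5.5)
Your proposal is correct and takes the same approach as the paper, which obtains the corollary directly from Proposition \ref{prop:cf} with $p=0$: the empty sums vanish, $\rho^0 = 1$, and $\Phi^0$ reduces to $\Phi$ evaluated at $u$. Your cross-check, which combines the independence of $X_t^F$ and $X_t^B$ with their stated marginal laws, is also valid.
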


An exemption, besides the trivial one when $\alpha = 2$ and $\beta = 0$ where the stable distribution reduces to the normal distribution, is when $\alpha = 1$ and $\beta = 0$ where the stable distribution reduces to the Cauchy distribution.
\begin{proposition} \label{prop:Voigt}
	If $\alpha = 1$ and $\beta = 0$, then $Y_t$ is Voigt distributed,
	\begin{equation}
		\varphi_{Y} (u) = \exp \left( \textup{i} \mu_{X^F} u - \frac{\sigma_{X^F}^{2} u^2}{2} - \sigma_{X^B} \left| u \right| \right), \quad u \in \mathbb{R}, \label{eq:Voigt1}
	\end{equation}
	and
	\begin{equation}
		f_{Y} (y) = \frac{1}{\sqrt{2\pi} \sigma_{X_F}} \textup{Re} \left( w \left( \frac{y-\mu_{X^F}+\textup{i}\sigma_{X_B}}{\sqrt{2}\sigma_{X_F}} \right) \right), \quad y \in \mathbb{R}, \label{eq:Voigt2}
	\end{equation}
	where $w (z) = \exp(-z^2) \textup{erfc} (-\textup{i}z), z \in \mathbb{C}$ is the Faddeeva function and $\textup{erfc} (z) = \frac{2}{\sqrt{\pi}} \int_{z}^{\infty} \exp(-t^2) \textup{d}t, z \in \mathbb{C}$ is the complementary error function.
\end{proposition}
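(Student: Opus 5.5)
The plan is to derive \eqref{eq:Voigt1} directly from Corollary \ref{coro:cf}, and then to obtain \eqref{eq:Voigt2} by recognising $Y_t$ as the convolution of a normal and a Cauchy law and applying the standard integral representation of the Faddeeva function.

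For \eqref{eq:Voigt1}, I substitute $\alpha=1$ and $\beta=0$ into Corollary \ref{coro:cf}. Since $\beta=0$, we have $\mu_{X^B}=-\beta\sigma_B\frac{2}{\pi}\frac{\rho_B\log|\rho_B|}{(1-\rho_B)^2}=0$ and $\beta_{X^B}=0$. The stable factor therefore reduces to $\exp(-\sigma_{X^B}|u|)$. The $\alpha=1$ logarithmic term in $\Phi$ is multiplied by $\beta_{X^B}=0$, so it causes no difficulty. Multiplying by the Gaussian factor gives \eqref{eq:Voigt1}. By independence of $X_t^F$ and $X_t^B$, this shows that $Y_t$ is the sum of an independent $\mathcal{N}(\mu_{X^F},\sigma_{X^F}^2)$ variable and a $\mathcal{C}(0,\sigma_{X^B})$ variable, i.e.\ it has a Voigt distribution.

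For the density \eqref{eq:Voigt2}, I write $f_Y$ as the convolution
\begin{equation*}
f_Y(y)=\int_{\mathbb{R}} \frac{1}{\sigma_{X^F}} f_{\mathcal{N}}\left(\frac{y-\mu_{X^F}-x}{\sigma_{X^F}}\right)\frac{1}{\pi}\frac{\sigma_{X^B}}{\sigma_{X^B}^2+x^2}\,\textup{d}x .
\end{equation*}
I then use the known representation, valid for $\textup{Im}(z)>0$,
\begin{equation*}
w(z)=\frac{\textup{i}}{\pi}\int_{\mathbb{R}}\frac{e^{-t^2}}{z-t}\,\textup{d}t,
\end{equation*}
which gives
\begin{equation*}
\textup{Re}\,w(a+\textup{i}b)=\frac{b}{\pi}\int_{\mathbb{R}}\frac{e^{-t^2}}{(a-t)^2+b^2}\,\textup{d}t .
\end{equation*}
The substitution $t=(y-\mu_{X^F}-x)/(\sqrt{2}\sigma_{X^F})$ in the convolution, with $a=(y-\mu_{X^F})/(\sqrt2\sigma_{X^F})$ and $b=\sigma_{X^B}/(\sqrt2\sigma_{X^F})$, turns the integral into $\frac{1}{\sqrt{2\pi}\sigma_{X^F}}\textup{Re}\,w(a+\textup{i}b)$. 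Checking this constant is a short computation: the factors $\sqrt2\sigma_{X^F}$ from the Jacobian and from rescaling the Cauchy kernel cancel, and $1/\sqrt{2\pi}$ comes from $f_{\mathcal{N}}$.

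Alternatively, I could bypass the convolution entirely by Fourier inversion of \eqref{eq:Voigt1}, writing
\begin{equation*}
f_Y(y)=\frac{1}{\pi}\textup{Re}\int_0^\infty e^{-\textup{i}u(y-\mu_{X^F})-\sigma_{X^F}^2u^2/2-\sigma_{X^B}u}\,\textup{d}u,
\end{equation*}
completing the square, and recognising $e^{-z^2}\textup{erfc}(-\textup{i}z)$. The main obstacle is justifying the integral representation of $w$ (or, in the Fourier route, the completing-the-square step with a complex shift). I would cite it as a standard identity, e.g.\ from Abramowitz--Stegun 7.1.4, and check it by verifying that both sides satisfy $w'(z)=-2zw(z)+2\textup{i}/\sqrt{\pi}$ with matching behaviour as $\textup{Im}(z)\to\infty$.
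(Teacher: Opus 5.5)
Your proposal is correct and follows essentially the same route as the paper: the characteristic function is read off from Corollary \ref{coro:cf} with $\mu_{X^B}=\beta_{X^B}=0$. The density is then obtained by writing $f_Y$ as the normal--Cauchy convolution, rescaling by $\sqrt{2}\sigma_{X^F}$, and applying the identity $\frac{b}{\pi}\int_{\mathbb{R}} e^{-t^2}/((a-t)^2+b^2)\,\textup{d}t=\textup{Re}\,w(a+\textup{i}b)$, which the paper cites from Section 7.19 of the NIST handbook and you instead derive from the Cauchy-integral representation of $w$ — a harmless, slightly more self-contained variant.
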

\noindent
Note that, in contrast to the characteristic function in Equation \eqref{eq:Voigt1}, the probability density function in Equation \eqref{eq:Voigt2} must be evaluated numerically, but this can be done very fast and accurately.

\subsection{Statistical Inference}

The causal non-causal convolution autoregressive model cannot be estimated by maximum likelihood estimation since the observation kernel does not have a density. However, since the characteristic function is available in closed form, it can straightforward be estimated by empirical characteristic function estimation.

The basic idea of empirical characteristic function estimation, which was initiated by \cite{Parzen1962}, is to minimise some distance measure between the empirical characteristic function and the characteristic function motivated by the fact that two cumulative distribution functions are equal if and only if their characteristic functions are equal. There is by now an abundant literature on empirical characteristic function estimation. For independent and identically distributed random variables, see, for instance, \cite{Press1972}, \cite{PaulsonHolcombEdwardLeitch1975}, \cite{FeuervergerMureika1977}, \cite{Heathcote1977}, and \cite{KogonWilliams1998} and, for dependent random variables, see, for instance, \cite{Feuerverger1990}, \cite{KnightYu2002}, \cite{MeintanisTaufer2012}, and \cite{FrancqMeintanis2016}. See \cite{Yu2004} for a review of the literature on empirical characteristic function estimation.

Assume that a sample $(y_t)_{t=1}^{T}$ from the causal non-causal convolution autoregressive model $(Y_t)_{t \in \mathbb{Z}}$ with $\theta = \theta_0$ is observed. Here,
\begin{equation*}
	\theta = (\mu_F,\rho_F,\sigma_F^2,\rho_B,\alpha,\sigma_B,\beta)^{\prime}
\end{equation*}
is the parameter vector and
\begin{equation*}
	\Theta = \mathbb{R} \times (-1,1) \times (0,\infty) \times (-1,1) \times (0,2] \times (0,\infty) \times [-1,1]
\end{equation*}
is the parameter space. Divide the observations $(y_t)_{t=1}^{T}$ into $T-p$ blocks $y_{t-p:t} := (y_{t},y_{t-1},...,y_{t-p})^{\prime}, t \in \{p+1,p+2...,T\}$ of size $p+1$ where $p \in \mathbb{N}_0$. Then, the empirical characteristic function estimator $\hat{\theta}_{T}$ of $\theta_0$ by \cite{KnightYu2002} is given by
\begin{equation*}
	\hat{\theta}_{T} = \underset{\theta \in \Theta}{\text{argmin}} \, \int_{\mathbb{R}^{p+1}} | \varphi_T^p (u) - \varphi_Y^p (u;\theta) |^2 w(u) \textup{d}u,
\end{equation*}
where
\begin{equation*}
	\varphi_T^p (u) = \frac{1}{T-p} \sum_{t=p+1}^{T} \exp \left( \textup{i} u^{\prime} y_{t-p:t} \right), \quad u \in \mathbb{R}^{p+1},
\end{equation*}
is the empirical characteristic function of $Y_{t-p:t}$,
\begin{equation*}
	\varphi_Y^p (u;\theta) = \mathbb{E} \left[ \exp \left( \textup{i} u^{\prime} Y_{t-p:t} \right) \right], \quad u \in \mathbb{R}^{p+1},
\end{equation*}
is the characteristic function of $Y_{t-p:t}$ given in Proposition \ref{prop:cf}, and $w : \mathbb{R}^{p+1} \rightarrow [0,\infty)$ is a weight function.\footnote{The empirical characteristic function estimator by \cite{KnightYu2002} reduces to the one by \cite{Feuerverger1990} if the weight function takes values in a finite set.} In contrast to the characteristic function, the empirical characteristic function estimator is not available in closed form, so it must be obtained via numerical methods.

\cite{KnightYu2002} also gave general conditions under which the empirical characteristic function estimator is consistent and asymptotically normal. It is, however, not immediately clear whether the general conditions in \cite{KnightYu2002} hold for the causal non-causal convolution autoregressive model. We therefore leave the asymptotic properties of the empirical characteristic function estimator for the causal non-causal convolution autoregressive model for future research, but study its finite-sample properties in the next section.

Finally, a word on the choice of $p$. In general, the choice of $p$ does not affect the consistency or asymptotic normality of the empirical characteristic function estimator; indeed, the empirical characteristic function estimator is consistent and asymptotically normal for all $p$ provided that the conditions in \cite{KnightYu2002} hold. It can, however, affect the efficiency of it, but there is a trade-off. On the one hand, the larger the $p$, the more efficient the estimator. On the other hand, however, a larger $p$ also implies a larger computational burden required to obtain the estimator. See Section 2.4 in \cite{KnightYu2002} for more details.

\section{The Motivating Example Revisited} \label{sec:motivation_revisited}

The purpose of this section is twofold. First, we study the finite-sample properties of the empirical characteristic function estimator for the causal non-causal convolution autoregressive model, $(Y_t)_{t \in \mathbb{Z}}$, fixing $\alpha = 1$ and $\beta = 0$ in a Monte Carlo simulation study to see whether we can recover the true parameters. Using the true parameters, we then estimate the states, $(x_t^F)_{t=1}^{250}$ and $(x_t^B)_{t=1}^{250}$, from the observations, $(y_t)_{t =1}^{250}$, in Figure \ref{fig:cncssm} to see whether we can recover the true states. 

Table \ref{tab:parinf1} reports the estimated means and standard deviations (in parenthesis) of the estimated parameters together with the true parameters. The means and standard deviations of the estimated parameters are estimated from ten thousand replications where a replication consists of first simulating $T$ observations from the model and then estimating the model from the $T$ observations using the empirical characteristic function estimator with $p = 1$ and $w(u) = \exp(-u^{\prime}u)$ as in \cite{KnightYu2002}. The finite-sample properties of the estimated parameters are satisfactory. Indeed, the estimated means and standard deviations of the estimated parameters converge to the true parameters and zero, respectively.

\begin{table}[htbp]
	
	\centering
	
	\begin{tabular}{cccccc}
		\toprule
		& $\mu_{F,0} = 5$ & $\rho_{F,0} = 0.9$ & $\sigma_{F,0}^2 = 10$ & $\rho_{B,0} = 0.9$ & $\sigma_{B,0} = 0.5$ \\
		\midrule
		$T = 125$ & $\underset{(1.057)}{5.306}$ & $\underset{(0.024)}{0.893}$  & $\underset{(6.278)}{10.951}$ & $\underset{(0.097)}{0.891}$ & $\underset{(0.476)}{0.546}$ \vspace{0.10cm} \\
		$T = 250$ & $\underset{(0.929)}{5.263}$ & $\underset{(0.020)}{0.894}$  & $\underset{(5.302)}{10.923}$ & $\underset{(0.087)}{0.894}$ & $\underset{(0.387)}{0.538}$ \vspace{0.10cm} \\
		$T = 500$ & $\underset{(0.687)}{5.184}$ & $\underset{(0.015)}{0.896}$  & $\underset{(4.455)}{10.942}$ & $\underset{(0.068)}{0.899}$ & $\underset{(0.318)}{0.522}$ \vspace{0.10cm} \\
		\bottomrule
	\end{tabular}
	
	\caption{The estimated means and standard deviations (in parenthesis) of the estimated parameters of the causal non-causal convolution autoregressive model together with the true parameters.}
	
	\label{tab:parinf1}
	
\end{table}

Moving on to the second purpose of this section, Figure \ref{fig:stainf1} shows the estimated states, $(\hat{x}_t^F)_{t=1}^{250}$ (green) and $(\hat{x}_t^B)_{t=1}^{250}$ (red). The estimated states are obtained using the filtered (top) and smoothed (bottom) expectation with $f(x) = 1_{|x| \leq 10^6} x + 1_{|x| > 10^6} 10^6$ for a causal non-causal convolution autoregressive model where $\Phi((x^F,x^B),B) = \int_{B} f_{\mathcal{N}} (z-x^F-x^B) \lambda(\textup{d}z)$ implemented via the particle filter and smoother by \cite{GordonSalmondSmith1993} and \cite{KlaasBriersDeFreitasDoucetMaskellLang2006}, respectively, with a thousand particles. Figure \ref{fig:stainf1} also shows the true states, $(x_t^F)_{t=1}^{250}$ and $(x_t^B)_{t=1}^{250}$ (both black). The estimated states are reasonably close to the true states overall; there are periods in which they are close to the true ones, and there are periods in which they are far away from the true ones.

\begin{figure}[htbp]
	
	\centering
	
	\begin{subfigure}{0.5\textwidth}
		\centering
		\includegraphics[width=0.9\textwidth,height=0.25\textheight]{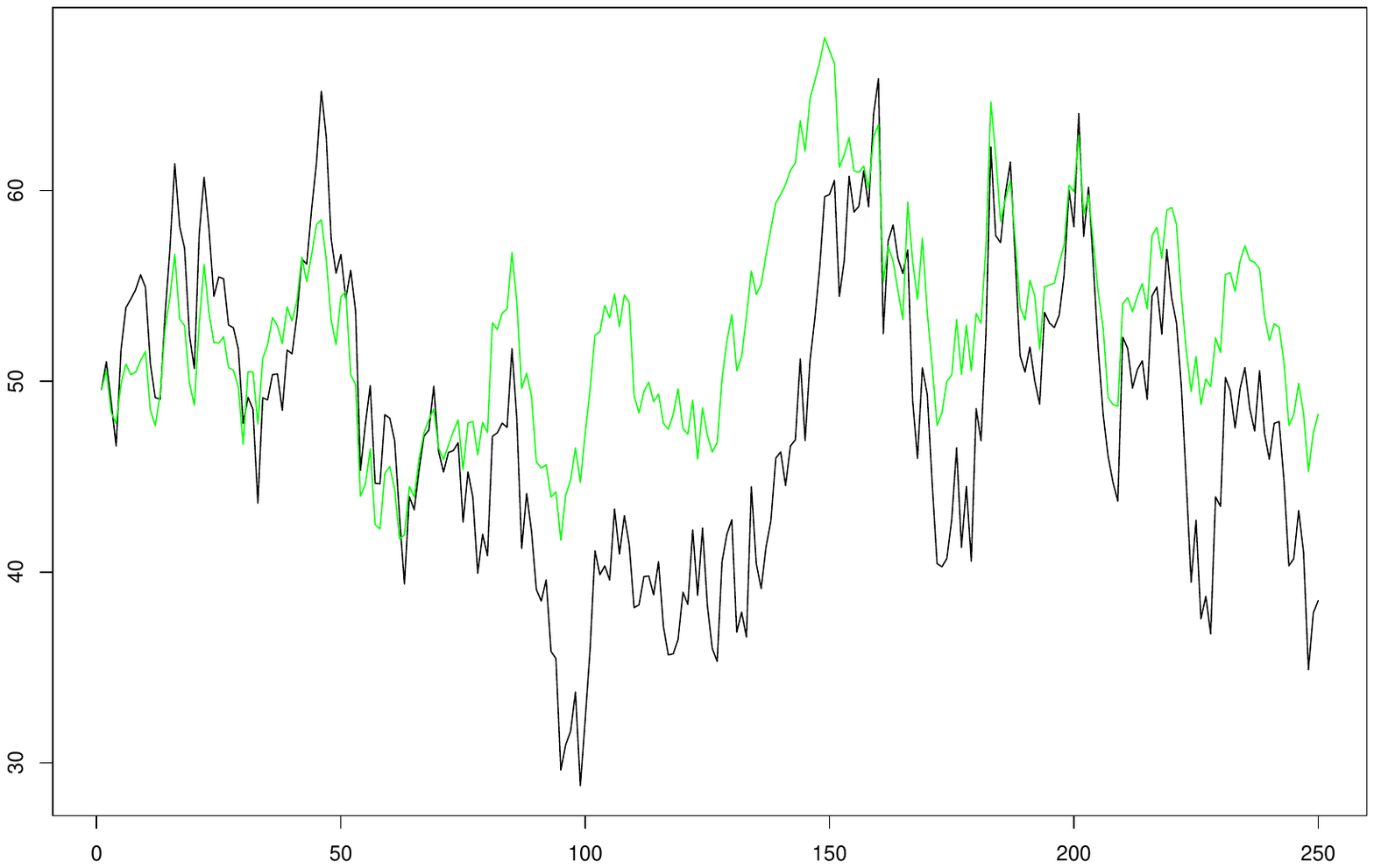}
	\end{subfigure} \kern-2.5em
	\begin{subfigure}{0.5\textwidth}
		\centering
		\includegraphics[width=0.9\textwidth,height=0.25\textheight]{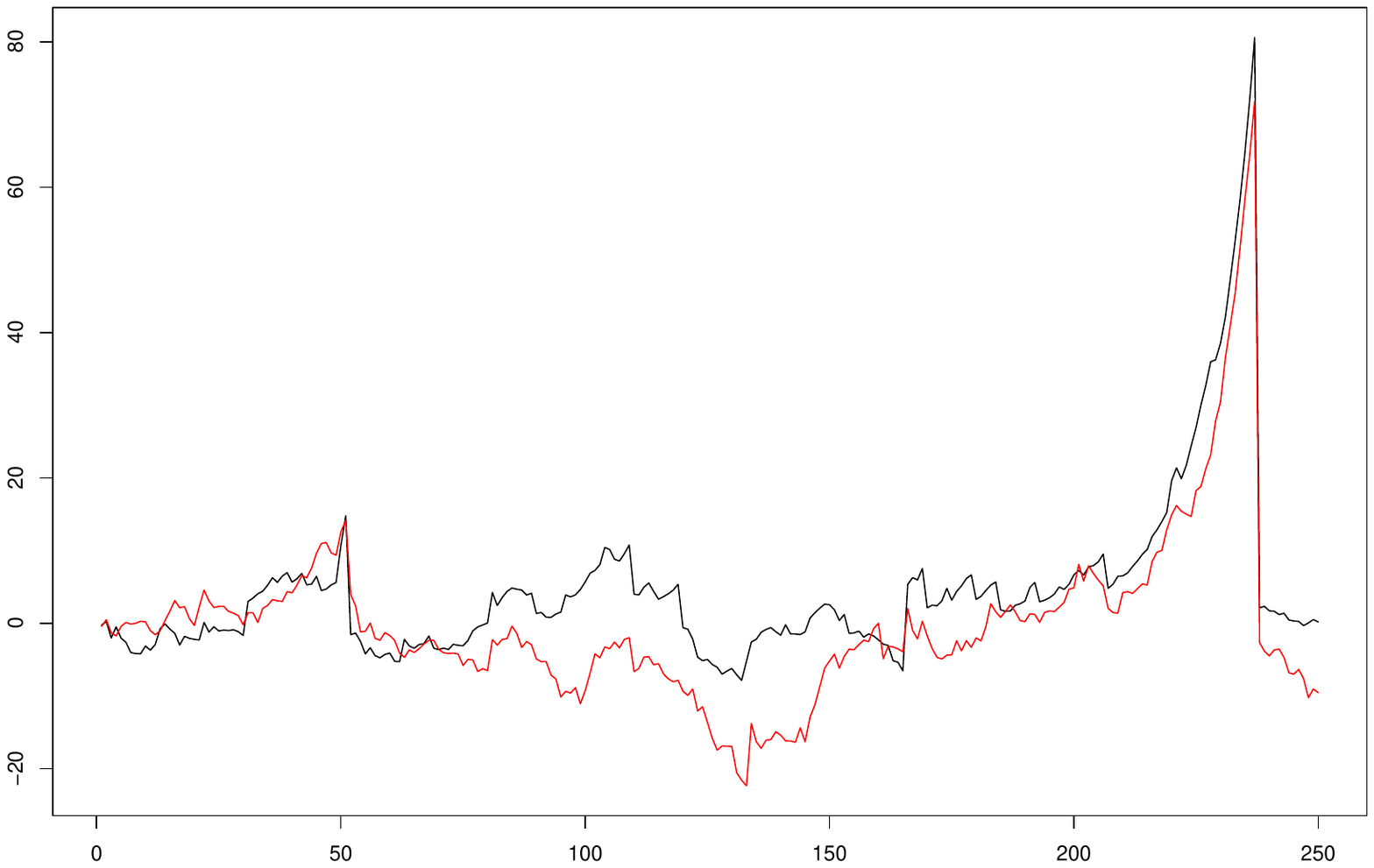}
	\end{subfigure}
	
	\begin{subfigure}{0.5\textwidth}
		\centering
		\includegraphics[width=0.9\textwidth,height=0.25\textheight]{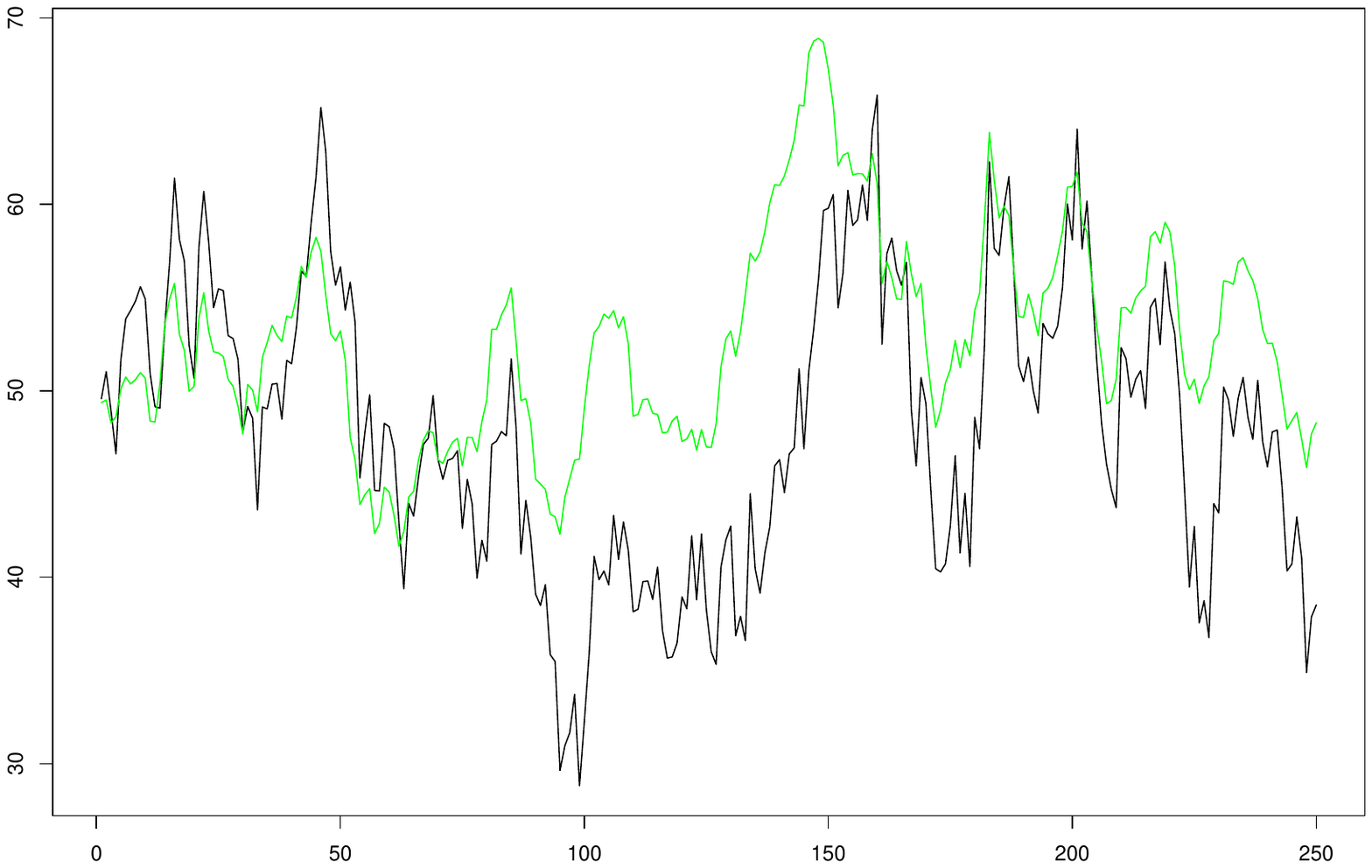}
	\end{subfigure} \kern-2.5em
	\begin{subfigure}{0.5\textwidth}
		\centering
		\includegraphics[width=0.9\textwidth,height=0.25\textheight]{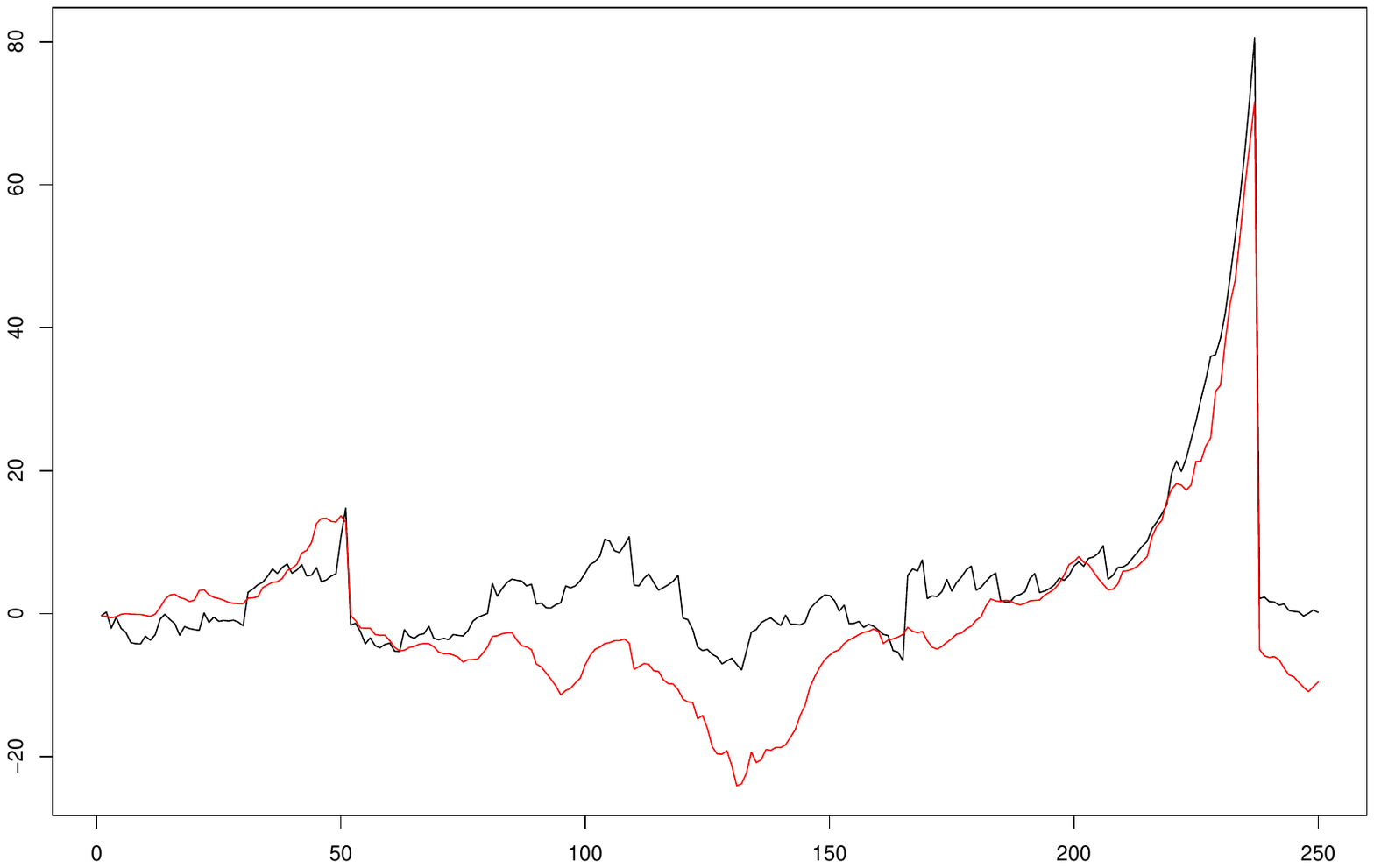}
	\end{subfigure}
	
	\caption{The estimated states of the causal non-causal convolution autoregressive model where $\hat{x}_t^F$ is green, $\hat{x}_t^B$ is red, filter is top, smoother is bottom, and $x_t^F$ and $x_t^B$ are both black.}
	
	\label{fig:stainf1}
	
\end{figure}

It is, however, unsatisfactory that the estimated states are only close to the true states in some periods. A possible explanation is that the causal non-causal convolution autoregressive model has two unobserved components but only one observed component, making it difficult to identify both unobserved components from the observed one. To investigate whether this is a valid explanation, we redo the experiment under the additional assumption that a sample $(\bar{x}_t^F)_{t = 1}^{T}$ from $(\bar{X}_t^F)_{t \in \mathbb{Z}}$, where $(\bar{X}_t^F)_{t \in \mathbb{Z}}$ is given by
\begin{equation}
	\bar{X}_t^F = X_t^F + \bar{\varepsilon}_t^F, \quad \bar{\varepsilon}_t^F \overset{i.i.d.}{\sim} \mathcal{N} (0,\bar{\sigma}_{F,0}^2), \label{eq:encssm}
\end{equation}
is observed. That is, we first study the finite-sample properties of the empirical characteristic function estimator for the extended causal non-causal convolution autoregressive model, $(Y_t)_{t \in \mathbb{Z}}$ and $(\bar{X}_t^F)_{t \in \mathbb{Z}}$, fixing $\alpha = 1$ and $\beta = 0$ in a Monte Carlo simulation study. Then, we estimate the states, $(x_t^F)_{t=1}^{250}$ and $(x_t^B)_{t=1}^{250}$, from the observations, $(y_t)_{t=1}^{250}$ and $(\bar{x}_t^F)_{t = 1}^{250}$, in Figures \ref{fig:cncssm} and \ref{fig:cncssm_extended} using the true parameters.

\begin{figure}[htbp]
	
	\centering
	
	\begin{subfigure}{1\textwidth}
		\centering
		\includegraphics[width=0.75\textwidth,height=0.25\textheight]{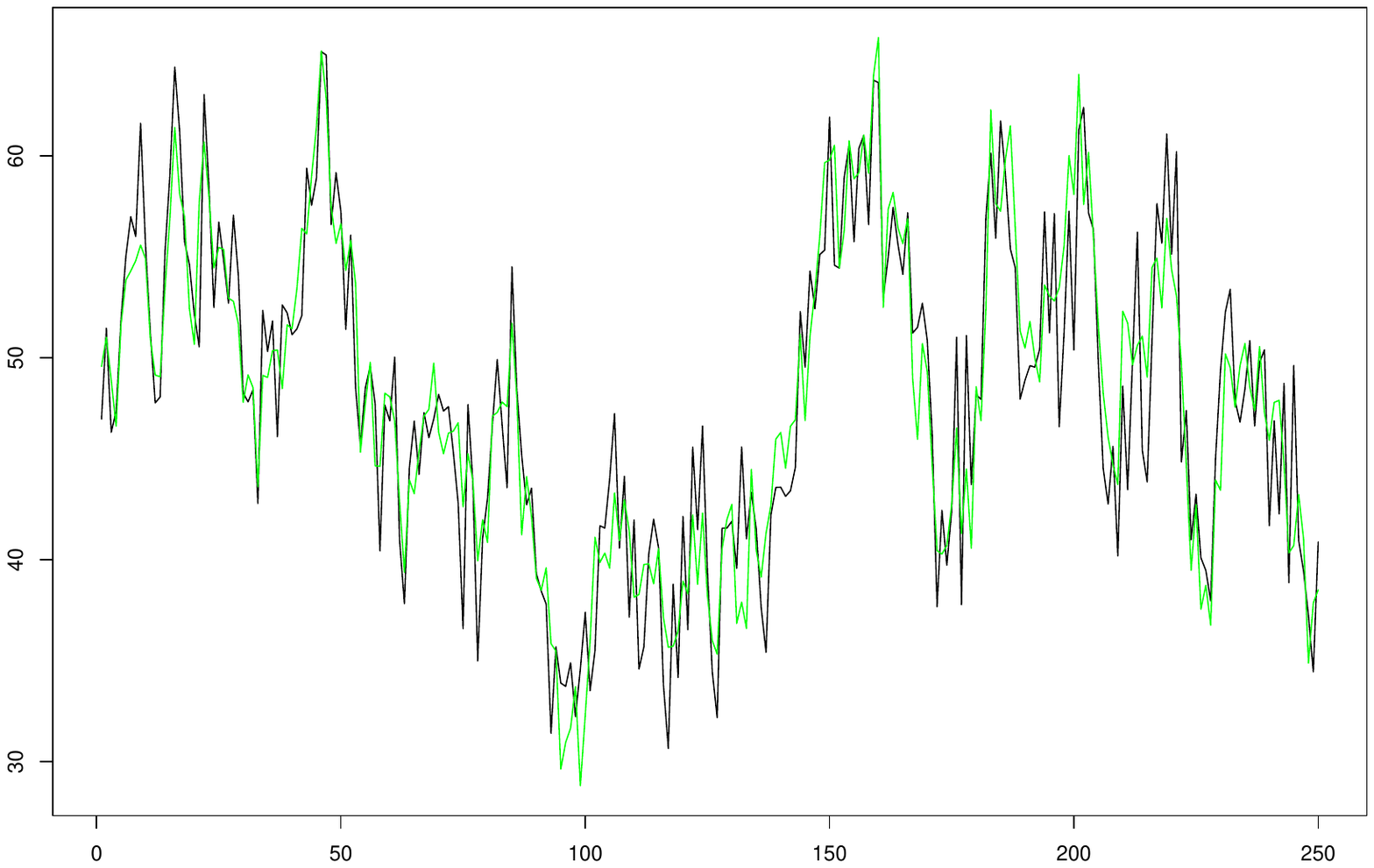}
	\end{subfigure}
	
	\caption{A simulation from the extended causal non-causal convolution autoregressive model in Equation \eqref{eq:encssm} with $\bar{\sigma}_{F}^2 = 10$ where $\bar{X}_t^F$ is black and $X_t^F$ is green.}
	
	\label{fig:cncssm_extended}
	
\end{figure}

Table \ref{tab:parinf2} reports the estimated means and standard deviations (in parenthesis) of the estimated parameters of the extended model obtained in the same way as above with one exemption. Instead of estimating the model from the $T$ observations in one step, it is estimated in two steps as follows. In a first step, the parameters of $(\bar{X}_t^F)_{t \in \mathbb{Z}}$ are estimated from $(\bar{x}_t^F)_{t = 1}^{T}$ using the empirical characteristic function estimator with $p = 1$ and $w(u) = \exp(-u^{\prime}u)$ as above. In a second step, the parameters of $(X_t^B)_{t \in \mathbb{Z}}$ are then estimated from $(y_t)_{t = 1}^{T}$ in the same way where the parameters of $(X_t^F)_{t \in \mathbb{Z}}$ are set equal to the estimated parameters from the first step. The two-step procedure is used in order to avoid minimising a four-dimensional integral, which must be evaluated numerically, numerically. Table \ref{tab:parinf2} also reports the true parameters. The finite-sample properties of the estimated parameters of the extended model are also satisfactory. In fact, the finite-sample properties of the estimated parameters of $(X_t^F)_{t \in \mathbb{Z}}$ in the extended model are better than the ones in the original model in the sense that the estimated means and standard deviations of the estimated parameters of $(X_t^F)_{t \in \mathbb{Z}}$ in the extended model are closer to the true parameters and zero, respectively, than the ones in the original model.

\begin{table}[htbp]
	
	\centering
	
	\begin{tabular}{ccccccc}
		\toprule
		& $\mu_{F,0} = 5$ & $\rho_{F,0} = 0.9$ & $\sigma_{F,0}^2 = 10$ & $\bar{\sigma}^2_{F,0} = 10$ & $\rho_{B,0} = 0.9$ & $\sigma_{B,0} = 0.5$ \\
		\midrule
		$T = 125$ & $\underset{(0.538)}{5.163}$ & $\underset{(0.012)}{0.896}$  & $\underset{(3.797)}{10.318}$ & $\underset{(4.519)}{10.545}$ & $\underset{(0.116)}{0.892}$ & $\underset{(0.661)}{0.570}$ \vspace{0.10cm} \\
		$T = 250$ & $\underset{(0.460)}{5.105}$ & $\underset{(0.010)}{0.898}$  & $\underset{(3.138)}{10.570}$ & $\underset{(3.832)}{10.310}$ & $\underset{(0.103)}{0.895}$ & $\underset{(0.511)}{0.542}$ \vspace{0.10cm} \\
		$T = 500$ & $\underset{(0.344)}{5.093}$ & $\underset{(0.007)}{0.898}$  & $\underset{(2.687)}{10.645}$ & $\underset{(3.444)}{10.409}$ & $\underset{(0.084)}{0.901}$ & $\underset{(0.426)}{0.513}$ \vspace{0.10cm} \\
		\bottomrule
	\end{tabular}
	
	\caption{The estimated means and standard deviations (in parenthesis) of the estimated parameters of the extended causal non-causal convolution autoregressive model together with the true parameters.}
	
	\label{tab:parinf2}
	
\end{table}

Figure \ref{fig:stainf2} shows the estimated states, $(\hat{x}_t^F)_{t=1}^{250}$ (green) and $(\hat{x}_t^B)_{t=1}^{250}$ (red), of the extended model. The estimated states are obtained using the filtered (top) and smoothed (bottom) expectation with $f(x) = 1_{|x| \leq 10^6} x + 1_{|x| > 10^6} 10^6$ for an extended causal non-causal convolution autoregressive model where $\Phi((x^F,x^B),B) = \int_{B} f_{\mathcal{N}} (z_1-x^F-x^B) \frac{1}{\bar{\sigma}_F} f_{\mathcal{N}} \left( \frac{z_2-x^F}{\bar{\sigma}_F} \right) \lambda_2(\textup{d}z)$ implemented via the particle filter and smoother by \cite{GordonSalmondSmith1993} and \cite{KlaasBriersDeFreitasDoucetMaskellLang2006}, respectively, with a thousand particles. Figure \ref{fig:stainf2} also shows the true states, $(x_t^F)_{t =1}^{250}$ and $(x_t^B)_{t =1}^{250}$ (both black). In contrast to before, the estimated states are now close to the true states at all times. We thus deduce that in the extended causal non-causal convolution autoregressive model, which has two unobserved components and two observed ones, both unobserved components can be identified.

\begin{figure}[htbp]
	
	\centering
	
	\begin{subfigure}{0.5\textwidth}
		\centering
		\includegraphics[width=0.9\textwidth,height=0.25\textheight]{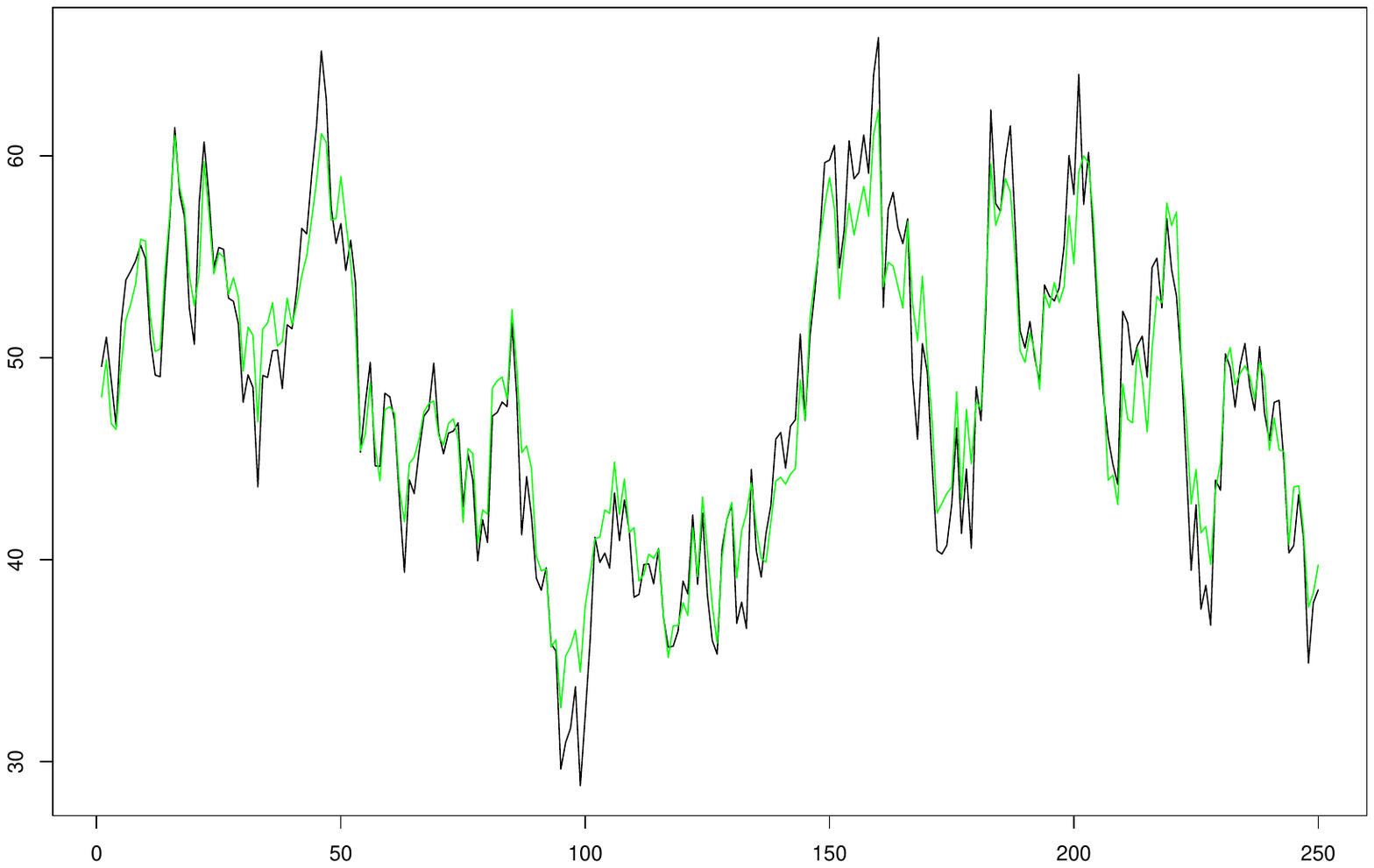}
	\end{subfigure} \kern-2.5em
	\begin{subfigure}{0.5\textwidth}
		\centering
		\includegraphics[width=0.9\textwidth,height=0.25\textheight]{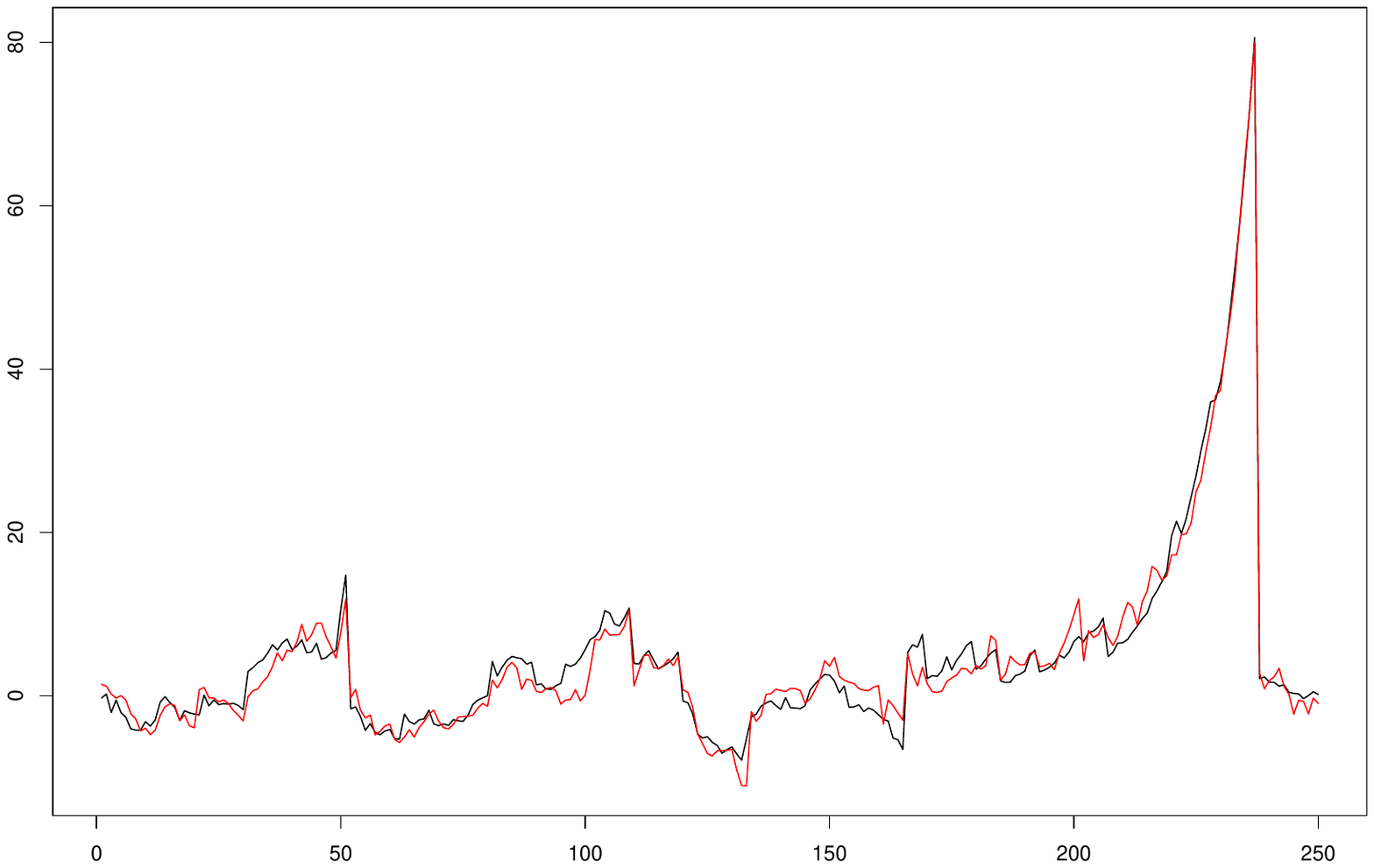}
	\end{subfigure}
	
	\begin{subfigure}{0.5\textwidth}
		\centering
		\includegraphics[width=0.9\textwidth,height=0.25\textheight]{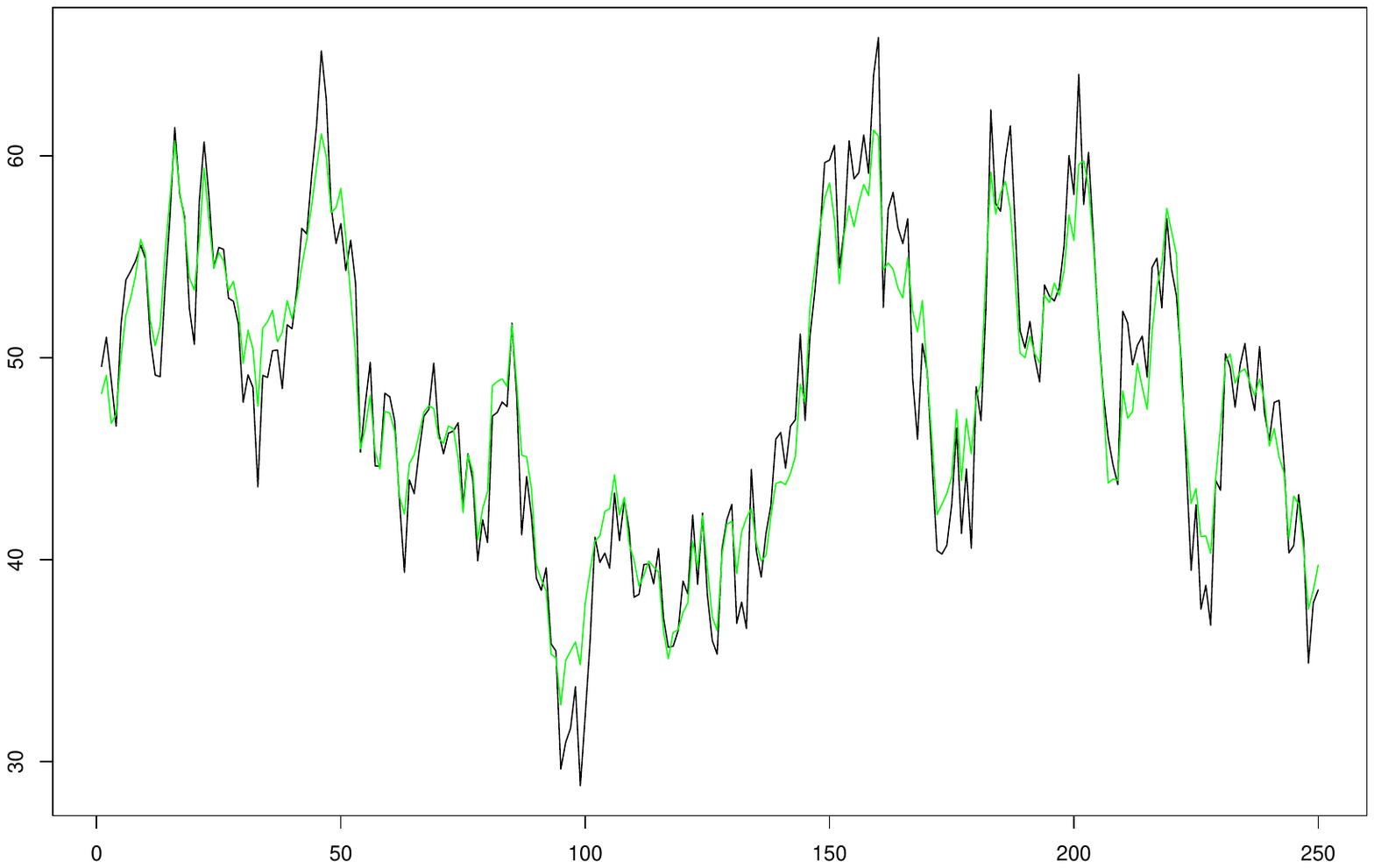}
	\end{subfigure} \kern-2.5em
	\begin{subfigure}{0.5\textwidth}
		\centering
		\includegraphics[width=0.9\textwidth,height=0.25\textheight]{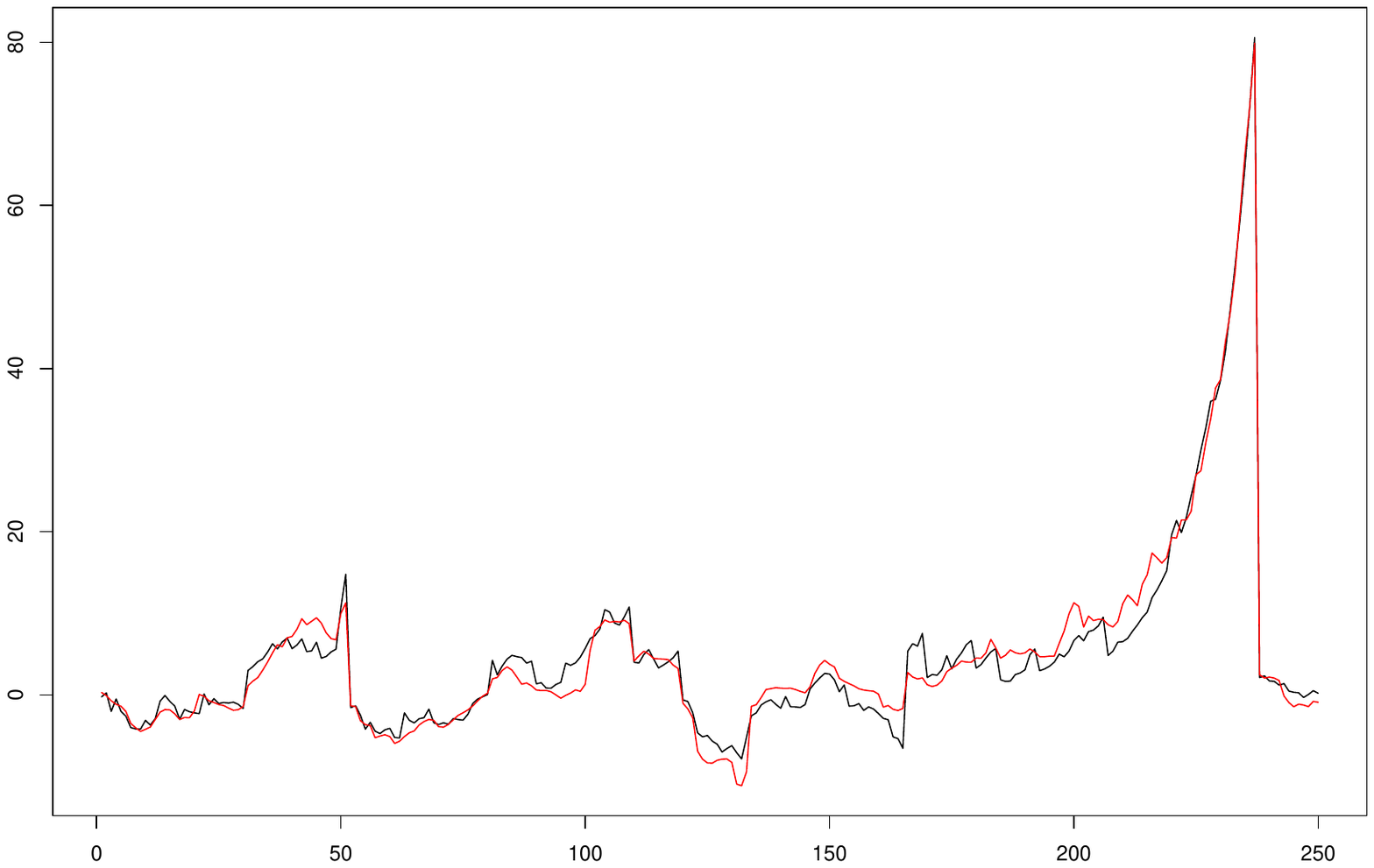}
	\end{subfigure}
	
	\caption{The estimated states of the extended causal non-causal convolution autoregressive model where $\hat{x}_t^F$ is green, $\hat{x}_t^B$ is red, filter is top, smoother is bottom, and $x_t^F$ and $x_t^B$ are both black.}
	
	\label{fig:stainf2}
	
\end{figure}

\section{Application} \label{sec:application}

Many authors have found evidence of a bubble in the NASDAQ index in the 1990s, which is commonly referred to as the dot-com bubble, see, for instance, \cite{PhillipsWuYu2011} and the references therein. In this section, we use the causal non-causal convolution autoregressive model, which is consistent with a rational expectations stock price model as shown in Section \ref{sec:motivation}, to estimate the size of the dot-com bubble in both real time and a posteriori with the non-causal autoregressive model as a benchmark.

The top figure in Figure \ref{fig:data} shows monthly real prices (solid) and dividends (dashed) on the NASDAQ index from December 1973 to December 2003, which we obtain as follows. First, we obtain with-dividend and without-dividend monthly returns on the value-weighted portfolio of all NASDAQ stocks from January 1973 to December 2003 from the Center for Research in Security Prices (CRSP). From without-dividend returns, we obtain nominal prices by initialising the first nominal price at unity. Then, we obtain nominal dividends from year-over-year with-dividend returns, year-over-year without-dividend returns, and nominal prices in order to take the seasonality in the dividends into account. We then obtain the consumer price index from the Federal Reserve Economic Data (FRED) database in order to convert nominal prices and dividends into real prices and dividends. The bottom figure in Figure \ref{fig:data} shows monthly real spreads with $R = 0.007$ on the NASDAQ index where we have set the expected monthly real return ($R$) equal to the difference between the sample mean of monthly nominal return ($1.1\%$) and the sample mean of monthly inflation ($0.4\%$).

\begin{figure}[htbp]
	
	\centering
	
	\begin{subfigure}{1\textwidth}
		\centering
		\includegraphics[width=0.75\textwidth,height=0.25\textheight]{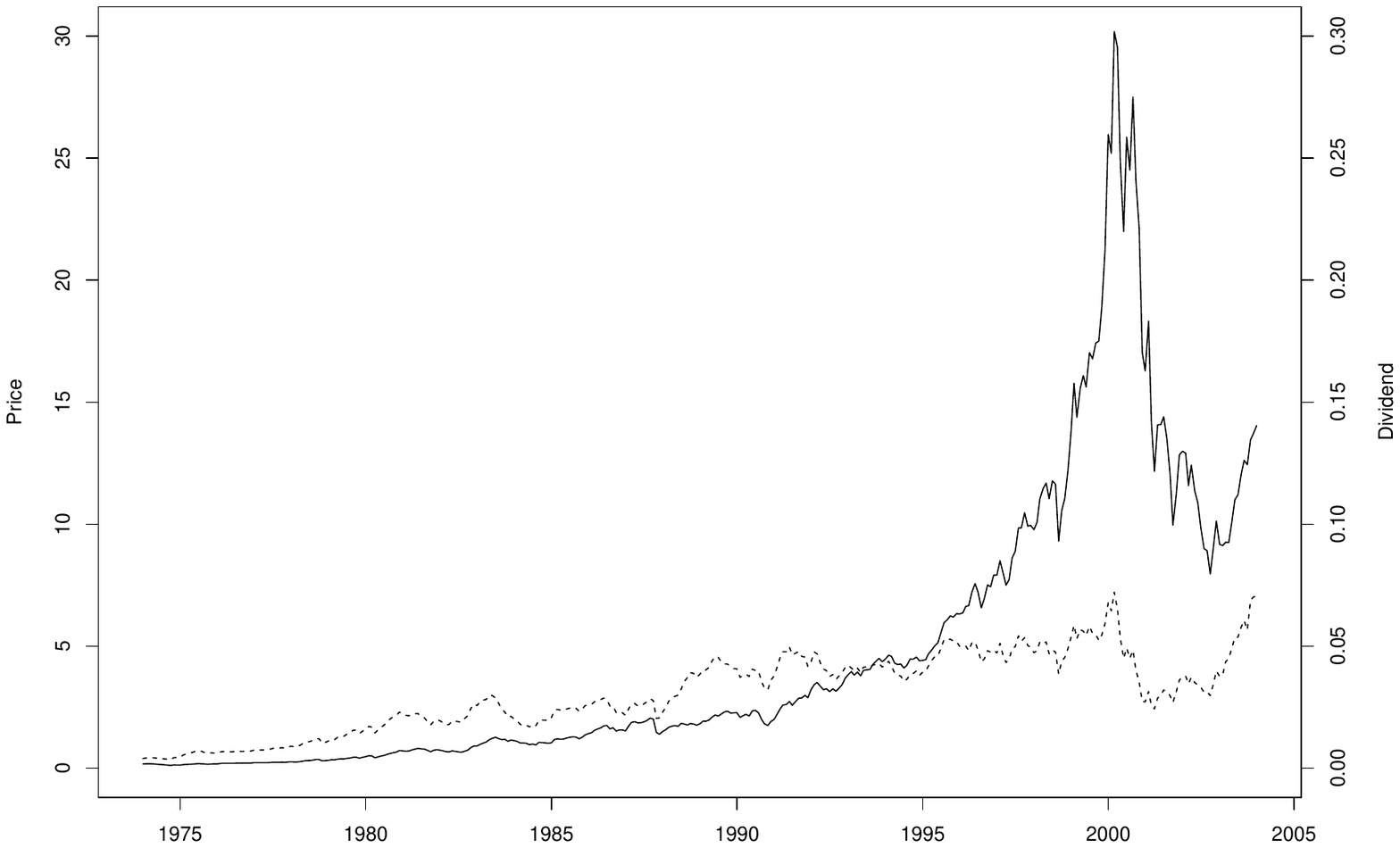}
	\end{subfigure}
	
	\begin{subfigure}{1\textwidth}
		\centering
		\includegraphics[width=0.75\textwidth,height=0.25\textheight]{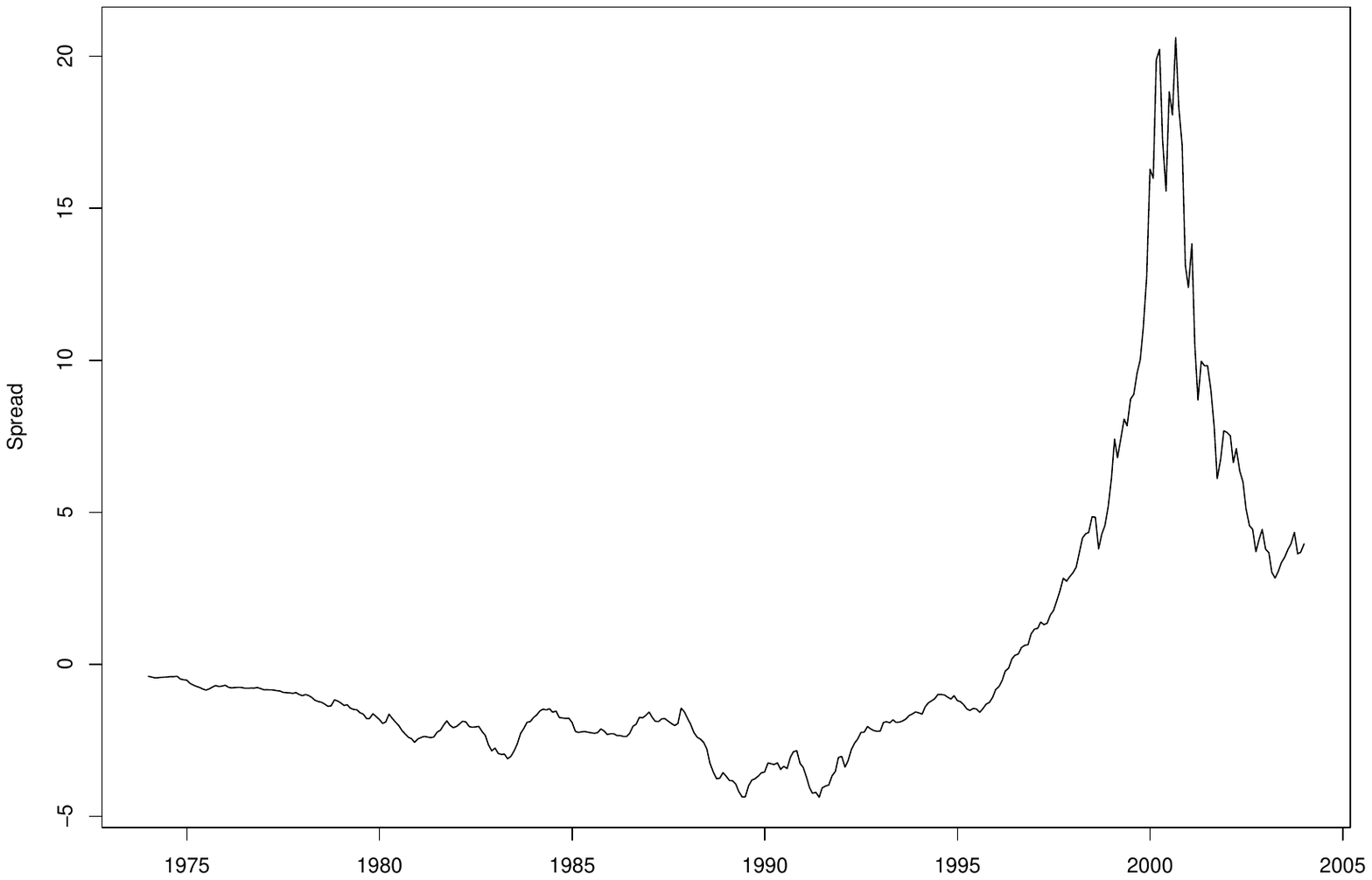}
	\end{subfigure}
	
	\caption{Monthly real prices (top, solid), dividends (top, dashed), and spreads (bottom) on the NASDAQ index from December 1973 to December 2003.}
	
	\label{fig:data}
	
\end{figure}

The first row in Table \ref{tab:parameters} reports the estimated parameters for the causal non-causal convolution autoregressive model with $\rho_B = \left(\frac{1}{1+R}\right)^\frac{1}{1-\alpha}$ and $\beta = 1$, denoted $\mathcal{M}$, obtained using the empirical characteristic function estimator with $p = 1$ and $w(u) = \exp(-u^{\prime}u)$ as in \cite{KnightYu2002}.\footnote{Note that, for all $\alpha \in (0,1)$, Equation \eqref{eq:eqeq} can be solved for $\rho_{B} \in (0,1)$, but that Equation \eqref{eq:eqeq} cannot be solved for $\alpha \in (0,1)$ for all $\rho_{B} \in (0,1)$.} The second row in Table \ref{tab:parameters} reports the estimated parameters for the non-causal autoregressive model with a constant, denoted $\mathcal{M}_b$, obtained in the same way under the additional restrictions that $\rho_F = 0$ and $\sigma_F^2 = 0$. We find that the fundamental value is highly persistent with $\hat{\rho}_F = 0.936$ and that the bubble builds up slowly since $\hat{\rho}_B = 0.877$. If $\rho_F = 0$ and $\sigma_F^2 = 0$, then $\hat{\mu}_F$ drops from $-0.148$ to $-4.701$, which is close to the minimum spread in the sample at $-4.368$, and the bubble builds up even slower since $\hat{\rho}_B = 0.962$.

\begin{table}[htbp]
	
	\centering
	
	\begin{tabular}{ccccccc}
		\toprule
		& $\hat{\mu}_F$ & $\hat{\rho}_F$ & $\hat{\sigma}_F^2$ & $\hat{\rho}_B$ & $\hat{\alpha}$ & $\hat{\sigma}_B$ \\
		\midrule
		$\mathcal{M}$ & $-0.148$ & $0.936$  & $0.278$ & $0.877$ & $0.947$ &  $0.003$ \\
		$\mathcal{M}_b$ & $-4.701$ & $-$  & $-$ & $0.962$ & $0.821$ & $0.012$ \\
		\bottomrule
	\end{tabular}
	
	\caption{The estimated parameters for the causal non-causal convolution autoregressive model ($\mathcal{M}$) and the non-causal autoregressive model with a constant ($\mathcal{M}_b$).}
	
	\label{tab:parameters}
	
\end{table}

The top figure in Figure \ref{fig:states} shows the filtered fundamental value (green) and bubble (red) from the causal non-causal convolution autoregressive model obtained using the filtered expectation with $f(x) = 1_{|x| \leq 10^6} x + 1_{|x| > 10^6} 10^6$ for a causal non-causal convolution autoregressive model where $\Phi((x^F,x^B),B) = \int_{B} f_{\mathcal{N}} (z-x^F-x^B) \lambda(\textup{d}z)$ implemented via the particle filter and smoother by \cite{GordonSalmondSmith1993} and \cite{KlaasBriersDeFreitasDoucetMaskellLang2006} with 2500 particles. The bottom figure in Figure \ref{fig:states} shows the smoothed fundamental value (green) and bubble (red) from the causal non-causal convolution autoregressive model obtained in the same way but using the smoothed expectation. First of all, we find that the causal non-causal convolution autoregressive model is able to detect the bubble in real time. Indeed, at the beginning of the sample, the spread is driven solely by the fundamental value, whereas by the end of the sample, it is solely driven by the bubble. The model can thus be used as a tool to detect bubbles in real time. The overall picture remains the same when estimating the size of the bubble a posteriori, with one notable difference. A posteriori, the increase in the spread at the end of the sample is driven by both the fundamental value and the bubble; mostly, however, by the bubble.

\begin{figure}[htbp]
	
	\centering
	
	\begin{subfigure}{1\textwidth}
		\centering
		\includegraphics[width=0.75\textwidth,height=0.25\textheight]{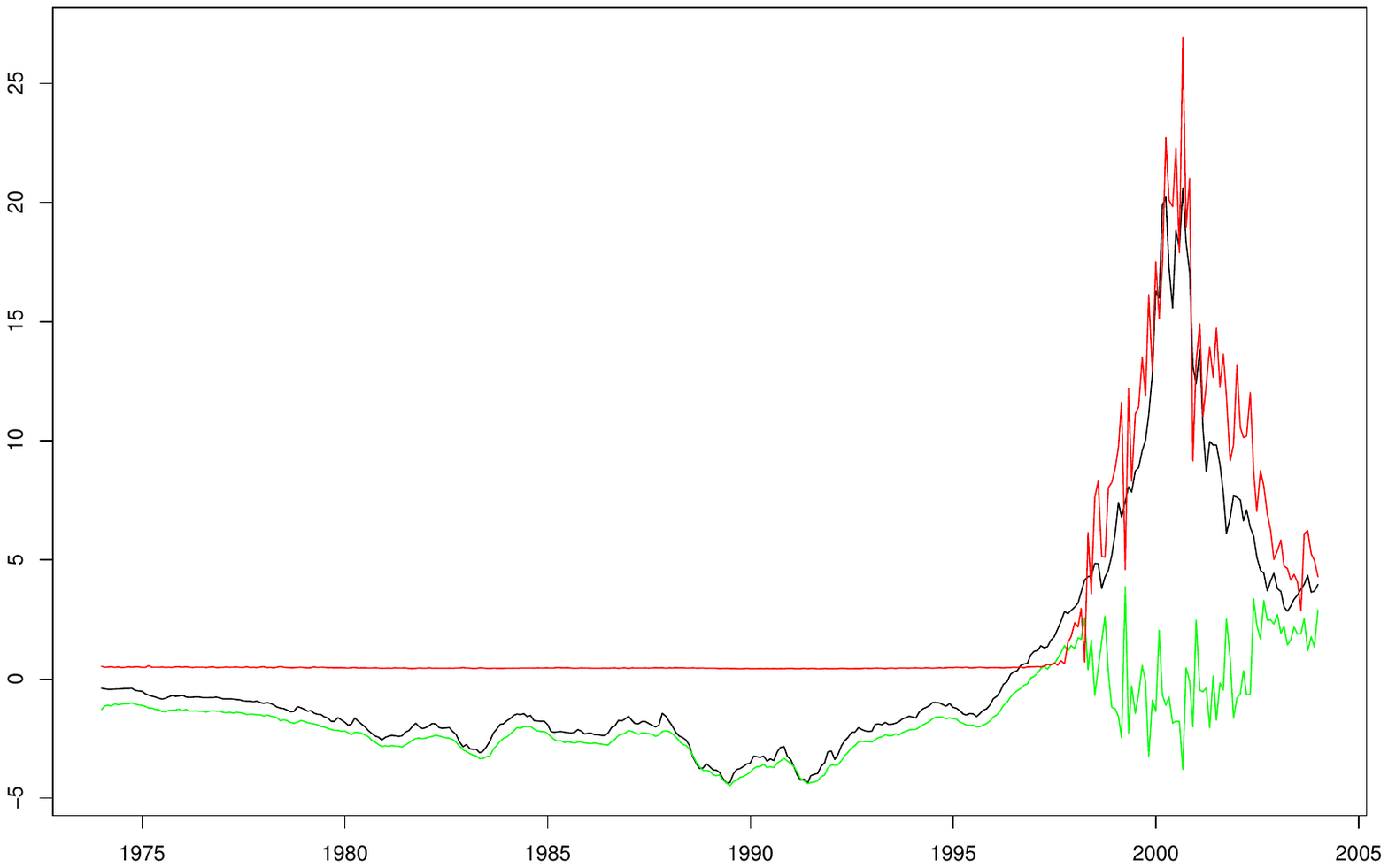}
	\end{subfigure}
	
	\begin{subfigure}{1\textwidth}
		\centering
		\includegraphics[width=0.75\textwidth,height=0.25\textheight]{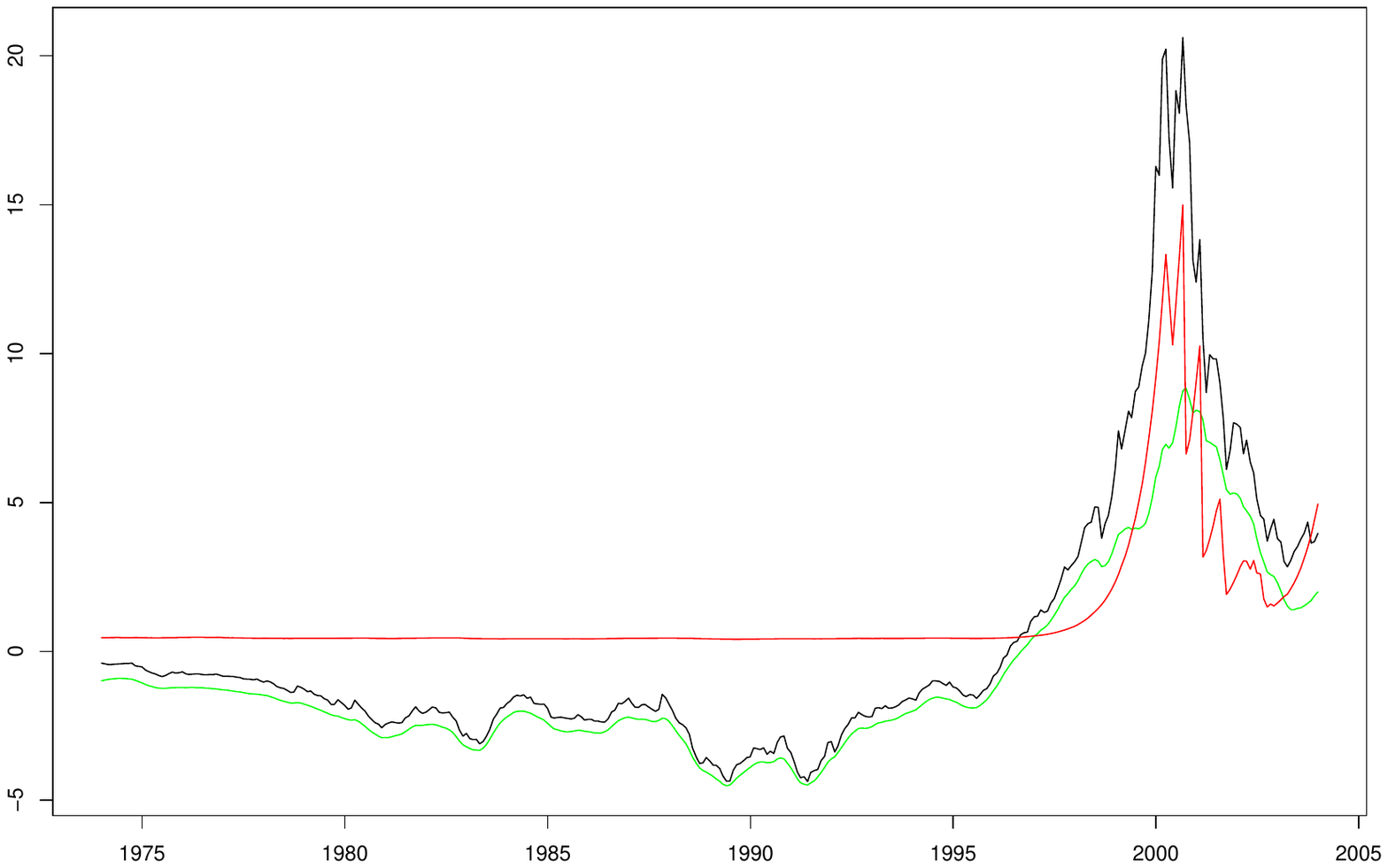}
	\end{subfigure}
	
	\caption{The filtered fundamental value (top, green), filtered bubble (top, red), smoothed fundamental value (bottom, green), smoothed bubble (bottom, red), and spread (black).}
	
	\label{fig:states}
	
\end{figure}

Finally, Figure \ref{fig:comparison} compares the filtered bubble from the causal non-causal convolution autoregressive model (solid) with the one estimated by the non-causal autoregressive model with a constant (dashed); the latter is given by $X_t^B = Y_t - \mu_F$ since $\rho_F = 0$ and $\sigma_F^2 = 0$. Two things are worth mentioning here. First, the bubble estimated by the non-causal autoregressive model with a constant is positive throughout the sample, in contrast to the filtered bubble from the causal non-causal convolution autoregressive model, which is only positive at the end of the sample. At the end of the sample, however, the size of the bubble estimated by the non-causal autoregressive model with a constant is close to identical to the size of the filtered bubble from the causal non-causal convolution autoregressive model.

\begin{figure}[htbp]
	
	\centering

	\includegraphics[width=0.75\textwidth,height=0.25\textheight]{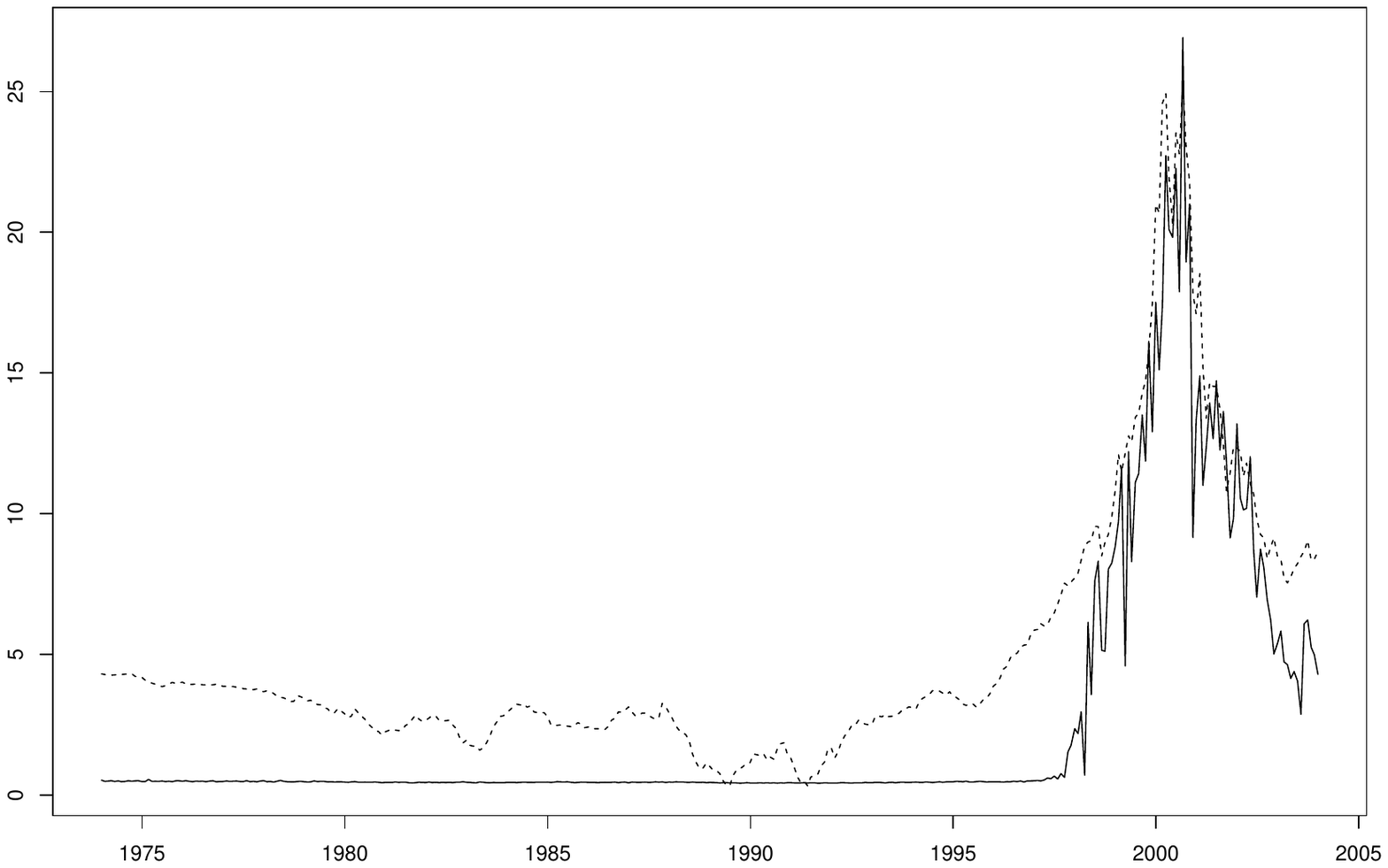}
		
	\caption{The filtered bubble from the causal non-causal convolution autoregressive model (solid) and the bubble estimated by the non-causal autoregressive model with a constant (dashed).}
	
	\label{fig:comparison}
	
\end{figure}

To summarise, the causal non-causal convolution autoregressive model is indeed able to detect the dot-com bubble in both real time and a posteriori. The size of the dot-com bubble estimated in real time is, however, larger than the one estimated a posteriori. The non-causal autoregressive model is also able to detect the dot-com bubble. However, it also detects a bubble before the dot-com bubble, in contrast to the causal non-causal convolution autoregressive model, which contradicts, for instance, the evidence in \cite{PhillipsWuYu2011} and thus shows that the causal non-causal convolution autoregressive model is the more realistic of the two.

\section{Conclusion} \label{sec:conclusion}

Many economic and financial time series are characterised by a local explosive increase followed by a sharp decrease. One example is stock prices where the aforementioned phenomenon can be the result of a bubble that bursts. In recent years, there has been an increasing interest in mixed causal non-causal (vector) autoregressive models to model such time series. Causal non-causal state space models, which we study in this paper, have, however, attracted much less attention. To motivate the use of causal non-causal state space models, we show that the causal non-causal convolution autoregressive model can be consistent with the arguably most fundamental model of stock prices, namely, the rational expectations stock price model, thereby bridging the literature on mixed causal non-causal (vector) autoregressive models with the one on rational expectations stock price models. As in a causal state space model, a central question is how to perform state and parameter inference in the causal non-causal state space model, which we discuss. We also study the causal non-causal convolution autoregressive model in more detail, providing some new results for the model. To illustrate the usefulness of causal non-causal state space models, we use the causal non-causal convolution autoregressive model to estimate the size of the dot-com bubble in both real time and a posteriori with the non-causal autoregressive model as a benchmark and find that the causal non-causal convolution autoregressive model is the more realistic of the two.

\bibliography{references}

\begin{thebibliography}{47}
\providecommand{\natexlab}[1]{#1}
\providecommand{\url}[1]{\texttt{#1}}
\expandafter\ifx\csname urlstyle\endcsname\relax
  \providecommand{\doi}[1]{doi: #1}\else
  \providecommand{\doi}{doi: \begingroup \urlstyle{rm}\Url}\fi

\bibitem[Andreasen and Bro(2024)]{AndreasenBro2024}
M.~M. Andreasen and J.~Bro.
\newblock Identifying a stock price bubble and its macroeconomic implications.
\newblock \emph{SSRN}, 2024.

\bibitem[Bansal and Yaron(2004)]{BansalYaron2004}
R.~Bansal and A.~Yaron.
\newblock Risks for the long run: A potential resolution of asset pricing
  puzzles.
\newblock \emph{The journal of Finance}, 59\penalty0 (4):\penalty0 1481--1509,
  2004.

\bibitem[Brockwell and Davis(1991)]{BrockwellDavis1991}
P.~J. Brockwell and R.~A. Davis.
\newblock \emph{Time series: theory and methods}.
\newblock Springer science \& business media, 1991.

\bibitem[Cambanis and Fakhre-Zakeri(1995)]{CambanisFakhre-Zakeri1995}
S.~Cambanis and I.~Fakhre-Zakeri.
\newblock On prediction of heavy-tailed autoregressive sequences: forward
  versus reversed time.
\newblock \emph{Theory of Probability \& Its Applications}, 39\penalty0
  (2):\penalty0 217--233, 1995.

\bibitem[Campbell and Cochrane(1999)]{CampbellCochrane1999}
J.~Y. Campbell and J.~H. Cochrane.
\newblock By force of habit: A consumption-based explanation of aggregate stock
  market behavior.
\newblock \emph{Journal of political Economy}, 107\penalty0 (2):\penalty0
  205--251, 1999.

\bibitem[Campbell and Shiller(1988)]{CampbellShiller1988}
J.~Y. Campbell and R.~J. Shiller.
\newblock The dividend-price ratio and expectations of future dividends and
  discount factors.
\newblock \emph{The review of financial studies}, 1\penalty0 (3):\penalty0
  195--228, 1988.

\bibitem[Campbell et~al.(1997)Campbell, Lo, and
  MacKinlay]{CampbellLoMacKinlay1997}
J.~Y. Campbell, A.~W. Lo, and A.~C. MacKinlay.
\newblock \emph{The Econometrics of Financial Markets}.
\newblock Princeton University Press, 1997.

\bibitem[Capp{\'e} et~al.(2005)Capp{\'e}, Moulines, and
  Ryd{\'e}n]{CappeMoulinesRyden2005}
O.~Capp{\'e}, E.~Moulines, and T.~Ryd{\'e}n.
\newblock \emph{Inference in Hidden Markov Models}.
\newblock Springer New York, NY, 2005.

\bibitem[Cavaliere et~al.(2020)Cavaliere, Nielsen, and
  Rahbek]{CavaliereNielsenRahbek2020}
G.~Cavaliere, H.~B. Nielsen, and A.~Rahbek.
\newblock Bootstrapping noncausal autoregressions: with applications to
  explosive bubble modeling.
\newblock \emph{Journal of Business \& Economic Statistics}, 38\penalty0
  (1):\penalty0 55--67, 2020.

\bibitem[Chopin et~al.(2020)Chopin, Papaspiliopoulos,
  et~al.]{ChopinPapaspiliopoulos2020}
N.~Chopin, O.~Papaspiliopoulos, et~al.
\newblock \emph{An introduction to sequential Monte Carlo}, volume~4.
\newblock Springer, 2020.

\bibitem[Cochrane(2008)]{Cochrane2008}
J.~H. Cochrane.
\newblock State-space vs. var models for stock returns.
\newblock \emph{Unpublished paper, Chicago GSB}, 2008.

\bibitem[Davis and Song(2020)]{DavisSong2020}
R.~A. Davis and L.~Song.
\newblock Noncausal vector ar processes with application to economic time
  series.
\newblock \emph{Journal of Econometrics}, 216\penalty0 (1):\penalty0 246--267,
  2020.

\bibitem[Fama and French(1988)]{FamaFrench1988}
E.~F. Fama and K.~R. French.
\newblock Dividend yields and expected stock returns.
\newblock \emph{Journal of financial economics}, 22\penalty0 (1):\penalty0
  3--25, 1988.

\bibitem[Ferson et~al.(2003)Ferson, Sarkissian, and
  Simin]{FersonSarkissianSimin2003}
W.~E. Ferson, S.~Sarkissian, and T.~T. Simin.
\newblock Spurious regressions in financial economics?
\newblock \emph{The Journal of Finance}, 58\penalty0 (4):\penalty0 1393--1413,
  2003.

\bibitem[Feuerverger(1990)]{Feuerverger1990}
A.~Feuerverger.
\newblock An efficiency result for the empirical characteristic function in
  stationary time-series models.
\newblock \emph{The Canadian Journal of Statistics}, pages 155--161, 1990.

\bibitem[Feuerverger and Mureika(1977)]{FeuervergerMureika1977}
A.~Feuerverger and R.~A. Mureika.
\newblock The empirical characteristic function and its applications.
\newblock \emph{The annals of Statistics}, pages 88--97, 1977.

\bibitem[Francq and Meintanis(2016)]{FrancqMeintanis2016}
C.~Francq and S.~G. Meintanis.
\newblock Fourier-type estimation of the power garch model with stable-paretian
  innovations.
\newblock \emph{Metrika}, 79\penalty0 (4):\penalty0 389--424, 2016.

\bibitem[Fries(2022)]{Fries2022}
S.~Fries.
\newblock Conditional moments of noncausal alpha-stable processes and the
  prediction of bubble crash odds.
\newblock \emph{Journal of Business \& Economic Statistics}, 40\penalty0
  (4):\penalty0 1596--1616, 2022.

\bibitem[Fries and Zakoian(2019)]{FriesZakoian2019}
S.~Fries and J.-M. Zakoian.
\newblock Mixed causal-noncausal ar processes and the modelling of explosive
  bubbles.
\newblock \emph{Econometric Theory}, 35\penalty0 (6):\penalty0 1234--1270,
  2019.

\bibitem[Gordon et~al.(1993)Gordon, Salmond, and Smith]{GordonSalmondSmith1993}
N.~J. Gordon, D.~J. Salmond, and A.~F. Smith.
\newblock Novel approach to nonlinear/non-gaussian bayesian state estimation.
\newblock In \emph{IEE proceedings F (radar and signal processing)}, volume
  140, pages 107--113. IET, 1993.

\bibitem[Gourieroux and Jasiak(2016)]{GourierouxJasiak2016}
C.~Gourieroux and J.~Jasiak.
\newblock Filtering, prediction and simulation methods for noncausal processes.
\newblock \emph{Journal of Time Series Analysis}, 37\penalty0 (3):\penalty0
  405--430, 2016.

\bibitem[Gourieroux and Jasiak(2017)]{GourierouxJasiak2017}
C.~Gourieroux and J.~Jasiak.
\newblock Noncausal vector autoregressive process: Representation,
  identification and semi-parametric estimation.
\newblock \emph{Journal of Econometrics}, 200\penalty0 (1):\penalty0 118--134,
  2017.

\bibitem[Gouri{\'e}roux and Zako{\"\i}an(2017)]{GourierouxZakoian2017}
C.~Gouri{\'e}roux and J.-M. Zako{\"\i}an.
\newblock Local explosion modelling by non-causal process.
\newblock \emph{Journal of the Royal Statistical Society Series B: Statistical
  Methodology}, 79\penalty0 (3):\penalty0 737--756, 2017.

\bibitem[Gourieroux et~al.(2020)Gourieroux, Jasiak, and
  Monfort]{GourierouxJasiakMonfort2020}
C.~Gourieroux, J.~Jasiak, and A.~Monfort.
\newblock Stationary bubble equilibria in rational expectation models.
\newblock \emph{Journal of Econometrics}, 218\penalty0 (2):\penalty0 714--735,
  2020.

\bibitem[Gouri{\'e}roux et~al.(2021)Gouri{\'e}roux, Jasiak, and
  Tong]{GourierouxJasiakTong2021}
C.~Gouri{\'e}roux, J.~Jasiak, and M.~Tong.
\newblock Convolution-based filtering and forecasting: An application to wti
  crude oil prices.
\newblock \emph{Journal of Forecasting}, 40\penalty0 (7):\penalty0 1230--1244,
  2021.

\bibitem[Heathcote(1977)]{Heathcote1977}
C.~R. Heathcote.
\newblock The integrated squared error estimation of parameters.
\newblock \emph{Biometrika}, 64\penalty0 (2):\penalty0 255--264, 1977.

\bibitem[Hecq et~al.(2020)Hecq, Issler, and Telg]{HecqIsslerTelg2020}
A.~Hecq, J.~V. Issler, and S.~Telg.
\newblock Mixed causal--noncausal autoregressions with exogenous regressors.
\newblock \emph{Journal of Applied Econometrics}, 35\penalty0 (3):\penalty0
  328--343, 2020.

\bibitem[Klaas et~al.(2006)Klaas, Briers, De~Freitas, Doucet, Maskell, and
  Lang]{KlaasBriersDeFreitasDoucetMaskellLang2006}
M.~Klaas, M.~Briers, N.~De~Freitas, A.~Doucet, S.~Maskell, and D.~Lang.
\newblock Fast particle smoothing: If i had a million particles.
\newblock In \emph{Proceedings of the 23rd International Conference on Machine
  Learning}, pages 481--488, 2006.

\bibitem[Knight and Yu(2002)]{KnightYu2002}
J.~L. Knight and J.~Yu.
\newblock Empirical characteristic function in time series estimation.
\newblock \emph{Econometric Theory}, 18\penalty0 (3):\penalty0 691--721, 2002.

\bibitem[Kogon and Williams(1998)]{KogonWilliams1998}
S.~M. Kogon and D.~B. Williams.
\newblock Characteristic function based estimation of stable distribution
  parameters.
\newblock \emph{A practical guide to heavy tails: statistical techniques and
  applications}, pages 311--338, 1998.

\bibitem[Lanne and Luoto(2013)]{LanneLuoto2013}
M.~Lanne and J.~Luoto.
\newblock Autoregression-based estimation of the new keynesian phillips curve.
\newblock \emph{Journal of Economic Dynamics and Control}, 37\penalty0
  (3):\penalty0 561--570, 2013.

\bibitem[Lanne and Luoto(2017)]{LanneLuoto2017}
M.~Lanne and J.~Luoto.
\newblock A new time-varying parameter autoregressive model for us inflation
  expectations.
\newblock \emph{Journal of Money, Credit and Banking}, 49\penalty0
  (5):\penalty0 969--995, 2017.

\bibitem[Lanne and Saikkonen(2011)]{LanneSaikkonen2011}
M.~Lanne and P.~Saikkonen.
\newblock Noncausal autoregressions for economic time series.
\newblock \emph{Journal of Time Series Econometrics}, 3\penalty0 (3), 2011.

\bibitem[Lanne and Saikkonen(2013)]{LanneSaikkonen2013}
M.~Lanne and P.~Saikkonen.
\newblock Noncausal vector autoregression.
\newblock \emph{Econometric Theory}, 29\penalty0 (3):\penalty0 447--481, 2013.

\bibitem[Lanne et~al.(2012)Lanne, Luoto, and
  Saikkonen]{LanneLuotoSaikkonen2012}
M.~Lanne, J.~Luoto, and P.~Saikkonen.
\newblock Optimal forecasting of noncausal autoregressive time series.
\newblock \emph{International Journal of Forecasting}, 28\penalty0
  (3):\penalty0 623--631, 2012.

\bibitem[Lettau and Ludvigson(2005)]{LettauLudvigson2005}
M.~Lettau and S.~C. Ludvigson.
\newblock Expected returns and expected dividend growth.
\newblock \emph{Journal of Financial Economics}, 76\penalty0 (3):\penalty0
  583--626, 2005.

\bibitem[Meintanis and Taufer(2012)]{MeintanisTaufer2012}
S.~G. Meintanis and E.~Taufer.
\newblock Inference procedures for stable-paretian stochastic volatility
  models.
\newblock \emph{Mathematical and Computer Modelling}, 55\penalty0
  (3-4):\penalty0 1199--1212, 2012.

\bibitem[Menzly et~al.(2004)Menzly, Santos, and
  Veronesi]{MenzlySantosVeronesi2004}
L.~Menzly, T.~Santos, and P.~Veronesi.
\newblock Understanding predictability.
\newblock \emph{Journal of Political Economy}, 112\penalty0 (1):\penalty0
  1--47, 2004.

\bibitem[Olver et~al.(2010)Olver, Lozier, Boisvert, and
  Clark]{OlverLozierBoisvertClark2010}
F.~W. Olver, D.~W. Lozier, R.~F. Boisvert, and C.~W. Clark.
\newblock \emph{NIST handbook of mathematical functions}.
\newblock Cambridge university press, 2010.

\bibitem[Parzen(1962)]{Parzen1962}
E.~Parzen.
\newblock On estimation of a probability density function and mode.
\newblock \emph{The annals of mathematical statistics}, 33\penalty0
  (3):\penalty0 1065--1076, 1962.

\bibitem[P{\'a}stor and Stambaugh(2009)]{PastorStambaugh2009}
L.~P{\'a}stor and R.~F. Stambaugh.
\newblock Predictive systems: Living with imperfect predictors.
\newblock \emph{The Journal of Finance}, 64\penalty0 (4):\penalty0 1583--1628,
  2009.

\bibitem[Paulson et~al.(1975)Paulson, Holcomb, and
  Leitch]{PaulsonHolcombEdwardLeitch1975}
A.~S. Paulson, E.~W. Holcomb, and R.~A. Leitch.
\newblock The estimation of the parameters of the stable laws.
\newblock \emph{Biometrika}, 62\penalty0 (1):\penalty0 163--170, 1975.

\bibitem[Phillips et~al.(2011)Phillips, Wu, and Yu]{PhillipsWuYu2011}
P.~C. Phillips, Y.~Wu, and J.~Yu.
\newblock Explosive behavior in the 1990s nasdaq: When did exuberance escalate
  asset values?
\newblock \emph{International economic review}, 52\penalty0 (1):\penalty0
  201--226, 2011.

\bibitem[Press(1972)]{Press1972}
S.~J. Press.
\newblock Estimation in univariate and multivariate stable distributions.
\newblock \emph{Journal of the American statistical association}, 67\penalty0
  (340):\penalty0 842--846, 1972.

\bibitem[Samorodnitsky and Taqqu(1994)]{SamorodnitskyTaqqu1994}
G.~Samorodnitsky and M.~S. Taqqu.
\newblock \emph{Stable Non-Gaussian Random Processes}.
\newblock 1994.

\bibitem[Van~Binsbergen and Koijen(2010)]{VanBinsbergenKoijen2010}
J.~H. Van~Binsbergen and R.~S. Koijen.
\newblock Predictive regressions: A present-value approach.
\newblock \emph{The Journal of Finance}, 65\penalty0 (4):\penalty0 1439--1471,
  2010.

\bibitem[Yu(2004)]{Yu2004}
J.~Yu.
\newblock Empirical characteristic function estimation and its applications.
\newblock \emph{Econometric reviews}, 23\penalty0 (2):\penalty0 93--123, 2004.

\end{thebibliography}

\section*{Appendix}

\appendix

\section{More Details on the Motivating Example} \label{appendix:motivatingexample}

\subsubsection{Equation \eqref{eq:rem2}}

The derivation consists of two parts. First, we show that $F_t^S$ can be written as a linear function of $G_{t} - \phi_0^{G}$. Then, by using that $F_t^S$ can be written as a linear function of $G_{t} - \phi_0^{G}$ and that $G_{t} - \phi_0^{G}$ follows a first-order Gaussian autoregressive model, we show that $F_t^S$ too can be written as a first-order Gaussian autoregressive model. We have that
\begin{align*}
	F_t^S &= \frac{1}{R} \sum_{i=0}^{\infty} \left( \frac{1}{1+R} \right)^i \mathbb{E}_t [\Delta D_{t+1+i}] \\ 
	&= \frac{1}{R} \sum_{i=0}^{\infty} \left( \frac{1}{1+R} \right)^i \mathbb{E}_t [\mathbb{E}_{t+i} [\Delta D_{t+1+i}]] \\
	&= \frac{1}{R} \sum_{i=0}^{\infty} \left( \frac{1}{1+R} \right)^i \mathbb{E}_t [G_{t+i}] \\
	&= \frac{1}{R} \frac{1+R}{R} \phi_0^{G} + \frac{1}{R} \sum_{i=0}^{\infty} \left( \frac{1}{1+R} \right)^i (\mathbb{E}_t [G_{t+i}] - \phi_0^{G}) \\
	&= \frac{1}{R} \frac{1+R}{R} \phi_0^{G} + \frac{1}{R} \sum_{i=0}^{\infty} \left( \frac{1}{1+R} \right)^i (\phi_1^{G})^{i} (G_{t} - \phi_0^{G}) \\
	&= \frac{1}{R} \frac{1+R}{R} \phi_0^{G} + \frac{1}{R} \frac{1+R}{1+R-\phi_1^{G}} (G_{t} - \phi_0^{G}),
\end{align*}
and thus that
\begin{align*}
	F_t^S &= \frac{1}{R} \frac{1+R}{R} \phi_0^{G} + \frac{1}{R} \frac{1+R}{1+R-\phi_1^{G}} (G_{t} - \phi_0^{G}) \\
	&= \frac{1}{R} \frac{1+R}{R} \phi_0^{G} + \frac{1}{R} \frac{1+R}{1+R-\phi_1^{G}} \{\phi_1^{G} (G_{t-1} - \phi_0^{G}) + \varepsilon_t^{G}\} \\
	&= \frac{1}{R} \frac{1+R}{R} \phi_0^{G} + \phi_1^{G} \frac{1}{R} \frac{1+R}{1+R-\phi_1^{G}} (G_{t-1} - \phi_0^{G}) + \frac{1}{R} \frac{1+R}{1+R-\phi_1^{G}} \varepsilon_t^{G} \\
	&= \frac{1}{R} \frac{1+R}{R} \phi_0^{G} + \phi_1^{G} \left\{ - \frac{1}{R} \frac{1+R}{R} \phi_0^{G} + F_{t-1}^S \right\} + \frac{1}{R} \frac{1+R}{1+R-\phi_1^{G}} \varepsilon_t^{G} \\
	&= \frac{1}{R} \frac{1+R}{R} \phi_0^{G} (1-\phi_1^{G}) + \phi_1^{G} F_{t-1}^S + \frac{1}{R} \frac{1+R}{1+R-\phi_1^{G}} \varepsilon_t^{G}.
\end{align*}
This concludes the derivation.

\subsubsection{Remark \ref{remark:transformation}}

Recall that the causal non-causal convolution autoregressive model is given by
\begin{equation*}
	Y_t = X_t^{F} + X_{t}^{B},
\end{equation*}
where
\begin{equation*}
	X_t^{F} = \mu_F + \rho_F X_{t-1}^{F} + \varepsilon_t^{F}, \quad \varepsilon_t^F \overset{i.i.d.}{\sim} \mathcal{N} (0,\sigma_F^2),
\end{equation*}
with $| \rho_F | < 1$ and 
\begin{equation*}
	X_t^{B} = \rho_B X_{t+1}^{B} + \varepsilon_t^{B}, \quad \varepsilon_t^B \overset{i.i.d.}{\sim} \mathcal{S} (\alpha,0,\sigma_B,\beta),
\end{equation*}
with $| \rho_B | < 1$ where $\varepsilon_s^F$ and $\varepsilon_t^B$ are independent for all $s$ and $t$, which is a causal non-causal state space model where the observation kernel do not have a density as shown in Example \ref{ex:un}. Note however that, by Proposition 4.4.2 in \cite{BrockwellDavis1991}, the causal autoregressive model
\begin{equation*}
	X_t^{F} = \mu_F + \rho_F X_{t-1}^{F} + \varepsilon_t^{F}
\end{equation*}
can be rewritten as a non-causal autoregressive model
\begin{equation*}
	X_t^{F} = \mu_F + \rho_F X_{t+1}^{F} + \varepsilon_t^{F},
\end{equation*}
so
\begin{align*}
	Y_t &= X_t^{F} + X_{t}^{B} \\
	&= \{\mu_F + \rho_F X_{t+1}^{F} + \varepsilon_t^{F}\} + X_{t}^{B} \\
	&= \mu_F + \rho_F \{Y_{t+1} - X_{t+1}^{B}\} + \varepsilon_t^{F} + X_{t}^{B} \\
	&= \mu_F + \rho_F Y_{t+1} + \begin{pmatrix} 1 & -\rho_F \end{pmatrix} \begin{pmatrix} X_{t}^{B} \\ X_{t+1}^{B} \end{pmatrix} + \varepsilon_t^{F}.
\end{align*}
The causal non-causal convolution autoregressive model can thus be written as
\begin{equation*}
	Y_t = \mu_F + \rho_F Y_{t+1} + \begin{pmatrix} 1 & -\rho_F \end{pmatrix} \begin{pmatrix} X_{t}^{B} \\ X_{t+1}^{B} \end{pmatrix} + \varepsilon_t^{F},
\end{equation*}
where
\begin{equation*}
	\begin{pmatrix} X_{t}^{B} \\ X_{t+1}^{B} \end{pmatrix} = \begin{pmatrix} \rho_B & 0 \\ 1 & 0 \end{pmatrix} \begin{pmatrix} X_{t+1}^{B} \\ X_{t+2}^{B} \end{pmatrix} + \begin{pmatrix} \varepsilon_t^B \\ 0 \end{pmatrix},
\end{equation*}
which is a non-causal autoregressive state space model where the observation kernel do have a density.

\section{Proofs} \label{app:proofs}

\begin{proof}[Proof of Lemma \ref{lem:cncssm}]
	Let $f \in \mathcal{M}_{b}((\accentset{\rightarrow}{\textup{X}} \times \accentset{\leftarrow}{\textup{X}} \times \textup{Y})^T)$ be given. Then,
	\begin{align*}
		\mathbb{E} [ f(\accentset{\rightarrow}{X}_{1:T},\accentset{\leftarrow}{X}_{1:T},Y_{1:T}) ]
		&= \mathbb{E} [ \mathbb{E} [ f(\accentset{\rightarrow}{X}_{1:T},\accentset{\leftarrow}{X}_{1:T},Y_{1:T}) \mid \accentset{\rightarrow}{X}_{1:T},\accentset{\leftarrow}{X}_{1:T} ] ] \\
		&= \mathbb{E} \left[ \int_{\textup{Y}^T} f(\accentset{\rightarrow}{X}_{1:T},\accentset{\leftarrow}{X}_{1:T},y_{1:T}) \prod_{t=1}^{T} \Phi_t ((\accentset{\rightarrow}{X}_t,\accentset{\leftarrow}{X}_t),\textup{d}y_t)  \right]
	\end{align*}
	by Condition (iv) in Definition \ref{def:cncssm}. Note that
	\begin{equation*}
		\mathbb{P} ( \accentset{\rightarrow}{X}_{1:T} \in \accentset{\rightarrow}{A}_{1:T},\accentset{\leftarrow}{X}_{1:T} \in \accentset{\leftarrow}{A}_{1:T} ) = \mathbb{P} ( \accentset{\rightarrow}{X}_{1:T} \in \accentset{\rightarrow}{A}_{1:T} ) \mathbb{P} ( \accentset{\leftarrow}{X}_{1:T} \in \accentset{\leftarrow}{A}_{1:T} )
	\end{equation*}
	for all $\accentset{\rightarrow}{A}_{1:T} \in \accentset{\rightarrow}{\mathcal{X}}^T$ and $\accentset{\leftarrow}{A}_{1:T} \in \accentset{\leftarrow}{\mathcal{X}}^T$ by Condition (iii) in Definition \ref{def:cncssm} where
	\begin{align*}
		\mathbb{P} ( \accentset{\rightarrow}{X}_{1:T} \in \accentset{\rightarrow}{A}_{1:T} ) 
		&= \mathbb{E} \left[ 1_{\accentset{\rightarrow}{A}_1} (\accentset{\rightarrow}{X}_1) \cdots 1_{\accentset{\rightarrow}{A}_T} (\accentset{\rightarrow}{X}_T) \right] \\
		&= \mathbb{E} \left[ \mathbb{E} \left[ 1_{\accentset{\rightarrow}{A}_1} (\accentset{\rightarrow}{X}_1) \cdots 1_{\accentset{\rightarrow}{A}_T} (\accentset{\rightarrow}{X}_T) \mid \accentset{\rightarrow}{X}_{1:T-1} \right] \right] \\
		&= \mathbb{E} \left[ \int_{\accentset{\rightarrow}{\textup{X}}} 1_{\accentset{\rightarrow}{A}_1} (\accentset{\rightarrow}{X}_1) \cdots 1_{\accentset{\rightarrow}{A}_{T-1}} (\accentset{\rightarrow}{X}_{T-1}) 1_{\accentset{\rightarrow}{A}_T} (\accentset{\rightarrow}{x}_T) \accentset{\rightarrow}{P}_T (\accentset{\rightarrow}{X}_{T-1},\textup{d}\accentset{\rightarrow}{x}_{T}) \right] \\
		& \ \, \vdots \\
		&= \mathbb{E} \left[ \int_{\accentset{\rightarrow}{\textup{X}}} \cdots \int_{\accentset{\rightarrow}{\textup{X}}} 1_{\accentset{\rightarrow}{A}_1} (\accentset{\rightarrow}{X}_1) 1_{\accentset{\rightarrow}{A}_2} (\accentset{\rightarrow}{x}_2) \cdots 1_{\accentset{\rightarrow}{A}_T} (\accentset{\rightarrow}{x}_T) \accentset{\rightarrow}{P}_T (\accentset{\rightarrow}{x}_{T-1},\textup{d}\accentset{\rightarrow}{x}_{T}) \cdots \accentset{\rightarrow}{P}_2 (\accentset{\rightarrow}{X}_{1},\textup{d}\accentset{\rightarrow}{x}_{2}) \right] \\
		&= \int_{\accentset{\rightarrow}{\textup{X}}} \cdots \int_{\accentset{\rightarrow}{\textup{X}}} 1_{\accentset{\rightarrow}{A}_1} (\accentset{\rightarrow}{x}_1) \cdots 1_{\accentset{\rightarrow}{A}_T} (\accentset{\rightarrow}{x}_T) \accentset{\rightarrow}{P}_T (\accentset{\rightarrow}{x}_{T-1},\textup{d}\accentset{\rightarrow}{x}_{T}) \cdots \accentset{\rightarrow}{P}_2 (\accentset{\rightarrow}{x}_{1},\textup{d}\accentset{\rightarrow}{x}_{2}) \accentset{\rightarrow}{\Pi}_1 (\textup{d}\accentset{\rightarrow}{x}_{1})
	\end{align*}
	and
	\begin{align*}
		\mathbb{P} ( \accentset{\leftarrow}{X}_{1:T} \in \accentset{\leftarrow}{A}_{1:T} ) 
		&= \mathbb{E} \left[ 1_{\accentset{\leftarrow}{A}_1} (\accentset{\leftarrow}{X}_1) \cdots 1_{\accentset{\leftarrow}{A}_T} (\accentset{\leftarrow}{X}_T) \right] \\
		&= \mathbb{E} \left[ \mathbb{E} \left[ 1_{\accentset{\leftarrow}{A}_1} (\accentset{\leftarrow}{X}_1) \cdots 1_{\accentset{\leftarrow}{A}_T} (\accentset{\leftarrow}{X}_T) \mid \accentset{\leftarrow}{X}_{2:T} \right] \right] \\
		&= \mathbb{E} \left[ \int_{\accentset{\leftarrow}{\textup{X}}} 1_{\accentset{\leftarrow}{A}_1} (\accentset{\leftarrow}{x}_1) 1_{\accentset{\leftarrow}{A}_2} (\accentset{\leftarrow}{X}_2) \cdots 1_{\accentset{\leftarrow}{A}_T} (\accentset{\leftarrow}{X}_T) \accentset{\leftarrow}{P}_1 (\accentset{\leftarrow}{X}_2,\textup{d}\accentset{\leftarrow}{x}_{1}) \right] \\
		& \ \, \vdots \\
		&= \mathbb{E} \left[ \int_{\accentset{\leftarrow}{\textup{X}}} \cdots \int_{\accentset{\leftarrow}{\textup{X}}} 1_{\accentset{\leftarrow}{A}_1} (\accentset{\leftarrow}{x}_1) \cdots 1_{\accentset{\leftarrow}{A}_{T-1}} (\accentset{\leftarrow}{x}_{T-1}) 1_{\accentset{\leftarrow}{A}_T} (\accentset{\leftarrow}{X}_T) \accentset{\leftarrow}{P}_1 (\accentset{\leftarrow}{x}_{2},\textup{d}\accentset{\leftarrow}{x}_{1}) \cdots \accentset{\leftarrow}{P}_{T-1} (\accentset{\leftarrow}{X}_{T},\textup{d}\accentset{\leftarrow}{x}_{T-1}) \right] \\
		&= \int_{\accentset{\leftarrow}{\textup{X}}} \cdots \int_{\accentset{\leftarrow}{\textup{X}}} 1_{\accentset{\leftarrow}{A}_1} (\accentset{\leftarrow}{x}_1) \cdots 1_{\accentset{\leftarrow}{A}_T} (\accentset{\leftarrow}{x}_T) \accentset{\leftarrow}{P}_1 (\accentset{\leftarrow}{x}_{2},\textup{d}\accentset{\leftarrow}{x}_{1}) \cdots \accentset{\leftarrow}{P}_{T-1} (\accentset{\leftarrow}{x}_{T},\textup{d}\accentset{\leftarrow}{x}_{T-1}) \accentset{\leftarrow}{\Pi}_T (\textup{d}\accentset{\leftarrow}{x}_{T})
	\end{align*}
	by Conditions (i) and (ii) in Definition \ref{def:cncssm}, respectively. Thus,
	\begin{align*}
		\mathbb{E} [ f(\accentset{\rightarrow}{X}_{1:T},\accentset{\leftarrow}{X}_{1:T},Y_{1:T}) ]
		&= \mathbb{E} \left[ \int_{\textup{Y}^T} f(\accentset{\rightarrow}{X}_{1:T},\accentset{\leftarrow}{X}_{1:T},y_{1:T}) \prod_{t=1}^{T} \Phi_t ((\accentset{\rightarrow}{X}_t,\accentset{\leftarrow}{X}_t),\textup{d}y_t) \right] \\
		&= \int_{\accentset{\rightarrow}{\textup{X}}^T} \int_{\accentset{\leftarrow}{\textup{X}}^T} \int_{\textup{Y}^T} f(\accentset{\rightarrow}{x}_{1:T},\accentset{\leftarrow}{x}_{1:T},y_{1:T}) \prod_{t=1}^{T} \Phi_t ((\accentset{\rightarrow}{x}_t,\accentset{\leftarrow}{x}_t),\textup{d}y_t) \\
		& \quad \cdot \prod_{t=T}^{2} \accentset{\rightarrow}{P}_t (\accentset{\rightarrow}{x}_{t-1},\textup{d}\accentset{\rightarrow}{x}_{t}) \accentset{\rightarrow}{\Pi}_1 (\textup{d}\accentset{\rightarrow}{x}_{1}) \prod_{t=1}^{T-1} \accentset{\leftarrow}{P}_t (\accentset{\leftarrow}{x}_{t+1},\textup{d}\accentset{\leftarrow}{x}_{t}) \accentset{\leftarrow}{\Pi}_T (\textup{d}\accentset{\leftarrow}{x}_{T}).
	\end{align*}
	This concludes the proof.	
\end{proof}

\begin{proof}[Proof of Proposition \ref{prop:cncssm}]
	Let $\accentset{\rightarrow}{A}_{1:T} \in \accentset{\rightarrow}{\mathcal{X}}^T$, $\accentset{\leftarrow}{A}_{1:T} \in \accentset{\leftarrow}{\mathcal{X}}^T$, and $B_{1:T} \in \mathcal{Y}^T$ be given and set
	\begin{equation*}
		f (z_{1:T}) = \int_{\accentset{\rightarrow}{\textup{X}}^T} \int_{\textup{Y}^T} \prod_{t=1}^{T} 1_{\accentset{\rightarrow}{A}_t} (\accentset{\rightarrow}{x}_t) \prod_{t=1}^{T} 1_{B_t} (y_t) \prod_{t=1}^{T} \Phi_t ((\accentset{\rightarrow}{x}_t,z_t),\textup{d}y_t) \prod_{t=T}^{2} \accentset{\rightarrow}{P}_t (\accentset{\rightarrow}{x}_{t-1},\textup{d}\accentset{\rightarrow}{x}_{t}) \accentset{\rightarrow}{\Pi}_1 (\textup{d}\accentset{\rightarrow}{x}_{1}).
	\end{equation*}
	Then,
	\begin{align*}
		&\mathbb{P} (\accentset{\rightarrow}{X}_{1:T} \in \accentset{\rightarrow}{A}_{1:T},\accentset{\leftarrow}{X}_{1:T} \in \accentset{\leftarrow}{A}_{1:T},Y_{1:T} \in B_{1:T}) \\
		&= \int_{\accentset{\leftarrow}{\textup{X}}^T} \prod_{t=1}^{T} 1_{\accentset{\leftarrow}{A}_t} (z_t) f (z_{1:T}) \accentset{\leftarrow}{P}_1 (z_{2},\textup{d}z_{1}) \cdots \accentset{\leftarrow}{P}_{T-1} (z_{T},\textup{d}z_{T-1}) \accentset{\leftarrow}{\Pi}_T (\textup{d}z_{T}) \\
		&= \int_{\accentset{\leftarrow}{\textup{X}}^T} \prod_{t=1}^{T} 1_{\accentset{\leftarrow}{A}_t} (z_t) f (z_{1:T}) \tilde{P}_{T} (z_{T-1},\textup{d}z_{T}) \accentset{\leftarrow}{P}_1 (z_{2},\textup{d}z_{1}) \cdots \accentset{\leftarrow}{P}_{T-2} (z_{T-1},\textup{d}z_{T-2}) \accentset{\leftarrow}{\Pi}_{T-1} (\textup{d}z_{T-1}) \\
		&= \int_{\accentset{\leftarrow}{\textup{X}}^T} \prod_{t=1}^{T} 1_{\accentset{\leftarrow}{A}_t} (z_t) f (z_{1:T}) \tilde{P}_{T} (z_{T-1},\textup{d}z_{T}) \tilde{P}_{T-1} (z_{T-2},\textup{d}z_{T-1}) \accentset{\leftarrow}{P}_1 (z_{2},\textup{d}z_{1}) \cdots \accentset{\leftarrow}{P}_{T-3} (z_{T-2},\textup{d}z_{T-3}) \accentset{\leftarrow}{\Pi}_{T-2} (\textup{d}z_{T-2}) \\
		& \ \, \vdots \\
		&= \int_{\accentset{\leftarrow}{\textup{X}}^T} \prod_{t=1}^{T} 1_{\accentset{\leftarrow}{A}_t} (z_t) f (z_{1:T}) \tilde{P}_{T} (z_{T-1},\textup{d}z_{T}) \cdots \tilde{P}_2 (z_{1},\textup{d}z_{2}) \accentset{\leftarrow}{\Pi}_1 (\textup{d}z_{1}) \\
		&= \mathbb{P} (\bar{X}_{1:T} \in (\accentset{\rightarrow}{A} \times \accentset{\leftarrow}{A})_{1:T},\bar{Y}_{1:T} \in B_{1:T}),
	\end{align*}
	where the first equality follows from Lemma \ref{lem:cncssm}, the intermediate ones follows from Equation \eqref{eq:RK1}, and the last equality follows from Lemma \ref{lem:cncssm} as well.
\end{proof}

\begin{proof}[Proof of Proposition \ref{prop:filtering}]
	First, by Lemma \ref{lem:cncssm} together with Fubini's theorem,
	\begin{align*}
		\mathbb{E} [ f(X_1,Y_1,...,X_t,Y_t) ] &= \int_{\textup{Y}} \cdots \int_{\textup{Y}} \int_{\textup{X}} \cdots \int_{\textup{X}} f(x_1,y_1,...,x_{t},y_{t}) \phi_1(x_1,y_1) \cdots \phi_t(x_t,y_t) \\
		& \quad \cdot P_1(x_2,\textup{d}x_1) \cdots P_{t-1}(x_t,\textup{d}x_{t-1}) \Pi_{t} (\textup{d}x_t) \mu_Y (\textup{d}y_{1}) \cdots \mu_Y (\textup{d}y_{t})
	\end{align*}
	for all $f \in \mathcal{M}_{b}((\textup{X} \times \textup{Y})^t)$. Therefore, by Lemma \ref{lem:app1},
	\begin{align*}
		&\mathbb{P} (X_1 \in A_1,...,X_t \in A_t \mid Y_1,...,Y_t) \\
		&= \frac{\int_{\textup{X}} \cdots \int_{\textup{X}} 1_{A_1} (x_1) \cdots 1_{A_t} (x_t) \phi_1(x_1,Y_1) \cdots \phi_t(x_t,Y_t) P_1(x_2,\textup{d}x_1) \cdots P_{t-1}(x_t,\textup{d}x_{t-1}) \Pi_{t} (\textup{d}x_t)}{\int_{\textup{X}} \cdots \int_{\textup{X}} \phi_1(x_1,Y_1) \cdots \phi_t(x_t,Y_t) P_1(x_2,\textup{d}x_1) \cdots P_{t-1}(x_t,\textup{d}x_{t-1}) \Pi_{t} (\textup{d}x_t)}
	\end{align*}
	for all $A_1,...,A_t \in \mathcal{X}$. The conclusion follows by choosing $A_1 = \cdots = A_{t-1} = \textup{X}$.
\end{proof}

\begin{proof}[Proof of Proposition \ref{prop:smoothing}]
	The proof follows the same lines as the proof of Proposition \ref{prop:filtering}. First, by Lemma \ref{lem:cncssm} together with Fubini's theorem,
	\begin{align*}
		\mathbb{E} [ f(X_1,Y_1,...,X_s,Y_s) ] &= \int_{\textup{Y}} \cdots \int_{\textup{Y}} \int_{\textup{X}} \cdots \int_{\textup{X}} f(x_1,y_1,...,x_{s},y_{s}) \phi_1(x_1,y_1) \cdots \phi_s(x_s,y_s) \\
		& \quad \cdot P_1(x_2,\textup{d}x_1) \cdots P_{s-1}(x_s,\textup{d}x_{s-1}) \Pi_{s} (\textup{d}x_s) \mu_Y (\textup{d}y_{1}) \cdots \mu_Y (\textup{d}y_{s})
	\end{align*}
	for all $f \in \mathcal{M}_{b}((\textup{X} \times \textup{Y})^s)$. Therefore, by Lemma \ref{lem:app1},
	\begin{align*}
		&\mathbb{P} (X_1 \in A_1,...,X_s \in A_s \mid Y_1,...,Y_s) \\
		&= \frac{\int_{\textup{X}} \cdots \int_{\textup{X}} 1_{A_1} (x_1) \cdots 1_{A_s} (x_s) \phi_1(x_1,Y_1) \cdots \phi_s(x_s,Y_s) P_1(x_2,\textup{d}x_1) \cdots P_{s-1}(x_s,\textup{d}x_{s-1}) \Pi_{s} (\textup{d}x_s)}{\int_{\textup{X}} \cdots \int_{\textup{X}} \phi_1(x_1,Y_1) \cdots \phi_s(x_s,Y_s) P_1(x_2,\textup{d}x_1) \cdots P_{s-1}(x_s,\textup{d}x_{s-1}) \Pi_{s} (\textup{d}x_s)}
	\end{align*}
	for all $A_1,...,A_s \in \mathcal{X}$. The conclusion follows by choosing $A_1 = \cdots = A_{t-1} = A_{t+1} = \cdots = A_{s} = \textup{X}$ and using Lemma \ref{lem:app2}.
\end{proof}

\noindent To prove Proposition \ref{prop:Proposition3}, we need the following two results. Recall that the characteristic function of $X \sim \mathcal{S} (\alpha,\mu,\sigma,\beta)$ is given by
\begin{equation*}
	\varphi_X (t) = \exp \left( \textup{i} \mu t - \sigma^{\alpha} \left| t \right|^{\alpha} \left( 1 - \textup{i} \beta \textup{sign} (t) \Phi \right) \right), \quad t \in \mathbb{R},
\end{equation*}
with $\Phi = \tan \left( \frac{\pi \alpha}{2} \right)$ if $\alpha \neq 1$ and $\Phi = - \frac{2}{\pi} \log \left( \left| t \right| \right)$ if $\alpha = 1$. In the following, $\int_{-\infty}^{\infty} f(x) \textup{d}x$ and $\int_{\mathbb{R}} f(x) \lambda(\textup{d}x)$ denote the Riemann and Lebesgue integral of $f$, respectively.

\begin{lemmaappendix} \label{lem:Proposition3A}
	If $X \sim \mathcal{S} (\alpha,\mu,\sigma,\beta)$, then $\int_{\mathbb{R}} |t|^n |\varphi_X (t)| \lambda(\textup{d}t) < \infty$ for all $n \in \mathbb{N}_0$.
\end{lemmaappendix}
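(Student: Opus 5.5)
The plan is to compute the modulus of the characteristic function explicitly and then reduce the claim to the integrability of a polynomial times a stretched exponential. For every $t \in \mathbb{R}$ we have
\begin{equation*}
	|\varphi_X (t)| = \left| \exp \left( \textup{i} \mu t \right) \right| \cdot \left| \exp \left( - \sigma^{\alpha} |t|^{\alpha} \right) \right| \cdot \left| \exp \left( \textup{i} \sigma^{\alpha} |t|^{\alpha} \beta \, \textup{sign}(t) \Phi \right) \right| = \exp \left( - \sigma^{\alpha} |t|^{\alpha} \right).
\end{equation*}
This holds because $\mu$, $\sigma$, $\beta$ and $\Phi$ are real. When $\alpha = 1$, $\Phi = -\frac{2}{\pi}\log|t|$ is real for $t \neq 0$, and $t = 0$ is a Lebesgue null set (there $\varphi_X(0)=1$ anyway). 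The skewness and location terms therefore only contribute unimodular phases, and the bound does not depend on $\mu$ or $\beta$.

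Next, the integrand $|t|^n \exp(-\sigma^{\alpha}|t|^{\alpha})$ is a nonnegative, even, continuous function. Its Lebesgue integral over $\mathbb{R}$ equals twice the improper integral over $(0,\infty)$, by monotone convergence applied to the truncations $1_{[0,N]}$. I would then substitute $s = \sigma^{\alpha} t^{\alpha}$, so that $t = s^{1/\alpha}/\sigma$ and $\textup{d}t = \frac{1}{\alpha\sigma} s^{1/\alpha - 1}\textup{d}s$. This gives
\begin{equation*}
	\int_{\mathbb{R}} |t|^n |\varphi_X(t)| \lambda(\textup{d}t) = \frac{2}{\alpha \sigma^{n+1}} \int_0^{\infty} s^{\frac{n+1}{\alpha}-1} e^{-s} \textup{d}s = \frac{2}{\alpha\sigma^{n+1}} \Gamma\left( \frac{n+1}{\alpha} \right).
\end{equation*}
This is finite since $\frac{n+1}{\alpha} > 0$ for all $n \in \mathbb{N}_0$ and $\alpha \in (0,2]$, $\sigma > 0$.

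There is no genuine obstacle. The only points needing care are the $\alpha = 1$ case, where $\Phi$ depends on $t$ and is undefined at $t=0$, and the justification for passing from the Lebesgue integral to the Gamma integral. The first is handled by observing that $\Phi$ is real off a null set. For the second, if one prefers to avoid the substitution, a direct bound also works: $e^{-\sigma^{\alpha}|t|^{\alpha}} \le C_k |t|^{-k}$ for $|t| \ge 1$ with $k = n+2$, and the integrand is bounded on $[-1,1]$.
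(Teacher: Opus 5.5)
Your proposal is correct and follows essentially the same route as the paper: reduce to $|\varphi_X(t)| = \exp(-\sigma^{\alpha}|t|^{\alpha})$, pass from the Lebesgue integral to twice the improper integral over $(0,\infty)$ via monotone convergence, and substitute $u = \sigma^{\alpha} t^{\alpha}$ to obtain $\frac{2}{\alpha\sigma^{n+1}}\Gamma\left(\frac{n+1}{\alpha}\right) < \infty$. You also justify the modulus computation explicitly (including that $\Phi$ is real and $t$-dependent when $\alpha = 1$), which the paper uses without comment.
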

\begin{proof}[Proof of Lemma \ref{lem:Proposition3A}]
	Let $n \in \mathbb{N}_0$ be given. Then,
	\begin{align*}
		\int_{\mathbb{R}} |t|^n |\varphi_X (t)| \lambda(\textup{d}t) &= \int_{\mathbb{R}} |t|^n \exp \left( - \sigma^{\alpha} \left| t \right|^{\alpha} \right) \lambda(\textup{d}t) \\
		&= \int_{\mathbb{R}} \lim_{N \rightarrow \infty} 1_{[-N,N]} (t) |t|^n \exp \left( - \sigma^{\alpha} \left| t \right|^{\alpha} \right) \lambda(\textup{d}t) \\
		&= \lim_{N \rightarrow \infty} \int_{\mathbb{R}} 1_{[-N,N]} (t) |t|^n \exp \left( - \sigma^{\alpha} \left| t \right|^{\alpha} \right) \lambda(\textup{d}t) \\
		&= \int_{-\infty}^{\infty} |t|^n \exp \left( - \sigma^{\alpha} \left| t \right|^{\alpha} \right) \textup{d}t \\
		&= 2 \int_{0}^{\infty} t^n \exp \left( - \sigma^{\alpha} t^{\alpha} \right) \textup{d}t =: A,
		\intertext{where the third equality follows from the monotone convergence theorem. Let $u = \sigma^{\alpha} t^{\alpha}$. Then, by integration by substitution,}
		A &= 2 \frac{1}{\alpha} \frac{1}{\sigma^{n+1}} \int_{0}^{\infty} u^{\frac{n+1}{\alpha}-1} \exp \left( - u \right) \textup{d}u \\
		&= 2 \frac{1}{\alpha} \frac{1}{\sigma^{n+1}} \Gamma \left( \frac{n+1}{\alpha} \right) < \infty,
	\end{align*}
	where $\Gamma(z) = \int_{0}^{\infty} t^{z-1} \exp(-t) \textup{d}t, z \in \mathbb{C} \ \textup{such that} \ \textup{Re} (z) > 0$ is the gamma function. This concludes the proof.
\end{proof}

\begin{lemmaappendix} \label{lem:Proposition3B}
	If $X \sim \mathcal{S} (\alpha,0,\sigma,\beta)$ where $\alpha \neq 1$ or $\beta = 0$, then $\int_{\mathbb{R}} |\varphi_X^{\prime} (t)| \lambda(\textup{d}t) < \infty$.
\end{lemmaappendix}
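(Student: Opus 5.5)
We need to show that $\varphi_X'$ is integrable when $\mu=0$ and either $\alpha\neq 1$ or $\beta=0$. The plan is to write $\varphi_X(t)=\exp(-\psi(t))$ with $\psi(t)=\sigma^\alpha|t|^\alpha(1-\textup{i}\beta\,\textup{sign}(t)\tan(\pi\alpha/2))$ for $\alpha\neq1$, and $\psi(t)=\sigma|t|$ for $\alpha=1,\beta=0$. Then compute $\varphi_X'$ on $t\neq0$, bound it by an integrable function, and deal with the point $t=0$ separately. Since $\{0\}$ is Lebesgue-null, it does not affect the integral; we only need $\varphi_X'$ to be defined almost everywhere, or more precisely to read the statement as the integral of the a.e. derivative.

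For $t\neq0$, $\psi$ is smooth on each half-line. We have $\psi'(t)=\alpha\sigma^\alpha|t|^{\alpha-1}\textup{sign}(t)(1-\textup{i}\beta\,\textup{sign}(t)\tan(\pi\alpha/2))$, so $\varphi_X'(t)=-\psi'(t)\varphi_X(t)$. Since $|\varphi_X(t)|=\exp(-\sigma^\alpha|t|^\alpha)$, this gives
\begin{equation*}
|\varphi_X'(t)|\le C\,|t|^{\alpha-1}\exp(-\sigma^\alpha|t|^\alpha),\quad C=\alpha\sigma^\alpha\bigl(1+|\beta\tan(\pi\alpha/2)|\bigr),
\end{equation*}
which is finite because $\alpha\neq1$, or because $\beta=0$ when $\alpha=1$. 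The case $\alpha=2$ is included, since $\tan\pi=0$.

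Next we integrate the bound. We split $\mathbb{R}$ into $|t|\le1$ and $|t|>1$. On $|t|>1$ we have $|t|^{\alpha-1}\le |t|^{1}$ because $\alpha\le2$, so Lemma \ref{lem:Proposition3A} with $n=1$ gives finiteness. On $|t|\le1$ the bound is at most $C|t|^{\alpha-1}$, whose integral is $2C/\alpha<\infty$ since $\alpha>0$. Alternatively, the same substitution $u=\sigma^\alpha t^\alpha$ as in the proof of Lemma \ref{lem:Proposition3A} gives the exact value $2C\,\Gamma(1)/(\alpha\sigma^\alpha)=2C/(\alpha\sigma^{\alpha})$, with monotone convergence passing from the Lebesgue integral to the improper Riemann integral. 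I would present this direct computation, since it mirrors the preceding lemma.

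The main obstacle is the singularity at $t=0$ when $\alpha<1$. There $\varphi_X$ is not differentiable at $0$, and $|t|^{\alpha-1}$ blows up, so it is not immediate that the integral is finite. The resolution is that $\alpha-1>-1$ makes the singularity integrable. The non-differentiability at the single point $0$ is irrelevant for the Lebesgue integral, because $\varphi_X$ is absolutely continuous with this a.e. derivative, which is what is used later in the inversion argument.
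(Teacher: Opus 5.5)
Your proposal is correct and follows the paper's proof essentially step for step: differentiate off $t=0$, bound $|\varphi_X'(t)|$ by a constant times $|t|^{\alpha-1}\exp(-\sigma^{\alpha}|t|^{\alpha})$, and integrate using monotone convergence and the substitution $u=\sigma^{\alpha}t^{\alpha}$. Your constant $1+|\beta\tan(\pi\alpha/2)|$ is the correct one, whereas the paper's $1+\beta\tan(\pi\alpha/2)$ is not a valid bound when $\beta\tan(\pi\alpha/2)<0$; for example, with $\alpha=3/2$ and $\beta=1$ it equals $0$, while $|1-\textup{i}\beta\,\textup{sign}(t)\tan(\pi\alpha/2)|=\sqrt{2}$.
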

\begin{proof}[Proof of Lemma \ref{lem:Proposition3B}]
	Note that the derivative of the characteristic function of $X \sim \mathcal{S} (\alpha,0,\sigma,\beta)$ where $\alpha \neq 1$ or $\beta = 0$ is given by
	 \begin{equation*}
	 	\varphi_X^{\prime} (t) = - \alpha \sigma^{\alpha} \left( 1 - \textup{i} \beta \textup{sign} (t) \tan \left( \frac{\pi \alpha}{2} \right) \right) \left| t \right|^{\alpha-1} \textup{sign} (t) \varphi_X (t), \quad t \in \mathbb{R} \backslash \{0\},
	 \end{equation*}
 	such that
 	\begin{equation*}
 		\left| \varphi_X^{\prime} (t) \right| \leq \alpha \sigma^{\alpha} \left( 1 + \beta \tan \left( \frac{\pi \alpha}{2} \right)  \right) \left| t \right|^{\alpha-1} \left| \varphi_X (t) \right|, \quad t \in \mathbb{R} \backslash \{0\}.
 	\end{equation*}
 	Thus,
 	\begin{align*}
 		\int_{\mathbb{R}} |\varphi_X^{\prime} (t)| \lambda(\textup{d}t) &\leq \alpha \sigma^{\alpha} \left( 1 + \beta \tan \left( \frac{\pi \alpha}{2} \right)  \right) \int_{\mathbb{R}} \left| t \right|^{\alpha-1} \exp \left( - \sigma^{\alpha} \left| t \right|^{\alpha} \right) \lambda(\textup{d}t) \\
 		&= \alpha \sigma^{\alpha} \left( 1 + \beta \tan \left( \frac{\pi \alpha}{2} \right)  \right) \int_{\mathbb{R}} \lim_{N \rightarrow \infty} 1_{[-N,N]} (t) \left| t \right|^{\alpha-1} \exp \left( - \sigma^{\alpha} \left| t \right|^{\alpha} \right) \lambda(\textup{d}t) \\
 		&= \alpha \sigma^{\alpha} \left( 1 + \beta \tan \left( \frac{\pi \alpha}{2} \right)  \right) \lim_{N \rightarrow \infty} \int_{\mathbb{R}} 1_{[-N,N]} (t) \left| t \right|^{\alpha-1} \exp \left( - \sigma^{\alpha} \left| t \right|^{\alpha} \right) \lambda(\textup{d}t) \\
 		&= \alpha \sigma^{\alpha} \left( 1 + \beta \tan \left( \frac{\pi \alpha}{2} \right)  \right) \int_{-\infty}^{\infty} \left| t \right|^{\alpha-1} \exp \left( - \sigma^{\alpha} \left| t \right|^{\alpha} \right) \textup{d}t \\
 		&= 2 \alpha \sigma^{\alpha} \left( 1 + \beta \tan \left( \frac{\pi \alpha}{2} \right) \right) \int_{0}^{\infty} t^{\alpha-1} \exp \left( - \sigma^{\alpha} t^{\alpha} \right) \textup{d}t =: A,
 		\intertext{where the third equality follows from the monotone convergence theorem. As in the proof of Lemma \ref{lem:Proposition3A}, let $u = \sigma^{\alpha} t^{\alpha}$. Then, by integration by substitution,}
 		A &= 2 \alpha \sigma^{\alpha} \left( 1 + \beta \tan \left( \frac{\pi \alpha}{2} \right) \right) \frac{1}{\alpha} \frac{1}{\sigma^{\alpha}} \int_{0}^{\infty} \exp \left( - u \right) \textup{d}u \\
 		& = 2 \left( 1 + \beta \tan \left( \frac{\pi \alpha}{2} \right) \right) < \infty.
 	\end{align*}
 	This concludes the proof.
\end{proof}

\begin{proof}[Proof of Proposition \ref{prop:Proposition3}]
	The proof is similar to the proof of Theorem 3 in \cite{CambanisFakhre-Zakeri1995}. We have that
	\begin{equation}
		\begin{aligned}
			\mathbb{E} \left[ X_{t}^{B} \mid X_{t-1}^{B} = c \right] &= \int_{\mathbb{R}} x f_{X_{t}^{B} \mid X_{t-1}^{B}} (x \mid c) \lambda(\textup{d}x) \\
			&= \frac{1}{f_{X^{B}} (c)} \int_{\mathbb{R}} x f_{X_{t}^{B} , X_{t-1}^{B}} (x , c) \lambda(\textup{d}x) \\
			&= \frac{1}{f_{X^{B}} (c)} \int_{\mathbb{R}} x f_{X_{t}^{B} , \varepsilon_{t-1}^{B}} (x , c-\rho_Bx) \lambda(\textup{d}x) \\
			&= \frac{1}{f_{X^{B}} (c)} \int_{\mathbb{R}} x f_{X^{B}} (x) f_{\varepsilon^{B}} (c-\rho_Bx) \lambda(\textup{d}x).
		\end{aligned} \label{eq:P3A}		 
	\end{equation}
	Note that, by Fubini's theorem,
	\begin{align*}
		\int_{\mathbb{R}} \exp(\textup{i}tx) f_{X^{B}}(x) f_{\varepsilon^{B}}(c-\rho_Bx) \lambda(\textup{d}x) &= \int_{\mathbb{R}} \exp(\textup{i}tx) f_{X^{B}}(x) \frac{1}{2\pi} \int_{\mathbb{R}} \exp(-\textup{i}u(c-\rho_{B}x)) \varphi_{\varepsilon^{B}} (u) \lambda(\textup{d}u) \lambda(\textup{d}x) \\
		&= \frac{1}{2\pi} \int_{\mathbb{R}} \exp(-\textup{i}uc) \int_{\mathbb{R}} \exp(\textup{i}(t+u\rho_{B})x) f_{X^{B}}(x) \lambda(\textup{d}x) \varphi_{\varepsilon^{B}} (u) \lambda(\textup{d}u) \\
		&= \frac{1}{2\pi} \int_{\mathbb{R}} \exp(-\textup{i}uc) \varphi_{X^{B}} (t+u\rho_{B}) \varphi_{\varepsilon^{B}} (u) \lambda(\textup{d}u)
	\end{align*}
	for all $t \in \mathbb{R}$ since, by Tonelli's theorem and Lemma \ref{lem:Proposition3A}, $\frac{1}{2\pi} \int_{\mathbb{R}^2} | \exp(\textup{i}tx) f_{X^{B}}(x) \exp(-\textup{i}u(c-\rho_{B}x)) \varphi_{\varepsilon^{B}} (u)| \lambda(\textup{d}x,\textup{d}u) = \frac{1}{2\pi} \int_{\mathbb{R}} |\varphi_{\varepsilon^{B}} (u)| \lambda(\textup{d}u)< \infty$, and thus that
	\begin{equation}
		\lim_{t \rightarrow 0} \int_{\mathbb{R}} \frac{\exp(\textup{i}tx)-1}{t} f_{X^{B}}(x) f_{\varepsilon^{B}}(c-\rho_Bx) \lambda(\textup{d}x) = \frac{1}{2\pi} \lim_{t \rightarrow 0} \int_{\mathbb{R}} \exp(-\textup{i}uc) \frac{\varphi_{X^{B}} (t+u\rho_{B})-\varphi_{X^{B}} (u\rho_{B})}{t} \varphi_{\varepsilon^{B}} (u) \lambda(\textup{d}u). \label{eq:P3B}
	\end{equation}
	The left-hand side (LHS) of Equation \eqref{eq:P3B} is equal to
	\begin{equation*}
		\textup{LHS} = \lim_{n \rightarrow \infty} \int_{\mathbb{R}} \frac{\exp\left(\textup{i}\frac{1}{n}x\right)-1}{\frac{1}{n}} f_{X^{B}}(x) f_{\varepsilon^{B}}(c-\rho_Bx) \lambda(\textup{d}x).
	\end{equation*}
	Note that 
	\begin{equation*}
		\lim_{n \rightarrow \infty} \frac{\exp\left(\textup{i}\frac{1}{n}x\right)-1}{\frac{1}{n}} f_{X^{B}}(x) f_{\varepsilon^{B}}(c-\rho_Bx) = \textup{i} x f_{X^{B}}(x) f_{\varepsilon^{B}}(c-\rho_Bx)
	\end{equation*}
	for all $x \in \mathbb{R}$ since $\lim_{t \rightarrow 0} \frac{\exp(\textup{i}tx)-1}{t} = \lim_{t \rightarrow 0} \frac{\cos(tx)-1}{t} + \textup{i} \lim_{t \rightarrow 0} \frac{\sin(tx)}{t} = - \lim_{t \rightarrow 0} \sin(tx) x + \textup{i} \lim_{t \rightarrow 0} \cos(tx) x = \textup{i} x$ where the second equality follows from L'Hôpital's rule. Moreover, note that there exists a $C \in \mathbb{R}_{+}$ such that 
	\begin{equation*}
		\left| \frac{\exp\left(\textup{i}\frac{1}{n}x\right)-1}{\frac{1}{n}} f_{X^{B}}(x) f_{\varepsilon^{B}}(c-\rho_Bx) \right| \leq (C+|x|) f_{X^{B}}(x) f_{\varepsilon^{B}}(c-\rho_Bx)
	\end{equation*}
	for all $x \in \mathbb{R}$ and $n \in \mathbb{N}$ since $\left| \frac{\exp\left(\textup{i}\frac{1}{n}x\right)-1}{\frac{1}{n}} \right| \leq 2n$ and $\lim_{n \rightarrow \infty} \frac{\exp\left(\textup{i}\frac{1}{n}x\right)-1}{\frac{1}{n}} = \textup{i}x$, and that
	\begin{equation*}
		\int_{\mathbb{R}} (C+|x|) f_{X^{B}}(x) f_{\varepsilon^{B}}(c-\rho_Bx) \lambda(\textup{d}x) = C f_{X^{B}}(c) + f_{X^{B}}(c) \mathbb{E} \left[ \left| X_{t}^{B} \right| \mid X_{t-1}^{B} = c \right] < \infty.
	\end{equation*}
	Thus, by the dominated convergence theorem,
	\begin{equation*}
		\textup{LHS} = \textup{i} \int_{\mathbb{R}} x f_{X^{B}}(x) f_{\varepsilon^{B}}(c-\rho_Bx) \lambda(\textup{d}x).
	\end{equation*}
	Moreover, the right-hand side (RHS) of Equation \eqref{eq:P3B} is equal to
	\begin{equation*}
		\textup{RHS} = \frac{1}{2\pi} \lim_{n \rightarrow \infty} \int_{\mathbb{R}} \exp(-\textup{i}uc) \frac{\varphi_{X^{B}} \left(\frac{1}{n}+u\rho_{B}\right)-\varphi_{X^{B}} (u\rho_{B})}{\frac{1}{n}} \varphi_{\varepsilon^{B}} (u) \lambda(\textup{d}u).
	\end{equation*}
	Note here that
	\begin{equation*}
		\lim_{n \rightarrow \infty} \exp(-\textup{i}uc) \frac{\varphi_{X^{B}} \left(\frac{1}{n}+u\rho_{B}\right)-\varphi_{X^{B}} (u\rho_{B})}{\frac{1}{n}} \varphi_{\varepsilon^{B}} (u) = \exp(-\textup{i}uc) \varphi_{X^{B}}^{\prime} (u\rho_{B}) \varphi_{\varepsilon^{B}} (u)
	\end{equation*}
	for all $u \in \mathbb{R} \backslash \{0\}$. Moreover, note that there exists a $C \in \mathbb{R}_{+}$ such that
	\begin{equation*}
		\left| \exp(-\textup{i}uc) \frac{\varphi_{X^{B}} \left(\frac{1}{n}+u\rho_{B}\right)-\varphi_{X^{B}} (u\rho_{B})}{\frac{1}{n}} \varphi_{\varepsilon^{B}} (u) \right| \leq C \left| \varphi_{\varepsilon^{B}} (u) \right| + \left| \varphi_{X^{B}}^{\prime} (u\rho_{B}) \right|
	\end{equation*}
	for all $u \in \mathbb{R} \backslash \{0\}$ and $n \in \mathbb{N}$ since $\left| \frac{\varphi_{X^{B}} \left(\frac{1}{n}+u\rho_{B}\right)-\varphi_{X^{B}} (u\rho_{B})}{\frac{1}{n}} \right| \leq 2n$ and $\lim_{n \rightarrow \infty} \frac{\varphi_{X^{B}} \left(\frac{1}{n}+u\rho_{B}\right)-\varphi_{X^{B}} (u\rho_{B})}{\frac{1}{n}} = \varphi_{X^{B}}^{\prime} (u\rho_{B})$, and that
	\begin{equation*}
		\int_{\mathbb{R}} \left( C \left| \varphi_{\varepsilon^{B}} (u) \right| + \left| \varphi_{X^{B}}^{\prime} (u\rho_{B}) \right| \right) \lambda(\textup{d}u) = C \int_{\mathbb{R}} \left| \varphi_{\varepsilon^{B}} (u) \right| \lambda (\textup{d}u) + \int_{\mathbb{R}} \left| \varphi_{X^{B}}^{\prime} (u\rho_{B}) \right| \lambda(\textup{d}u) < \infty
	\end{equation*}
	since $\int_{\mathbb{R}} \left| \varphi_{\varepsilon^{B}} (u) \right| \lambda (\textup{d}u) < \infty$ and $\int_{\mathbb{R}} \left| \varphi_{X^{B}}^{\prime} (u\rho_{B}) \right| \lambda(\textup{d}u) < \infty$ by Lemmata \ref{lem:Proposition3A} and \ref{lem:Proposition3B}, respectively. Thus, by the dominated convergence theorem,
	\begin{equation}
		\textup{RHS} = \frac{1}{2\pi} \int_{\mathbb{R}} \exp(-\textup{i}uc) \varphi_{X^{B}}^{\prime} (u\rho_{B}) \varphi_{\varepsilon^{B}} (u) \lambda(\textup{d}u). \label{eq:P3C}
	\end{equation}
	Note that if $\rho_{B} \neq 0$ and $\beta = 0$ or $\rho_B \in (0,1)$ and $\alpha \neq 1$, then
	\begin{equation*}
		\varphi_{X^{B}} (u) = \exp \left( - \sigma_{X^B}^{\alpha} \left| u \right|^{\alpha} \left( 1 - \textup{i} \beta_{X^B} \textup{sign} (u) \tan \left( \frac{\pi \alpha}{2} \right) \right) \right)
	\end{equation*}
	for all $u \in \mathbb{R}$ and
	\begin{equation*}
		\varphi_{X^{B}}^{\prime} (u) = \varphi_{X^{B}} (u) g(u) 
	\end{equation*}
	for all $u \in \mathbb{R} \backslash \{0\}$ where
	\begin{equation*}
		g(u) = - \alpha \sigma_{X^B}^{\alpha} \left| u \right|^{\alpha-1} \textup{sign} (u) + \alpha \sigma_{X^B}^{\alpha} \textup{i} \beta_{X^B} \left| u \right|^{\alpha-1} \tan \left( \frac{\pi \alpha}{2} \right).
	\end{equation*}
	It thus follows that
	\begin{align*}
		\varphi_{X^{B}}^{\prime} (\rho_{B}u) \varphi_{\varepsilon^{B}} (u) &= \varphi_{X^{B}} (\rho_{B}u) g(\rho_{B}u) \varphi_{\varepsilon^{B}} (u) \\
		&= \varphi_{X^{B}} (u) g(\rho_{B}u)
	\end{align*}
	for all $u \in \mathbb{R} \backslash \{0\}$ where
	\begin{align*}
		g(\rho_{B}u) &= - \alpha \sigma_{X^B}^{\alpha} \left| \rho_{B}u \right|^{\alpha-1} \textup{sign} (\rho_{B}u) + \alpha \sigma_{X^B}^{\alpha} \textup{i} \beta_{X^B} \left| \rho_{B}u \right|^{\alpha-1} \tan \left( \frac{\pi \alpha}{2} \right) \\
		&= \textup{sign} (\rho_{B})  \left| \rho_{B} \right|^{\alpha-1} \left( - \alpha \sigma_{X^B}^{\alpha} \left| u \right|^{\alpha-1} \textup{sign} (u) + \alpha \sigma_{X^B}^{\alpha} \textup{i} \beta_{X^B} \left| u \right|^{\alpha-1} \tan \left( \frac{\pi \alpha}{2} \right) \right) \\
		&= \textup{sign} (\rho_{B})  \left| \rho_{B} \right|^{\alpha-1} g(u)
	\end{align*}
	since $\rho_{B} \neq 0$ and $\beta = 0$ or $\rho_B \in (0,1)$ and $\alpha \neq 1$, so
	\begin{align*}
		\varphi_{X^{B}}^{\prime} (\rho_{B}u) \varphi_{\varepsilon^{B}} (u) &= \textup{sign} (\rho_{B}) \left| \rho_{B} \right|^{\alpha-1} \varphi_{X^{B}} (u) g(u) \\
		&= \textup{sign} (\rho_{B}) \left| \rho_{B} \right|^{\alpha-1} \varphi_{X^{B}}^{\prime} (u)
	\end{align*}
	for all $u \in \mathbb{R} \backslash \{0\}$. Thus, continuing from Equation \eqref{eq:P3C}, 
	\begin{equation*}
		\textup{RHS} = \textup{sign} (\rho_{B}) \left| \rho_{B} \right|^{\alpha-1} \frac{1}{2\pi} \int_{\mathbb{R}} \exp(-\textup{i}uc) \varphi_{X^{B}}^{\prime} (u) \lambda(\textup{d}u).
	\end{equation*}
	Finally,
	\begin{align*}
		&\int_{\mathbb{R}} \exp(-\textup{i}uc) \varphi_{X^{B}}^{\prime} (u) \lambda(\textup{d}u) \\
		&= \int_{\mathbb{R}} \lim_{N \rightarrow \infty} 1_{[-N,-1/N]} (u) \exp(-\textup{i}uc) \varphi_{X^{B}}^{\prime} (u) \lambda(\textup{d}u) + \int_{\mathbb{R}} \lim_{N \rightarrow \infty} 1_{[1/N,N]} (u) \exp(-\textup{i}uc) \varphi_{X^{B}}^{\prime} (u) \lambda(\textup{d}u) \\
		&= \lim_{N \rightarrow \infty} \int_{\mathbb{R}} 1_{[-N,-1/N]} (u) \exp(-\textup{i}uc) \varphi_{X^{B}}^{\prime} (u) \lambda(\textup{d}u) + \lim_{N \rightarrow \infty} \int_{\mathbb{R}} 1_{[1/N,N]} (u) \exp(-\textup{i}uc) \varphi_{X^{B}}^{\prime} (u) \lambda(\textup{d}u) \\
		&= \lim_{N \rightarrow \infty} \int_{-N}^{-1/N} \exp(-\textup{i}uc) \varphi_{X^{B}}^{\prime} (u) \textup{d}u + \lim_{N \rightarrow \infty} \int_{1/N}^{N} \exp(-\textup{i}uc) \varphi_{X^{B}}^{\prime} (u) \textup{d}u =: A,
	\end{align*}
	where the second equality follows from the dominated convergence theorem since $\int_{\mathbb{R}} | \varphi_{X^{B}}^{\prime} (u) | \lambda(\textup{d}u) < \infty$ by Lemma \ref{lem:Proposition3B}. Then, by integration by parts,
	\begin{align*}
		A &= \lim_{N \rightarrow \infty} \left[ \exp(-\textup{i}uc) \varphi_{X^{B}} (u) \right]_{-N}^{-1/N} + \textup{i} c \lim_{N \rightarrow \infty} \int_{-N}^{-1/N} \exp(-\textup{i}uc) \varphi_{X^{B}} (u) \textup{d}u \\
		& \quad + \lim_{N \rightarrow \infty} \left[ \exp(-\textup{i}uc) \varphi_{X^{B}} (u) \right]_{1/N}^{N} + \textup{i} c \lim_{N \rightarrow \infty} \int_{1/N}^{N} \exp(-\textup{i}uc) \varphi_{X^{B}} (u) \textup{d}u \\
		&= \lim_{N \rightarrow \infty} \left[ \exp(-\textup{i}uc) \varphi_{X^{B}} (u) \right]_{-N}^{N} + \textup{i} c \lim_{N \rightarrow \infty} \int_{\mathbb{R}} 1_{[-N,-1/N]} (u) \exp(-\textup{i}uc) \varphi_{X^{B}} (u) \lambda(\textup{d}u) \\
		& \quad + \textup{i} c \lim_{N \rightarrow \infty} \int_{\mathbb{R}} 1_{[1/N,N]} (u) \exp(-\textup{i}uc) \varphi_{X^{B}} (u) \lambda(\textup{d}u) \\
		&= \textup{i} c \int_{\mathbb{R}} \lim_{N \rightarrow \infty} 1_{[-N,-1/N]} (u) \exp(-\textup{i}uc) \varphi_{X^{B}} (u) \lambda(\textup{d}u) + \textup{i} c \int_{\mathbb{R}} \lim_{N \rightarrow \infty} 1_{[1/N,N]} (u) \exp(-\textup{i}uc) \varphi_{X^{B}} (u) \lambda(\textup{d}u) \\
		&= \textup{i} c \int_{\mathbb{R}} \exp(-\textup{i}uc) \varphi_{X^{B}} (u) \lambda(\textup{d}u),
	\end{align*}
	where the third equality follows from the dominated convergence theorem since $\int_{\mathbb{R}} | \varphi_{X^{B}} (u) | \lambda(\textup{d}u) < \infty$ by Lemma \ref{lem:Proposition3A} and the fact that $\lim_{N \rightarrow \infty} \varphi_{X^{B}} (N) = 0 = \lim_{N \rightarrow \infty} \varphi_{X^{B}} (-N)$ since $\int_{\mathbb{R}} | \varphi_{X^{B}}^{\prime} (u) | \lambda(\textup{d}u) < \infty$ by Lemma \ref{lem:Proposition3B}. The right-hand side of Equation \eqref{eq:P3B} is thus equal to
	\begin{equation*}
		\textup{RHS} = \textup{sign} (\rho_{B}) \left| \rho_{B} \right|^{\alpha-1} \textup{i} c \frac{1}{2\pi} \int_{\mathbb{R}} \exp(-\textup{i}uc) \varphi_{X^{B}} (u) \lambda(\textup{d}u).
	\end{equation*}
	Equation \eqref{eq:P3B} thus becomes
	\begin{equation}
		\int_{\mathbb{R}} x f_{X^{B}}(x) f_{\varepsilon^{B}}(c-\rho_Bx) \lambda(\textup{d}x) = \textup{sign} (\rho_{B}) \left| \rho_{B} \right|^{\alpha-1} c f_{X^{B}}(c). \label{eq:P3D}
	\end{equation}
	Hence, continuing from Equation \eqref{eq:P3A} and using Equation \eqref{eq:P3D}, we have that
	\begin{equation*}
		\mathbb{E} \left[ X_{t}^{B} \mid X_{t-1}^{B} = c \right] = \textup{sign} (\rho_{B}) \left| \rho_{B} \right|^{\alpha-1} c,
	\end{equation*}
	which concludes the proof.
\end{proof}

\begin{proof}[Proof of Proposition \ref{prop:BLP}]
	First, recall that a linear predictor $L(X_t^B \mid X_{t-1}^B) = a X_{t-1}^B$ is called a best linear predictor if it minimises $\sigma(X_t^B-L(X_t^B \mid X_{t-1}^B))$. Note that if $X \sim \mathcal{S} (\alpha,\mu_X,\sigma(X),\beta_X)$ and $Y \sim \mathcal{S} (\alpha,\mu_Y,\sigma(Y),\beta_Y)$ are independent, then 
	\begin{equation*}
		X + Y \sim \mathcal{S} \left(\alpha,\mu_X+\mu_Y,(\sigma^{\alpha}(X)+\sigma^{\alpha}(Y))^{1/\alpha},\frac{\beta_X \sigma^{\alpha}(X) + \beta_Y \sigma^{\alpha}(Y)}{\sigma^{\alpha}(X)+\sigma^{\alpha}(Y)}\right),
	\end{equation*}
	and that if $X \sim \mathcal{S} (\alpha,\mu_X,\sigma(X),\beta_X)$, then 
	\begin{equation*}
	\begin{aligned}
		cX &\sim \mathcal{S} (\alpha,c \mu_X,|c|\sigma(X),\textup{sign}(c)\beta_X) && \textup{if} \quad \alpha \neq 1 \\
		cX &\sim \mathcal{S} \left(\alpha,c \mu_X - \frac{2}{\pi}c\log(|c|)\sigma(X)\beta_X,|c|\sigma(X),\textup{sign}(c)\beta_X\right) && \textup{if} \quad \alpha = 1
	\end{aligned}
	\end{equation*}
	for all $c \in \mathbb{R} \backslash \{0\}$, see, for instance, \cite{SamorodnitskyTaqqu1994}. Thus,
	\begin{align*}
		\sigma^{\alpha} (X_t^B - L(X_t^B \mid X_{t-1}^B)) 
		&= \sigma^{\alpha} (X_t^B - a X_{t-1}^B) \\
		&= \sigma^{\alpha} (X_t^B - a \rho_B X_{t}^B - a \varepsilon_{t-1}^{B}) \\
		&= \sigma^{\alpha} ((1 - a \rho_B) X_{t}^B) + \sigma^{\alpha}(- a \varepsilon_{t-1}^{B}) \\
		&= |1 - a \rho_B|^{\alpha} \sigma^{\alpha} (X_{t}^B) + |a|^{\alpha} \sigma^{\alpha} (\varepsilon_{t-1}^{B}) \\
		&= \left( \frac{|\rho_B|^{\alpha}}{1-|\rho_B|^{\alpha}} \left|a - \frac{1}{\rho_B}\right|^{\alpha} + |a|^{\alpha} \right) \sigma_B^{\alpha}.
	\end{align*}
	A linear predictor $L(X_t^B \mid X_{t-1}^B) = a X_{t-1}^B$ is thus called a best linear predictor if $a$ minimises
	\begin{equation*}
		f(a) = \frac{|\rho_B|^{\alpha}}{1-|\rho_B|^{\alpha}} \left|a - \frac{1}{\rho_B}\right|^{\alpha} + |a|^{\alpha}.
	\end{equation*}
	Assume that $\rho_B \in (0,1)$. Then, $f(a)$ is strictly decreasing on $(-\infty,0]$ and strictly increasing on $[1/\rho_B,\infty)$, so $f(a)$ is minimised on $[0,1/\rho_B]$. Let thus $f^{+} : [0,1/\rho_B] \rightarrow \mathbb{R}$ be given by
	\begin{equation*}
		f^{+} (a) = \frac{|\rho_B|^{\alpha}}{1-|\rho_B|^{\alpha}} \left(\frac{1}{\rho_B}-a\right)^{\alpha} + a^{\alpha}.
	\end{equation*}
	The first-order derivative of $f^{+}$ is given by
	\begin{equation*}
		\frac{\textup{d}f^{+} (a)}{\textup{d}a} = -\alpha \frac{|\rho_B|^{\alpha}}{1-|\rho_B|^{\alpha}} \left(\frac{1}{\rho_B}-a\right)^{\alpha-1} + \alpha a^{\alpha-1},
	\end{equation*}
	and the second-order derivative of $f^{+}$ is given by
	\begin{equation*}
		\frac{\textup{d}^2f^{+} (a)}{\textup{d}a^2} = \alpha (\alpha-1) \frac{|\rho_B|^{\alpha}}{1-|\rho_B|^{\alpha}} \left(\frac{1}{\rho_B}-a\right)^{\alpha-2} + \alpha (\alpha-1) a^{\alpha-2}.
	\end{equation*}
	Thus, if $1 < \alpha \leq 2$, then $f^{+}(a)$ is strictly convex on $(0,1/\rho_B)$, so $f(a)$ is minimised at the value that solves $\frac{\textup{d}f^{+} (a)}{\textup{d}a} = 0$, which is
	\begin{equation*}
		a^{*} = \frac{\textup{sign}(\rho_B)|\rho_B|^{1/(\alpha-1)}}{|\rho_B|^{\alpha/(\alpha-1)}+(1-|\rho_B|^{\alpha})^{1/(\alpha-1)}}.
	\end{equation*}
	If $\alpha = 1$, then
	\begin{equation*}
		f^{+} (a) = \frac{|\rho_B|}{1-|\rho_B|} \frac{1}{\rho_B} + \left( 1 - \frac{|\rho_B|}{1-|\rho_B|} \right) a,
	\end{equation*}
	so $f(a)$ is minimised at $a^{*} = 0$ if $0 < |\rho_B| < 1/2$, $a^{*} \in [0,2]$ if $\rho_B = 1/2$, and $a^{*} = 1/\rho_B$ if $1/2 < |\rho_B| < 1$. If $0 < \alpha < 1$, then $f^{+}(a)$ is strictly concave on $(0,1/\rho_B)$, so $f(a)$ is minimised at $0$, $1/\rho_B$, or both, which is $a^{*} = 0$ if $0 < |\rho_B|^{\alpha} < 1/2$, $a^{*} \in \{0,1/\rho_B\}$ if $|\rho_B|^{\alpha} = 1/2$, and $a^{*} = 1/\rho_B$ if $1/2 < |\rho_B|^{\alpha} < 1$. Assume now that $\rho_B \in (-1,0)$. Then, $f(a)$ is strictly decreasing on $(-\infty,1/\rho_B]$ and strictly increasing on $[0,\infty)$, so $f(a)$ is minimised on $[1/\rho_B,0]$. Let thus $f^{-} : [1/\rho_B,0] \rightarrow \mathbb{R}$ be given by
	\begin{equation*}
		f^{-} (a) = \frac{|\rho_B|^{\alpha}}{1-|\rho_B|^{\alpha}} \left(a-\frac{1}{\rho_B}\right)^{\alpha} + (-a)^{\alpha}.
	\end{equation*}
	The first-order derivative of $f^{-}$ is given by
	\begin{equation*}
		\frac{\textup{d}f^{-} (a)}{\textup{d}a} = \alpha \frac{|\rho_B|^{\alpha}}{1-|\rho_B|^{\alpha}} \left(a-\frac{1}{\rho_B}\right)^{\alpha-1} - \alpha (-a)^{\alpha-1},
	\end{equation*}
	and the second-order derivative of $f^{-}$ is given by
	\begin{equation*}
		\frac{\textup{d}^2f^{-} (a)}{\textup{d}a^2} = \alpha (\alpha-1) \frac{|\rho_B|^{\alpha}}{1-|\rho_B|^{\alpha}} \left(a-\frac{1}{\rho_B}\right)^{\alpha-2} + \alpha (\alpha-1) (-a)^{\alpha-2}.
	\end{equation*}
	Thus, if $1 < \alpha \leq 2$, then $f^{-}(a)$ is strictly convex on $(1/\rho_B,0)$, so $f(a)$ is minimised at the value that solves $\frac{\textup{d}f^{-} (a)}{\textup{d}a} = 0$, which is
	\begin{equation*}
		a^{*} = \frac{\textup{sign}(\rho_B)|\rho_B|^{1/(\alpha-1)}}{|\rho_B|^{\alpha/(\alpha-1)}+(1-|\rho_B|^{\alpha})^{1/(\alpha-1)}}.
	\end{equation*}
	If $\alpha = 1$, then
	\begin{equation*}
		f^{-} (a) = - \frac{|\rho_B|}{1-|\rho_B|} \frac{1}{\rho_B} + \left( \frac{|\rho_B|}{1-|\rho_B|} - 1\right) a,
	\end{equation*}
	so $f(a)$ is minimised at $a^{*} = 1/\rho_B$ if $1/2 < |\rho_B| < 1$, $a^{*} \in [-2,0]$ if $\rho_B = -1/2$, and $a^{*} = 0$ if $0 < |\rho_B| < 1/2$. If $0 < \alpha < 1$, then $f^{-}(a)$ is strictly concave on $(1/\rho_B,0)$, so $f(a)$ is minimised at $1/\rho_B$, $0$, or both, which is $a^{*} = 1/\rho_B$ if $1/2 < |\rho_B|^{\alpha} < 1$, $a^{*} \in \{1/\rho_B,0\}$ if $|\rho_B|^{\alpha} = 1/2$, and $a^{*} = 0$ if $0 < |\rho_B|^{\alpha} < 1/2$.	
\end{proof}

\begin{proof}[Proof of Proposition \ref{prop:cf}]
	The characteristic function of $Y_{t-p:t}$ is given by 
	\begin{align*}
		\varphi_Y^p (u) &= \mathbb{E} \left[ \exp \left( \textup{i} \sum_{i=0}^{p} u_i Y_{t-i} \right) \right] \\
		&= \mathbb{E} \left[ \exp \left( \textup{i} \sum_{i=0}^{p} u_i X_{t-i}^F \right) \exp \left( \textup{i} \sum_{i=0}^{p} u_i X_{t-i}^B \right) \right].
	\end{align*}
	Note that
	\begin{equation*}
		X_{t-i}^{F} = \frac{\mu_F}{1-\rho_F} + \sum_{j=0}^{\infty} \rho_F^j \varepsilon_{t-i-j}^F
	\end{equation*}
	 and
	\begin{equation*}
		X_{t-i}^{B} = \sum_{j=0}^{\infty} \rho_B^j \varepsilon_{t-i+j}^B.
	\end{equation*}
	Thus, since $\varepsilon_s^F$ and $\varepsilon_t^B$ are independent for all $s \in \mathbb{Z}$ and $t \in \mathbb{Z}$,
	\begin{align*}
		\varphi_Y^p (u) &= \mathbb{E} \left[ \exp \left( \textup{i} \sum_{i=0}^{p} u_i X_{t-i}^F \right) \right] \mathbb{E} \left[ \exp \left( \textup{i} \sum_{i=0}^{p} u_i X_{t-i}^B \right) \right] \\
		&= \varphi_{X^F}^p (u) \varphi_{X^B}^p (u),
	\end{align*}
	where
	\begin{align*}
		\varphi_{X^F}^p (u) &= \mathbb{E} \left[ \exp \left( \textup{i} \sum_{i=0}^{p} u_i X_{t-i}^F \right) \right] \\
		&= \mathbb{E} \left[ \exp \left( \textup{i} \sum_{i=0}^{p} u_i \left( \frac{\mu_F}{1-\rho_F} + \sum_{j=0}^{\infty} \rho_F^j \varepsilon_{t-i-j}^F \right) \right) \right] \\
		&= \exp \left( \textup{i} \frac{\mu_F}{1-\rho_F} \sum_{i=0}^{p} u_i \right) \mathbb{E} \left[ \exp \left( \textup{i} \sum_{i=0}^{p} u_i \sum_{j=0}^{\infty} \rho_F^j \varepsilon_{t-i-j}^F \right) \right] \\
		&= \exp \left( \textup{i} \frac{\mu_F}{1-\rho_F} \sum_{i=0}^{p} u_i \right) \mathbb{E} \left[ \exp \left( \textup{i} \sum_{j=0}^{p-1} \sum_{i=0}^{j} u_i \rho_F^{j-i} \varepsilon_{t-j}^F + \textup{i} \sum_{j=0}^{\infty} \sum_{i=0}^{p} u_i \rho_F^{p-i} \rho_F^j \varepsilon_{t-p-j}^F \right) \right] \\
		&= \exp \left( \textup{i} \frac{\mu_F}{1-\rho_F} \sum_{i=0}^{p} u_i \right) \prod_{j=0}^{p-1} \mathbb{E} \left[ \exp \left( \textup{i} \sum_{i=0}^{j} u_i \rho_F^{j-i} \varepsilon_{t-j}^F \right) \right] \prod_{j=0}^{\infty} \mathbb{E} \left[ \exp \left( \textup{i} \sum_{i=0}^{p} u_i \rho_F^{p-i} \rho_F^j \varepsilon_{t-p-j}^F \right) \right] \\
		&= \exp \left( \textup{i} \frac{\mu_F}{1-\rho_F} \sum_{i=0}^{p} u_i \right) \prod_{j=0}^{p-1} \exp \left( - \frac{\sigma_F^2 \left( \sum_{i=0}^{j} u_i \rho_F^{j-i} \right)^2}{2} \right) \prod_{j=0}^{\infty} \exp \left( - \frac{\sigma_F^2 \left( \sum_{i=0}^{p} u_i \rho_F^{p-i} \rho_F^j \right)^2}{2} \right) \\
		&= \exp \left( \textup{i} \frac{\mu_F}{1-\rho_F} \sum_{i=0}^{p} u_i - \frac{\sigma_F^2 \sum_{j=0}^{p-1} \left( \sum_{i=0}^{j} u_i \rho_F^{j-i} \right)^2}{2} - \frac{\frac{\sigma_F^2}{1-\rho_F^2} \left( \sum_{i=0}^{p} u_i \rho_F^{p-i} \right)^2}{2} \right)
	\end{align*}
	and, if $\alpha \neq 1$,
	\begin{align*}
		\varphi_{X^B}^p (u) &= \mathbb{E} \left[ \exp \left( \textup{i} \sum_{i=0}^{p} u_i X_{t-i}^B \right) \right] \\
		&= \mathbb{E} \left[ \exp \left( \textup{i} \sum_{i=0}^{p} u_i \sum_{j=0}^{\infty} \rho_B^j \varepsilon_{t-i+j}^B \right) \right] \\
		&= \mathbb{E} \left[ \exp \left( \textup{i} \sum_{j=0}^{p-1} \sum_{i=0}^{j} u_{p-i} \rho_B^{j-i} \varepsilon_{t-p+j}^B + \textup{i} \sum_{j=0}^{\infty} \sum_{i=0}^{p} u_{p-i} \rho_B^{p-i} \rho_B^j \varepsilon_{t+j}^B \right) \right] \\
		&= \prod_{j=0}^{p-1} \mathbb{E} \left[ \exp \left( \textup{i} \sum_{i=0}^{j} u_{p-i} \rho_B^{j-i} \varepsilon_{t-p+j}^B \right) \right] \prod_{j=0}^{\infty} \mathbb{E} \left[ \exp \left( \textup{i} \sum_{i=0}^{p} u_{p-i} \rho_B^{p-i} \rho_B^j \varepsilon_{t+j}^B \right) \right] \\
		&= \prod_{j=0}^{p-1} \exp \left( - \sigma_B^{\alpha} \left| \sum_{i=0}^{j} u_{p-i} \rho_B^{j-i} \right|^{\alpha} \left( 1 - \textup{i} \beta \textup{sign} \left( \sum_{i=0}^{j} u_{p-i} \rho_B^{j-i} \right) \tan \left( \frac{\pi \alpha}{2} \right) \right) \right) \\
		& \quad \cdot \prod_{j=0}^{\infty} \exp \left( - \sigma_B^{\alpha} \left| \sum_{i=0}^{p} u_{p-i} \rho_B^{p-i} \rho_B^j \right|^{\alpha} \left( 1 - \textup{i} \beta \textup{sign} \left( \sum_{i=0}^{p} u_{p-i} \rho_B^{p-i} \rho_B^j \right) \tan \left( \frac{\pi \alpha}{2} \right) \right) \right) \\
		&= \exp \left\{ - \sigma_B^{\alpha} \sum_{j=0}^{p-1} \left| \sum_{i=0}^{j} u_{p-i} \rho_B^{j-i} \right|^{\alpha} \left( 1 - \textup{i} \beta \textup{sign} \left( \sum_{i=0}^{j} u_{p-i} \rho_B^{j-i} \right) \tan \left( \frac{\pi \alpha}{2} \right) \right) \right. \\
		& \quad \left. - \left( \frac{\sigma_B}{(1-|\rho_{B}|^{\alpha})^{1/\alpha}} \right)^{\alpha} \left| \sum_{i=0}^{p} u_{p-i} \rho_B^{p-i} \right|^{\alpha} \left( 1 - \textup{i} \frac{1-|\rho_{B}|^{\alpha}}{1-\textup{sign}(\rho_{B}) |\rho_{B}|^{\alpha}} \beta \textup{sign} \left( \sum_{i=0}^{p} u_{p-i} \rho_B^{p-i} \right) \tan \left( \frac{\pi \alpha}{2} \right) \right) \right\}
	\end{align*}
	and, if $\alpha = 1$,
	\begin{align*}
		\varphi_{X^B}^p (u) &= \mathbb{E} \left[ \exp \left( \textup{i} \sum_{i=0}^{p} u_i X_{t-i}^B \right) \right] \\
		&= \mathbb{E} \left[ \exp \left( \textup{i} \sum_{i=0}^{p} u_i \sum_{j=0}^{\infty} \rho_B^j \varepsilon_{t-i+j}^B \right) \right] \\
		&= \mathbb{E} \left[ \exp \left( \textup{i} \sum_{j=0}^{p-1} \sum_{i=0}^{j} u_{p-i} \rho_B^{j-i} \varepsilon_{t-p+j}^B + \textup{i} \sum_{j=0}^{\infty} \sum_{i=0}^{p} u_{p-i} \rho_B^{p-i} \rho_B^j \varepsilon_{t+j}^B \right) \right] \\
		&= \prod_{j=0}^{p-1} \mathbb{E} \left[ \exp \left( \textup{i} \sum_{i=0}^{j} u_{p-i} \rho_B^{j-i} \varepsilon_{t-p+j}^B \right) \right] \prod_{j=0}^{\infty} \mathbb{E} \left[ \exp \left( \textup{i} \sum_{i=0}^{p} u_{p-i} \rho_B^{p-i} \rho_B^j \varepsilon_{t+j}^B \right) \right] \\
		&= \prod_{j=0}^{p-1} \exp \left( - \sigma_B \left| \sum_{i=0}^{j} u_{p-i} \rho_B^{j-i} \right| \left( 1 + \textup{i} \beta \textup{sign} \left( \sum_{i=0}^{j} u_{p-i} \rho_B^{j-i} \right) \frac{2}{\pi} \log \left( \left| \sum_{i=0}^{j} u_{p-i} \rho_B^{j-i} \right| \right) \right) \right) \\
		& \quad \cdot \prod_{j=0}^{\infty} \exp \left( - \sigma_B \left| \sum_{i=0}^{p} u_{p-i} \rho_B^{p-i} \rho_B^j \right| \left( 1 + \textup{i} \beta \textup{sign} \left( \sum_{i=0}^{p} u_{p-i} \rho_B^{p-i} \rho_B^j \right) \frac{2}{\pi} \log \left( \left|  \sum_{i=0}^{p} u_{p-i} \rho_B^{p-i} \rho_B^j \right| \right) \right) \right) \\
		&= \exp \left\{ - \textup{i} \beta \sigma_B \frac{2}{\pi} \frac{\rho_B \log |\rho_B|}{(1-\rho_{B})^2} \sum_{i=0}^{p} u_{p-i} \rho_B^{p-i} \right. \\
		& \quad \left. - \sigma_B \sum_{j=0}^{p-1} \left| \sum_{i=0}^{j} u_{p-i} \rho_B^{j-i} \right| \left( 1 + \textup{i} \beta \textup{sign} \left( \sum_{i=0}^{j} u_{p-i} \rho_B^{j-i} \right) \frac{2}{\pi} \log \left( \left| \sum_{i=0}^{j} u_{p-i} \rho_B^{j-i} \right| \right) \right) \right. \\
		& \quad \left. - \frac{\sigma_B}{1-|\rho_{B}|} \left| \sum_{i=0}^{p} u_{p-i} \rho_B^{p-i} \right| \left( 1 + \textup{i} \frac{1-|\rho_{B}|}{1-\textup{sign}(\rho_{B}) |\rho_{B}|} \beta \textup{sign} \left( \sum_{i=0}^{p} u_{p-i} \rho_B^{p-i} \right) \frac{2}{\pi} \log \left( \left| \sum_{i=0}^{p} u_{p-i} \rho_B^{p-i} \right| \right) \right) \right\}.
	\end{align*} 
\end{proof}

\begin{proof}[Proof of Proposition \ref{prop:Voigt}]
	The characteristic function follows from Corollary \ref{coro:cf}. We now prove that the probability density function of $Z = X + Y$ where $X \sim \mathcal{N} (\mu_X,\sigma_X^2)$ and $Y \sim \mathcal{C} (0,\sigma_Y)$ are independent is given by
	\begin{equation*}
		f_Z (z) = \frac{1}{\sqrt{2\pi} \sigma_X} \textup{Re} \left( w \left( \frac{z-\mu_X+\textup{i}\sigma_Y}{\sqrt{2}\sigma_X} \right) \right), \quad z \in \mathbb{R}.
	\end{equation*}
	The probability density function of $Z$ is given by
	\begin{align*}
		f_Z (z) &= \int_{-\infty}^{\infty} f_X (t) f_Y (z-t) \textup{d}t \\
		&= \frac{1}{\sqrt{2\pi}\sigma_X} \frac{1}{\pi\sigma_Y} \int_{-\infty}^{\infty} \exp \left( - \frac{1}{2} \left( \frac{t-\mu_X}{\sigma_X} \right)^2 \right) \frac{1}{1+\left(\frac{z-t}{\sigma_Y} \right)^2 } \textup{d}t \\
		&= \frac{1}{\sqrt{2\pi}\sigma_X} \frac{\frac{\sigma_Y}{\sqrt{2}\sigma_X}}{\pi} \frac{1}{\sqrt{2} \sigma_X} \int_{-\infty}^{\infty} \exp \left( - \left( \frac{t-\mu_X}{\sqrt{2} \sigma_X} \right)^2 \right) \frac{1}{\left( \frac{\sigma_Y}{\sqrt{2} \sigma_X} \right)^2 +\left(\frac{z-\mu_X}{\sqrt{2} \sigma_X} - \frac{t-\mu_X}{\sqrt{2} \sigma_X} \right)^2 } \textup{d}t \\
		&= \frac{1}{\sqrt{2\pi} \sigma_X} \frac{a}{\pi} \frac{1}{\sqrt{2}\sigma_X} \int_{-\infty}^{\infty} \exp \left( - \left(\frac{t-\mu_X}{\sqrt{2}\sigma_X}\right)^2 \right) \frac{1}{a^2 + \left(b-\frac{t-\mu_X}{\sqrt{2}\sigma_X}\right)^2} \textup{d}t,
	\end{align*}
	where $a = \frac{\sigma_Y}{\sqrt{2}\sigma_X}$ and $b = \frac{z-\mu_X}{\sqrt{2}\sigma_X}$. Let $u = \frac{t-\mu_X}{\sqrt{2}\sigma_X}$. Then, by integration by substitution, 
	\begin{equation*}
		f_Z (z) = \frac{1}{\sqrt{2\pi} \sigma_X} \frac{a}{\pi} \int_{-\infty}^{\infty} \exp \left( - u^2 \right) \frac{1}{a^2 + \left(b-u\right)^2} \textup{d}u,
	\end{equation*}
	where
	\begin{equation*}
		\frac{a}{\pi} \int_{-\infty}^{\infty} \exp \left( - u^2 \right) \frac{1}{a^2 + \left(b-u\right)^2} \textup{d}u = \textup{Re} \left( w \left( b+\textup{i}a \right) \right),
	\end{equation*}
	see Section 7.19 in \cite{OlverLozierBoisvertClark2010}. Thus,
	\begin{equation*}
		f_Z (z) = \frac{1}{\sqrt{2\pi} \sigma_X} \textup{Re} \left( w \left( \frac{z-\mu_X+\textup{i}\sigma_Y}{\sqrt{2}\sigma_X} \right) \right).
	\end{equation*}
\end{proof}

\section{Technical Lemmata} \label{app:lemmata}

\begin{lemmaappendix} \label{lem:app1}
	Let $(\Omega,\mathcal{F},\mathbb{P})$ be a probability space, let $(\textup{X},\mathcal{X})$ and $(\textup{Y},\mathcal{Y})$ be measurable spaces, and let $X$ and $Y$ be $\textup{X}$- and $\textup{Y}$-valued random variables on $(\Omega,\mathcal{F},\mathbb{P})$, respectively. Assume that there exist a measurable function $\gamma : \textup{X} \times \textup{Y} \rightarrow (0,\infty)$, a probability measure $\mu_X : \mathcal{X} \rightarrow [0,1]$, and a $\sigma$-finite measure $\mu_Y : \mathcal{Y} \rightarrow [0,\infty]$ such that
	\begin{equation*}
		\mathbb{E} [ f(X,Y) ] = \int_{\textup{Y}} \int_{\textup{X}} f(x,y) \gamma(x,y) \mu_X (\textup{d}x) \mu_Y (\textup{d}y)
	\end{equation*}
	for all bounded, measurable functions $f : \textup{X} \times \textup{Y} \rightarrow \mathbb{R}$. Then, the conditional distribution of $X$ given $Y$ is given by
	\begin{equation*}
		\mathbb{P} (X \in A \mid Y = y) = \frac{\int_{\textup{X}} 1_{A} (x) \gamma(x,y) \mu_X (\textup{d}x)}{\int_{\textup{X}} \gamma(x,y) \mu_X (\textup{d}x)}, \quad A \in \mathcal{X}, y \in \textup{Y}.
	\end{equation*}
\end{lemmaappendix}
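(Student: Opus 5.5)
We check the defining property of conditional probability directly: the proposed kernel is measurable in $y$, and it reproduces the joint law of $(X,Y)$ when integrated against the law of $Y$. The ratio is well defined because $\gamma>0$ and $\mu_X$ is a probability measure, so the denominator $\bar\gamma(y):=\int_{\textup{X}}\gamma(x,y)\,\mu_X(\textup{d}x)$ lies in $(0,\infty]$. Finiteness is not automatic. However, taking $f\equiv 1$ in the hypothesis and applying Tonelli's theorem gives $\int_{\textup{Y}}\bar\gamma(y)\,\mu_Y(\textup{d}y)=1$. Hence $\bar\gamma<\infty$ for $\mu_Y$-almost every $y$, and therefore $\mathbb{P}$-almost surely in $Y$, since the law of $Y$ has density $\bar\gamma$ with respect to $\mu_Y$. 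On the exceptional null set we define the kernel arbitrarily, for instance as $\mu_X$. Measurability of $y\mapsto\int 1_A(x)\gamma(x,y)\,\mu_X(\textup{d}x)$ follows from Tonelli's theorem, which applies because $\gamma$ is nonnegative and jointly measurable and both measures are $\sigma$-finite. For fixed $y$, the map $A\mapsto Q(y,A)$, where $Q$ denotes the proposed ratio, is clearly a probability measure, so $Q$ is a transition kernel.

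It remains to verify, for every $A\in\mathcal{X}$ and $B\in\mathcal{Y}$, that
\begin{equation*}
\mathbb{P}(X\in A,Y\in B)=\mathbb{E}\left[1_B(Y)\,Q(Y,A)\right].
\end{equation*}
For the left-hand side, apply the hypothesis with $f(x,y)=1_A(x)1_B(y)$, giving $\int_B\int_{\textup{X}} 1_A(x)\gamma(x,y)\,\mu_X(\textup{d}x)\,\mu_Y(\textup{d}y)$. For the right-hand side, apply the hypothesis with $f(x,y)=1_B(y)Q(y,A)$, which is bounded and measurable, and then integrate out $x$ by Tonelli's theorem:
\begin{equation*}
\int_B Q(y,A)\,\bar\gamma(y)\,\mu_Y(\textup{d}y)=\int_B\int_{\textup{X}} 1_A(x)\gamma(x,y)\,\mu_X(\textup{d}x)\,\mu_Y(\textup{d}y).
\end{equation*}
The factor $\bar\gamma(y)$ cancels the denominator of $Q$ except on the $\mu_Y$-null set where $\bar\gamma=\infty$, which does not affect the integral. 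The two sides therefore agree, and $Q(Y,A)$ is a version of $\mathbb{P}(X\in A\mid Y)$, which is the claim.

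The only delicate point is the possible infinitude of the denominator, together with the resulting almost-sure qualification in ``$Y=y$''. As shown above, this is settled by the $f\equiv1$ normalisation. Everything else is routine use of Tonelli's theorem.
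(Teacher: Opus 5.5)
Your proof is correct and follows the same route as the paper. Both check that the ratio is a probability measure in $A$ and is measurable in $y$ by Tonelli. Both then verify $\mathbb{E}[Q(Y,A)1_B(Y)]=\mathbb{P}(X\in A,\,Y\in B)$ by applying the hypothesis to $f(x,y)=1_B(y)Q(y,A)$. You also add a step the paper lacks: you show $\int_{\textup{X}}\gamma(x,y)\,\mu_X(\textup{d}x)<\infty$ for $\mu_Y$-a.e.\ $y$ (via $f\equiv 1$) and redefine the kernel on the exceptional null set, whereas the paper tacitly cancels this denominator and asserts normalisation for every $y$.
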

\begin{proof}
	The conclusion follows if (i) for all $y \in \textup{Y}$, $A \mapsto \mathbb{P} (X \in A \mid Y = y)$ is a probability measure, (ii) for all $A \in \mathcal{X}$, $y \mapsto \mathbb{P} (X \in A \mid Y = y)$ is measurable, and (iii) for all $A \in \mathcal{X}$ and $B \in \mathcal{Y}$, $\mathbb{E} [\mathbb{P} (X \in A \mid Y) 1_{B} (Y)] = \mathbb{E} [1_{A} (X) 1_{B} (Y)]$.
	
	Condition (i): Let $y \in \textup{Y}$ be given. Then, $A \mapsto \mathbb{P} (X \in A \mid Y = y)$ is a probability measure since $\mathbb{P} (X \in \emptyset \mid Y = y) = 0$, if $(A_n)_{n \in \mathbb{N}}$ is a sequence of disjoint sets from $\mathcal{X}$, then $\mathbb{P} ( X \in \cup_{n \in \mathbb{N}} A_n \mid Y = y ) = \sum_{n=1}^{\infty} \mathbb{P} \left( X \in A_n \mid Y = y \right)$ since $1_{\cup_{n \in \mathbb{N}} A_n} (x) = \sum_{n=1}^{\infty} 1_{A_n} (x)$, and $\mathbb{P} (X \in \textup{X} \mid Y = y) = 1$. Condition (ii): Let now $A \in \mathcal{X}$ be given. Then, $y \mapsto \mathbb{P} (X \in A \mid Y = y)$ is measurable since $(x,y) \mapsto 1_{A \times \textup{Y}} (x,y)$ and $(x,y) \mapsto \gamma(x,y)$ are measurable so, by Tonelli's theorem, $y \mapsto \int_{\textup{X}} 1_{A \times \textup{Y}} (x,y) \gamma(x,y) \mu_X (\textup{d}x)$ and $y \mapsto \int_{\textup{X}} \gamma(x,y) \mu_X (\textup{d}x)$ are measurable. Condition (iii): Let $A \in \mathcal{X}$ and $B \in \mathcal{Y}$ be given. Then,
	\begin{align*}
		\mathbb{E} [\mathbb{P} (X \in A \mid Y) 1_{B} (Y)] &= \mathbb{E} \left[\frac{\int_{\textup{X}} 1_{A} (x) \gamma(x,Y) \mu_X (\textup{d}x)}{\int_{\textup{X}} \gamma(x,Y) \mu_X (\textup{d}x)} 1_{B} (Y)\right] \\
		&= \int_{\textup{Y}} \int_{\textup{X}} \frac{\int_{\textup{X}} 1_{A} (x) \gamma(x,y) \mu_X (\textup{d}x)}{\int_{\textup{X}} \gamma(x,y) \mu_X (\textup{d}x)} 1_{B} (y) \gamma(z,y) \mu_X (\textup{d}z) \mu_Y (\textup{d}y) \\
		&= \int_{\textup{Y}} \int_{\textup{X}} 1_{A} (x) 1_{B} (y) \gamma(x,y) \mu_X (\textup{d}x) \mu_Y (\textup{d}y) \\
		&= \mathbb{E} [1_{A} (X) 1_{B} (Y)].
	\end{align*}
	
	This concludes the proof.
\end{proof}

Before we state the next lemma, we introduce some standard notation, see, for instance, Chapter 2 in \cite{CappeMoulinesRyden2005}. Let $(\textup{X},\mathcal{X})$ be a measurable space, $P$ a kernel from $(\textup{X},\mathcal{X})$ to $(\textup{X},\mathcal{X})$, $\Pi$ a measure on $(\textup{X},\mathcal{X})$, and $f$ a bounded, measurable function from $\textup{X}$ to $\mathbb{R}$. Then, we define $P f : \textup{X} \rightarrow \mathbb{R}$ by
\begin{equation*}
	P f (x) := \int_{\textup{X}} f(z) P(x,\textup{d}z), \quad x \in \textup{X},
\end{equation*}
and $\Pi P : \mathcal{X} \rightarrow [0,\infty]$ by
\begin{equation*}
	\Pi P (A) := \int_{\textup{X}} P(z,A) \Pi(\textup{d}z), \quad A \in \mathcal{X}.
\end{equation*}
Note that
\begin{equation}
	\int_{\textup{X}} P f (z) \Pi (\textup{d}z) = \int_{\textup{X}} f (z) \Pi P (\textup{d}z), \label{eq:ididnotknowwhattocallthisone}
\end{equation}
see Chapter 2 in \cite{CappeMoulinesRyden2005} once again.
\begin{lemmaappendix} \label{lem:app2}
	Let $(\textup{X},\mathcal{X})$ be a measurable space, let $(P_t)_{t=1}^{T-1}$ be a sequence of kernels from $(\textup{X},\mathcal{X})$ to $(\textup{X},\mathcal{X})$, let $\Pi_T$ be a measure on $(\textup{X},\mathcal{X})$, and let $f$ be a bounded, measurable function from $\textup{X}$ to $\mathbb{R}$. Then,
	\begin{equation*}
		\int_{\textup{X}} \cdots \int_{\textup{X}} f(x_t) P_t(x_{t+1},\textup{d}x_t) \cdots P_{T-1}(x_{T},\textup{d}x_{T-1}) \Pi_T(\textup{d}x_T) = \int_{\textup{X}} f(x_t) \Pi_T P_{T-1} \cdots P_t (\textup{d}x_t),
	\end{equation*}
	where
	\begin{equation*}
		\Pi_T P_{T-1} \cdots P_t (A) = \int_{\textup{X}} \cdots \int_{\textup{X}} 1_A(x_t) P_t(x_{t+1},\textup{d}x_t) \cdots P_{T-1}(x_{T},\textup{d}x_{T-1}) \Pi_T(\textup{d}x_T), \quad A \in \mathcal{X},
	\end{equation*}
	which can be computed recursively as
	\begin{equation*}
		\Pi_T P_{T-1} \cdots P_t (A) = \int_{\textup{X}} \int_{\textup{X}} 1_A (x_t) P_t(x_{t+1},\textup{d}x_t) \Pi_T P_{T-1} \cdots P_{t+1} (\textup{d}x_{t+1}).
	\end{equation*}
\end{lemmaappendix}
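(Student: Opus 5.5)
The plan is induction on $t$, running downward from $t = T$, with the recursive formula falling out of the same step. The statement contains three claims: the integral identity, the product-form expression for $\Pi_T P_{T-1}\cdots P_t(A)$, and the recursion. I would prove the recursion first and then use it to drive the induction for the integral identity.

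\textbf{Base case $t=T$.} Here the product of kernels is empty. The left-hand side is $\int f(x_T)\,\Pi_T(\mathrm{d}x_T)$, which equals the right-hand side by definition.

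\textbf{Recursion.} Define $\Pi_T P_{T-1}\cdots P_t$ iteratively by the operation $\Pi P$ already introduced, so that $\Pi_T P_{T-1}\cdots P_t = (\Pi_T P_{T-1}\cdots P_{t+1})P_t$. Unfolding the definition of $\Pi P$ gives
\begin{equation*}
\Pi_T P_{T-1}\cdots P_t(A) = \int_{\textup{X}} P_t(x_{t+1},A)\,\Pi_T P_{T-1}\cdots P_{t+1}(\mathrm{d}x_{t+1}).
\end{equation*}
Writing $P_t(x_{t+1},A)=\int 1_A(x_t)\,P_t(x_{t+1},\mathrm{d}x_t)$ turns this into exactly the stated recursion.

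\textbf{Inductive step.} Assume the integral identity holds at $t+1$ for every bounded measurable function. Set $g := P_t f$, which is bounded and measurable because $P_t$ is a kernel and $f$ is bounded. The innermost integral on the left-hand side at level $t$ is $\int f(x_t)\,P_t(x_{t+1},\mathrm{d}x_t) = g(x_{t+1})$. The left-hand side at level $t$ therefore equals the left-hand side at level $t+1$ with $f$ replaced by $g$. By the inductive hypothesis this is $\int g\,\mathrm{d}(\Pi_T P_{T-1}\cdots P_{t+1})$. Equation \eqref{eq:ididnotknowwhattocallthisone}, applied with $\Pi=\Pi_T P_{T-1}\cdots P_{t+1}$ and $P=P_t$, then gives $\int f\,\mathrm{d}(\Pi_T P_{T-1}\cdots P_t)$, which closes the induction.

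\textbf{Product form.} Taking $f=1_A$ in the integral identity yields the stated product-form expression for $\Pi_T P_{T-1}\cdots P_t(A)$.

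\textbf{Main obstacle.} The only delicate point is finiteness and integrability, since the $P_t$ are general kernels that may take the value $\infty$ and $\Pi_T$ is only a measure. I would first prove everything for nonnegative $f$, where all integrals are well defined in $[0,\infty]$ and Equation \eqref{eq:ididnotknowwhattocallthisone} holds by the standard simple-function and monotone convergence argument. I would then pass to bounded $f$ by splitting $f=f^+-f^-$. This split requires the relevant measures to be finite, which holds when the $P_t$ are transition kernels and $\Pi_T$ is a probability measure, as in all applications in the paper. If they are not finite, the identity should be read for nonnegative $f$ only.
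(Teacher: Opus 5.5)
Your proof is correct and is essentially the paper's argument. Both rest on repeatedly applying the duality $\int Pf\,\mathrm{d}\Pi=\int f\,\mathrm{d}(\Pi P)$ of Equation \eqref{eq:ididnotknowwhattocallthisone} and then setting $f=1_A$; the paper writes it as one telescoping chain (first collapsing the inner integrals into a single function of $x_T$, then moving the kernels onto $\Pi_T$ one at a time) rather than as a downward induction.

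Your integrability caveat is a refinement the paper skips. For general kernels, $P_t f$ need not be bounded, contrary to the reason you give in the inductive step. Running the argument for nonnegative $f$ first is exactly the version the paper needs, since it applies the lemma to unnormalised kernels such as $\phi_k(x_k,y_k)P_k(x_{k+1},\mathrm{d}x_k)$ with nonnegative integrands.
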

\begin{proof}
	We have that
	\begin{align*}
		& \int_{\textup{X}} \cdots \int_{\textup{X}} f(x_t) P_t(x_{t+1},\textup{d}x_t) \cdots P_{T-1}(x_{T},\textup{d}x_{T-1}) \Pi_T(\textup{d}x_T) \\
		&= \int_{\textup{X}} \cdots \int_{\textup{X}} P_t f(x_{t+1}) P_{t+1}(x_{t+2},\textup{d}x_{t+1}) \cdots P_{T-1}(x_{T},\textup{d}x_{T-1}) \Pi_T(\textup{d}x_T) \\
		& \ \, \vdots \\
		&= \int_{\textup{X}} P_t \cdots P_{T-1} f(x_T) \Pi_T(\textup{d}x_T) \\
		&= \int_{\textup{X}} P_t \cdots P_{T-2} f(x_{T-1}) \Pi_T P_{T-1} (\textup{d}x_{T-1}) \\
		& \ \, \vdots \\
		&= \int_{\textup{X}} f(x_t) \Pi_T P_{T-1} \cdots P_t (\textup{d}x_t),
	\end{align*}
	where we have used Equation \eqref{eq:ididnotknowwhattocallthisone} repeatedly, and thus that
	\begin{equation*}
		\Pi_T P_{T-1} \cdots P_t (A) = \int_{\textup{X}} \cdots \int_{\textup{X}} 1_A(x_t) P_t(x_{t+1},\textup{d}x_t) \cdots P_{T-1}(x_{T},\textup{d}x_{T-1}) \Pi_T(\textup{d}x_T).
	\end{equation*}
\end{proof}

\section{More Details on the Other Examples} \label{appendix:otherexamples}

\subsection{Example \ref{ex:un}}

We show that $((X_t^F,X_t^B,Y_t))_{t=1}^{T}$ satisfies the conditions in Definition \ref{def:cncssm}. Condition (i): We have that
\begin{align*}
	\mathbb{P} (X_{t}^F \in A^F \mid X_{t-1}^F,...,X_{1}^F) &= \mathbb{E} [1_{A^F} (X_{t}^F) \mid X_{t-1}^F,...,X_{1}^F] \\
	&= \mathbb{E} [1_{A^F} (\mu_F+\rho_F X_{t-1}^{F}+\varepsilon_{t}^F) \mid X_{t-1}^F,...,X_{1}^F] \\
	&= \int_{\mathbb{R}} 1_{A^F} (\mu_F+\rho_F X_{t-1}^{F}+z) \frac{1}{\sigma_F} f_{\mathcal{N}} \left( \frac{z}{\sigma_F} \right) \lambda (\textup{d}z) \\
	&= \int_{\mathbb{R}} 1_{A^F} (z) \frac{1}{\sigma_F} f_{\mathcal{N}} \left( \frac{z-\mu_F-\rho_F X_{t-1}^{F}}{\sigma_F} \right) \lambda (\textup{d}z) \\
	&= P^F(X_{t-1}^{F},A^F)
\end{align*}
and that
\begin{equation*}
	\mathbb{P} (X_{1}^F \in A^F) = \int_{\mathbb{R}} 1_{A^F} (z) \frac{1}{\frac{\sigma_F}{\sqrt{1-\rho_F^2}}} f_{\mathcal{N}} \left( \frac{z-\frac{\mu_F}{1-\rho_F}}{\frac{\sigma_F}{\sqrt{1-\rho_F^2}}} \right) \lambda(\textup{d}z) = \Pi^F (A^F).
\end{equation*}
Condition (ii): We have that
\begin{align*}
	\mathbb{P} (X_{t}^B \in A^B \mid X_{t+1}^B,...,X_{T}^B) &= \mathbb{E} [1_{A^B} (X_{t}^B) \mid X_{t+1}^B,...,X_{T}^B] \\
	&= \mathbb{E} [1_{A^B} (\rho_B X_{t+1}^{B}+\varepsilon_{t}^B) \mid X_{t+1}^B,...,X_{T}^B] \\
	&= \int_{\mathbb{R}} 1_{A^B} (\rho_B X_{t+1}^{B}+z) \frac{1}{\sigma_B} f_{\mathcal{C}} \left( \frac{z}{\sigma_B} \right) \lambda (\textup{d}z) \\
	&= \int_{\mathbb{R}} 1_{A^B} (z) \frac{1}{\sigma_B} f_{\mathcal{C}} \left( \frac{z-\rho_B X_{t+1}^{B}}{\sigma_B} \right) \lambda (\textup{d}z) \\
	&= P^B(X_{t+1}^{B},A^B)
\end{align*}
and that
\begin{equation*}
	\mathbb{P} (X_{T}^B \in A^B) = \int_{\mathbb{R}} 1_{A^B} (z) \frac{1}{\frac{\sigma_B}{1-|\rho_B|}} f_{\mathcal{C}} \left( \frac{z}{\frac{\sigma_B}{1-|\rho_B|}} \right) \lambda(\textup{d}z) = \Pi^B (A^B).
\end{equation*}
Condition (iii): Note that
\begin{align*}
	& \mathbb{P} (X_{1}^F \in A_1^F,...,X_{T}^F \in A_T^F,X_{1}^B \in A_1^B,...,X_{T}^B \in A_T^B) \\
	&= \mathbb{E} \left[ \prod_{t=1}^{T} 1_{A_t^F} (X_{t}^F) \prod_{t=1}^{T} 1_{A_t^B} (X_{t}^B) \right] \\
	&= \mathbb{E} \left[ \prod_{t=1}^{T} 1_{A_t^F} \left( \frac{1-\rho_F^{t-1}}{1-\rho_F} \mu_F + \rho_F^{t-1} X_{1}^F + \sum_{i=0}^{t-2} \rho_F^{i} \varepsilon_{t-i}^F \right) \prod_{t=1}^{T} 1_{A_t^B} \left( \rho_B^{T-t} X_{T}^B + \sum_{i=0}^{T-t-1} \rho_B^{i} \varepsilon_{t+i}^{B} \right) \right] \\
	&= \mathbb{E} \left[ \prod_{t=1}^{T} 1_{A_t^F} \left( \frac{1-\rho_F^{t-1}}{1-\rho_F} \mu_F + \rho_F^{t-1} X_{1}^F + \sum_{i=0}^{t-2} \rho_F^{i} \varepsilon_{t-i}^F \right) \right] \mathbb{E} \left[ \prod_{t=1}^{T} 1_{A_t^B} \left( \rho_B^{T-t} X_{T}^B + \sum_{i=0}^{T-t-1} \rho_B^{i} \varepsilon_{t+i}^{B} \right) \right] \\
	&= \mathbb{E} \left[ \prod_{t=1}^{T} 1_{A_t^F} (X_{t}^F) \right] \mathbb{E} \left[ \prod_{t=1}^{T} 1_{A_t^B} (X_{t}^B) \right] \\
	&= \mathbb{P} (X_{1}^F \in A_1^F,...,X_{T}^F \in A_T^F) \mathbb{P} (X_{1}^B \in A_1^B,...,X_{T}^B \in A_T^B).
\end{align*}
Condition (iv): Finally, we have that
\begin{align*}
	\mathbb{P} (Y_1 \in B_1,...,Y_T \in B_T \mid X_{1}^F,X_{1}^B,...,X_{T}^F,X_{T}^B) &= \mathbb{E} \left[ \prod_{t=1}^{T} 1_{B_t} (Y_t) \mid X_{1}^F,X_{1}^B,...,X_{T}^F,X_{T}^B \right] \\
	&= \mathbb{E} \left[ \prod_{t=1}^{T} 1_{B_t} (X_{t}^F+X_{t}^B) \mid X_{1}^F,X_{1}^B,...,X_{T}^F,X_{T}^B \right] \\
	&= \prod_{t=1}^{T} 1_{B_t} (X_{t}^F+X_{t}^B) \\
	&= \prod_{t=1}^{T} \Phi ((X_{t}^F,X_{t}^B),B_t).
\end{align*}

\subsection{Example \ref{ex:deux}}

We show that $((\bar{X}_t^F,\bar{X}_t^B,Y_t))_{t=1}^{T}$ satisfies the conditions in Definition \ref{def:cncssm} as above. Condition (i): We have that
\begin{align*}
	\mathbb{P} (\bar{X}_{t}^F \in A^F \mid \bar{X}_{t-1}^F,...,\bar{X}_{1}^F) &= \mathbb{E} [1_{A_1^F} (\bar{X}_{1,t}^F) 1_{A_2^F} (\bar{X}_{2,t}^F) \mid \bar{X}_{t-1}^F,...,\bar{X}_{1}^F] \\
	&= \mathbb{E} [1_{A_1^F} (\mu_F+\rho_{1F} \bar{X}_{1,t-1}^{F}+\rho_{2F} \bar{X}_{2,t-1}^{F}+\varepsilon_{t}^F) 1_{A_2^F} (\bar{X}_{1,t-1}^{F}) \mid \bar{X}_{t-1}^F,...,\bar{X}_{1}^F] \\
	&= \int_{\mathbb{R}} 1_{A_1^F} (\mu_F+\rho_{1F} \bar{X}_{1,t-1}^{F}+\rho_{2F} \bar{X}_{2,t-1}^{F}+z) \frac{1}{\sigma_F} f_{\mathcal{N}} \left( \frac{z}{\sigma_F} \right) \lambda (\textup{d}z) 1_{A_2^F} (\bar{X}_{1,t-1}^{F}) \\
	&= \int_{\mathbb{R}} 1_{A_1^F} (z) \frac{1}{\sigma_F} f_{\mathcal{N}} \left( \frac{z-\mu_F-\rho_{1F} \bar{X}_{1,t-1}^{F}-\rho_{2F} \bar{X}_{2,t-1}^{F}}{\sigma_F} \right) \lambda (\textup{d}z) 1_{A_2^F} (\bar{X}_{1,t-1}^{F}) \\
	&= P^F(\bar{X}_{t-1}^{F},A^F)
\end{align*}
and that
\begin{equation*}
	\mathbb{P} (\bar{X}_{1}^F \in A^F) = \int_{A^F} \det(V)^{-1/2} f_{\mathcal{N}_2} \left( V^{-1/2} (z-M) \right) \lambda_2(\textup{d}z) = \Pi^F (A^F).
\end{equation*}
Condition (ii): We have that
\begin{align*}
	\mathbb{P} (\bar{X}_{t}^B \in A^B \mid \bar{X}_{t+1}^B,...,\bar{X}_{T}^B) &= \mathbb{E} [1_{A_1^B} (\bar{X}_{1,t}^B) 1_{A_2^B} (\bar{X}_{2,t}^B) \mid \bar{X}_{t+1}^B,...,\bar{X}_{T}^B] \\
	&= \mathbb{E} [1_{A_1^B} (\rho_{1B} \bar{X}_{1,t+1}^{B}+\rho_{2B} \bar{X}_{2,t+1}^{B}+\varepsilon_{t}^B) 1_{A_2^B} (\bar{X}_{1,t+1}^B) \mid \bar{X}_{t+1}^B,...,\bar{X}_{T}^B] \\
	&= \int_{\mathbb{R}} 1_{A_1^B} (\rho_{1B} \bar{X}_{1,t+1}^{B}+\rho_{2B} \bar{X}_{2,t+1}^{B}+z) \frac{1}{\sigma_B} f_{\mathcal{C}} \left( \frac{z}{\sigma_B} \right) \lambda (\textup{d}z) 1_{A_2^B} (\bar{X}_{1,t+1}^B) \\
	&= \int_{\mathbb{R}} 1_{A_1^B} (z) \frac{1}{\sigma_B} f_{\mathcal{C}} \left( \frac{z-\rho_{1B} \bar{X}_{1,t+1}^{B}-\rho_{2B} \bar{X}_{2,t+1}^{B}}{\sigma_B} \right) \lambda (\textup{d}z) 1_{A_2^B} (\bar{X}_{1,t+1}^B) \\
	&= P^B(\bar{X}_{t+1}^{B},A^B)
\end{align*}
and that
\begin{equation*}
	\mathbb{P} (\bar{X}_{T}^B \in A^B) = \mathbb{E} [1_{A_1^B} (\bar{X}_{1,T}^B) 1_{A_2^B} (\bar{X}_{2,T}^B)] = \mathbb{E} [1_{A_1^B} (\bar{x}_{1,T}^B) 1_{A_2^B} (\bar{x}_{2,T}^B)] = 1_{A_1^B} (\bar{x}_{1,T}^B) 1_{A_2^B} (\bar{x}_{2,T}^B) = \Pi^B (A^B).
\end{equation*}
Condition (iii): This can be shown following the same lines as above. Condition (iv): Finally, we have that
\begin{align*}
	\mathbb{P} (Y_1 \in B_1,...,Y_T \in B_T \mid \bar{X}_{1}^F,\bar{X}_{1}^B,...,\bar{X}_{T}^F,\bar{X}_{T}^B) &= \mathbb{E} \left[ \prod_{t=1}^{T} 1_{B_t} (Y_t) \mid \bar{X}_{1}^F,\bar{X}_{1}^B,...,\bar{X}_{T}^F,\bar{X}_{T}^B \right] \\
	&= \mathbb{E} \left[ \prod_{t=1}^{T} 1_{B_t} (\bar{X}_{1,t}^F+\bar{X}_{1,t}^B+\varepsilon_t^Y) \mid \bar{X}_{1}^F,\bar{X}_{1}^B,...,\bar{X}_{T}^F,\bar{X}_{T}^B \right] \\
	&= \prod_{t=1}^{T} \int_{\mathbb{R}} 1_{B_t} (\bar{X}_{1,t}^F+\bar{X}_{1,t}^B+z) \frac{1}{\sigma_Y} f_{\mathcal{N}} \left( \frac{z}{\sigma_Y} \right) \lambda (\textup{d}z) \\
	&= \prod_{t=1}^{T} \int_{\mathbb{R}} 1_{B_t} (z) \frac{1}{\sigma_Y} f_{\mathcal{N}} \left( \frac{z-\bar{X}_{1,t}^F-\bar{X}_{1,t}^B}{\sigma_Y} \right) \lambda (\textup{d}z) \\
	&= \prod_{t=1}^{T} \Phi ((\bar{X}_{t}^F,\bar{X}_{t}^B),B_t).
\end{align*}

\end{document}